\newtheorem{theorem}{Theorem}
\newtheorem{definition}{Definition}
\newtheorem{proposition}{Proposition}
\newtheorem{remark}{Remark}
\newtheorem{claim}{Claim}
\newtheorem{corollary}{Corollary}
\newtheorem{example}{Example}
\DeclareMathOperator{\Log}{Log}
\begin{document}
\title{Codeword Stabilized Codes \\ from $m$-Uniform Graph States}
\author{Sowrabh Sudevan, Sourin Das, Thamadathil Aswanth, Nupur Patanker, Navin Kashyap, \IEEEmembership{Senior Member, IEEE}
\thanks{This paper was presented in part as a poster at the Quantum Information Knowledge (QuIK) workshop, held as a part of the 2024 IEEE International Symposium on Information Theory (ISIT 2024), Athens, Greece, July 7, 2024.}
\thanks{Sowrabh Sudevan and Sourin Das are with the Department of Physical Sciences, Indian Institute of Science Education and Research, Kolkata, 741246, India (email: \{ss18ip003,sourin\}@iiserkol.ac.in). Thamadathil Aswanth, Nupur Patanker and Navin Kashyap are with the Department of Electrical Communication Engineering, Indian Institute of Science, Bengaluru, 560012, India (email: \{aswantht,nupurp,nkashyap\}@iisc.ac.in).}}
\maketitle

\begin{abstract} 
 An $m$-uniform quantum state on $n$ qubits is an entangled state in which every $m$-qubit subsystem is maximally mixed. 
 Starting with an $m$-uniform state realized as the graph state associated with an $m$-regular graph, and a classical $[n,k,d \ge m+1]$ binary linear code with certain additional properties, we show that pure $[[n,k,m+1]]_2$ quantum error-correcting codes (QECCs) can be constructed within the codeword stabilized (CWS) code framework. As illustrations, we construct pure $[[2^{2r}-1,2^{2r}-2r-3,3]]_2$ and $[[(2^{4r}-1)^2, (2^{4r}-1)^2 - 32r-7, 5]]_2$ QECCs. We also give measurement-based protocols for encoding into code states and for recovery of logical qubits from code states. 
\end{abstract}

\begin{IEEEkeywords}
quantum error-correcting codes, $m$-uniformity, graph states, codeword stabilized codes, measurement-based protocols
\end{IEEEkeywords}
\section{Introduction}

A quantum computer exploits the properties of quantum mechanics to do computations and outperforms classical computers in solving certain problems \cite{Shor_1997,10.1145/237814.237866}. Any realistic physical realization of a quantum computer must account for noise \cite{doi:10.1126/science.270.5242.1633,Plenio_1997,1996,PhysRevA.51.992}. In quantum computing, errors or noise can be modeled as some extra, unknown unitaries stochastically introduced into the intended quantum circuit \cite{knill1999theory}. Much like for classical information, the effects of noise on quantum memories can be dealt with through error correction procedures, the first of which was designed by Shor \cite{PhysRevA.52.R2493}. The theory of stabilizer codes, developed by Gottesman \cite{gottesman1997stabilizer}, provided a general algebraic framework for constructing quantum error-correcting codes (QECCs). Codes constructed via the stabilizer formalism are generally referred to as ``additive'' codes. Following the development of stabilizer codes, much progress has been made in quantum error correction, and finding new QECCs has been an active area of research \cite{kitaev1997fault,Dennis_2002,Fowler_2012,Breuckmann_2021,Lai_2012,Grassl_1997,Grassl1999QuantumBC,barkeshli2023highergroup,Brun2006CorrectingQE,Wang2022,Gottesman_2001,PRXQuantum.2.020101,Zhang_2021}. Furthermore, different methods to construct non-additive quantum codes have been proposed (see \cite{GR-chapter-in-LB}); examples of such codes include union stabilizer codes \cite{Grassl1,grassl2} and codeword stabilized (CWS) codes \cite{CWSCode, CWSCode2}. 

The CWS code construction provides a versatile framework within which good quantum codes --- additive as well as non-additive --- can be obtained \cite{CWSCode},\cite{GR-chapter-in-LB}. The building blocks of the construction\footnote{In this paper, we restrict our attention to quantum codes on qubits, but the CWS code construction has also been extended to larger qudit alphabets \cite{CWSCode2}.} are a graph, which determines a graph state, and a classical binary code, which may be non-linear. The choices made for the graph and the classical code determine the properties, such as the minimum distance, of the resulting CWS code. However, to the best of our knowledge, the prior literature does not provide a principled approach for choosing the graph and classical code in such a way as to yield a quantum code with a guaranteed minimum distance. Indeed, the original work of Cross et al.\ \cite{CWSCode} leaves this as an open problem, while subsequent works have considered exhaustive search strategies \cite{GR-chapter-in-LB}.

In this paper, we give a prescription for choosing a graph and a classical binary \emph{linear} code to obtain an additive CWS code with a guaranteed minimum distance. All codes we obtain using our prescription are \emph{pure}\footnote{As defined in \cite{Calderbank1998}, a pure QECC is one in which distinct elements of the set of correctable errors produce orthogonal results.} QECCs.
The initial quantum resource for our construction is an $m$-uniform quantum state, which is a multipartite entangled state within which any $m$-qubit subsystem is maximally mixed \cite{PhysRevA.87.012319}. Such a state spans a pure QECC of dimension one and minimum distance $m+1$ \cite{PhysRevA.69.052330,Huber_2020}. Since a $1$-dimensional QECCs cannot by itself store any logical information, some mechanism is needed to enlarge the code space so that it can hold a positive number of logical qubits. The framework of CWS codes provides one such mechanism, although as we shall see in Section~\ref{sec:related_work} below, other mechanisms have also been proposed in the literature. 

In the CWS code construction, we start with a graph $G$, which yields a graph state $\ket{\mathrm{G}}$, and form a quantum code, called a CWS code, whose basis is given by the collection of states resulting from the action of certain strings of Pauli-$Z$ operators on $\ket{\mathrm{G}}$. The Pauli-$Z$ strings in the construction are determined by the codewords of a classical binary code. The CWS code thus obtained is additive if and only if the classical binary code is linear. The number of logical qubits encoded by the CWS code is equal to the dimension of the binary linear code. 

\subsection{Our Contributions} \label{sec:contrib}
In this paper, given an $m$-uniform graph state $\ket{\mathrm{G}}$, we give a necessary and sufficient condition (Theorem~\ref{theorem_1}) on a binary linear code $\mathcal{C}$ for the CWS code constructed from $\ket{\mathrm{G}}$ and $\mathcal{C}$ to be a pure QECC with minimum distance $m+1$. As an application, we show that if the $m$-uniform graph state $\ket{\mathrm{G}}$ comes from an $m$-regular graph, then any binary linear code $\mathcal{C}$ of dimension $k$ and minimum distance $d > m(m+1)$ will suffice to ensure that the resulting CWS code is a pure QECC with minimum distance $m+1$ that can encode $k$ logical qubits. We apply this result to the graph state associated with a specific $2D$-regular graph, namely, the $D$-dimensional regular lattice with periodic boundary conditions; the associated graph state is called a $D$-dimensional cluster state. Sudevan et al.\ \cite{sowrabh} have shown that a $D$-dimensional cluster state is $2D$-uniform. Thus, from $D$-dimensional cluster states, and binary linear codes of dimension $k$ and minimum distance strictly greater than $2D(2D+1)$, we obtain, via the CWS code framework, explicit constructions of pure QECCs with $k$ logical qubits and minimum distance $2D+1$. For $1$- and $2$-dimensional cluster states, we give explicit, customized constructions of binary linear codes that yield QECCs that can hold a larger number of logical qubits than those obtained from binary linear codes with minimum distance $d > 2D(2D+1)$.

As a further contribution, we observe that additive CWS codes allow for measurement-based approaches to encoding of logical qubits into the code space and recovery of logical qubits from the encoded state in the code space. The encoding procedure uses some ideas from measurement-based quantum computing (MBQC) \cite{Briegel2009,NIELSEN2006147}. MBQC started with the idea of a one-way quantum computer, proposed by Raussendorf et al.\ \cite{oneWayQC}, who showed that any quantum operation can be realized through single qubit measurements on a cluster state. The key idea we borrow from MBQC is a method called one-bit teleportation \cite{PhysRevA.62.052316}. Our encoding procedure teleports information from input qubits with arbitrary quantum information to the CWS code space using a combination of controlled-$Z$ gates and $X$ measurements. In this procedure, one can encode the logical qubits sequentially: when a subset of the logical qubits is provided, the encoder maps them to a code state in a subcode (subspace) of the full code space. This code state is further transformed into a state of the full code when the remaining logical qubits are encoded. This property requires the logical qubits initially encoded to be unentangled with the remaining qubits. Our protocol for retrieving the logical qubits from an encoded state is also measurement-based. The protocol allows for partial recovery of logical qubits: the initial subset of logical qubits can be recovered from the final encoded state, leaving the remaining logical qubits still encoded. 

The sequential encoding and partial recovery of logical qubits are interesting and potentially useful properties of our measurement-based protocols.
These protocols also give us cause to assign the name ``tent peg code'' to the binary linear code used in the CWS construction. This nomenclature is motivated by the $\text{controlled-}Z$ gates in the encoding circuit that appear like lines of tents (see Figure~\ref{kAncillas}).

\subsection{Prior Work on QECCs from $m$-Uniform States}\label{sec:related_work}
Our work builds upon that of Sudevan et al.\ \cite{sowrabh}, who proposed a method for constructing $2\text{-dimensional}$ QECCs using $2D$-uniform cluster states, that can be encoded in a measurement-based approach. In our paper, we extend this idea to higher-dimensional codes that can encode multiple logical qubits. 

A quantum code construction in the prior literature that is perhaps closest in spirit to ours is that of Kapshikar and Kundu \cite{HardnessOfQuantumCodes}. Their work uses the CWS code framework to show the NP-hardness of the problem of determining the minimum distance of a quantum code. This is done via a reduction from the same problem for a classical binary linear code, which is known to be NP-hard. The graph states used in their construction are obtained from graphs that have no $4$-cycles. They show that if a graph has minimum degree at least $m$ and no $4$-cycles, then the associated graph state is $m$-uniform. Their NP-hardness result is obtained by constructing a $[[\Theta(n^2),k,d]]$ quantum code from a given $[n,k,d]$ classical binary linear code. The construction applies the CWS code framework on graph states obtained from an explicit family of $4$-cycle-free graphs due to Erd\"os, R\'enyi and S\'os \cite{ERS66}. Our work differs fundamentally from \cite{HardnessOfQuantumCodes} in that the graph states used in our construction are cluster states obtained from graphs (lattices) that have an abundance of $4$-cycles.

There are other methods in the literature that use $m$-uniform states (which need not be graph states) for constructing quantum codes that can encode a positive number of logical qudits of local dimension $q \ge 2$. Most of these constructions (e.g., \cite{gottesman1997stabilizer,Raissi2020,Huber_2020,Alsina_2021}) use the idea of shortening or partially tracing out one qudit of a pure QECC \cite{Rains1998}. In particular, Raissi \cite{Raissi2020} showed that starting from an $m$-uniform stabilizer state, i.e., a pure $[[n,0,m+1]]_q$ stabilizer code, repeatedly applying the shortening procedure yields pure $[[n-j,j,m+1-j]]_q$ QECCs, for $j=1,2,\ldots,m-1$. Note that the QECCs thus obtained have smaller blocklength and minimum distance. This is in contrast to our CWS code construction, which retains the same blocklength and minimum distance as the pure QECC corresponding to the graph state that we start with.

Raissi \cite{Raissi2020} also gave a ``modified-shortening'' construction that does not require tracing out any qudits from a parent QECC. This construction starts with an absolutely maximally entangled (AME) state, i.e., an $\lfloor n/2 \rfloor$-uniform state on $n$ qudits of local dimension $q$, or equivalently, an $[[n,0,\lfloor n/2 \rfloor+1]]_q$ QECC, and produces an $[[n,1,\lfloor n/2 \rfloor]]_q$ QECC. The AME state is obtained from a classical $[n,\lfloor n/2 \rfloor,\lceil n/2 \rceil + 1]_q$ maximum distance separable (MDS) code. Raissi's modified-shortening construction is not directly comparable to our CWS code construction, since our construction yields QECCs over qubits ($q=2$), whereas Raissi's construction requires $q$ large enough to ensure the existence of the required $[n,\lfloor n/2 \rfloor,\lceil n/2 \rceil + 1]_q$ MDS code.

\subsection{Organization} The remainder of this paper is organized as follows. Section~\ref{sec:prelims} introduces the basic definitions and notation used in this paper. Section~\ref{sec:main} presents our main results, including explicit constructions of families of quantum codes obtained from $1$- and $2$-dimensional cluster states. The measurement-based protocols for encoding and recovery of logical qubits are discussed in Section~\ref{sec: encoding}. 
Proofs omitted from the main body of the paper are given in the appendices.


\medskip
\section{Preliminaries} \label{sec:prelims}
Throughout this paper, for a positive integer $n$, we set $[n] := \{1,2, \ldots, n\}$, and for integers $m \le n$, $[m:n] := \{m,m+1,\ldots,n\}$. For a binary vector $\mathbf{x} = [x_1,x_2,\ldots,x_n] \in \{0,1\}^n$, define its \emph{support} as $\text{supp}(\mathbf{x}) := \{i: x_i = 1\}$. The \emph{characteristic vector} of a subset $A \subseteq [n]$ is the binary vector $\mathbf{a} \in \{0,1\}^n$ such that $A = \text{supp}(\mathbf{a})$. The binary field formed by $\{0,1\}$ equipped with modulo-$2$ arithmetic is denoted by $\mathbb{F}_2$.

$\mathcal{P}_n$ denotes the Pauli group on $n$ qubits, and $X_i$ and $Z_i$ respectively denote the Pauli-$X$ and Pauli-$Z$ operators acting on $i^{th}$ qubit. A tensor product of Pauli-$Z$ operators is specified via a binary vector: for $\mathbf{x} = [x_1, x_2, \ldots, x_n] \in {\{0,1\}}^n$, we use $Z_\mathbf{x}$ to denote the Pauli string $\prod_{j =1}^{n}{Z_j^{x_j}}$. Alternatively, for a subset of qubits $A \subseteq [n]$, we use $Z_{A}$ to denote $\prod_{j\in A}Z_j$. Note that $Z_A = Z_\mathbf{x}$ if and only if $\mathbf{x} = [x_1, x_2, \ldots, x_n]$ is the characteristic vector of $A$. Analogous notation is used for tensor products of Pauli-X operators as well, whenever needed.

For a Pauli operator $M$ of the form (ignoring the global phase factor) $\sigma^{(1)}\otimes \sigma^{(2)}\otimes \ldots \otimes \sigma^{(n)}$, with $\sigma^{(j)}\in \{X,Y,Z,I\} \hspace{5pt} \forall j\in[n]$, we denote its weight by $\mathrm{wt}(M) = |\{i: \sigma^{(i)} \neq I\}|$. 
It is often convenient to use the symplectic notation for Pauli operators, where a Pauli operator $M \in \mathcal{P}_n$ is uniquely identified, up to a global phase factor, with a binary vector of the form $\mathbf{v} = [\mathbf{v}^{(1)}|\mathbf{v}^{(2)}]$, with $\mathbf{v}^{(1)},\mathbf{v}^{(2)} \in \{0,1\}^n$: 
\begin{equation*}
    M = (\text{global phase factor}) \cdot X_{\mathbf{v}^{(1)}} \cdot Z_{\mathbf{v}^{(2)}}.
\end{equation*}
In this representation, the multiplication of operators is equivalent, up to a global phase, to the coordinatewise modulo-$2$ addition of the corresponding binary vectors. When using this representation, the symplectic weight of $\mathbf{v} = [\mathbf{v}^{(1)}|\mathbf{v}^{(2)}]$ is a useful notion; it is defined as
\begin{equation*}
    \mathrm{wt}_\mathrm{s}(\mathbf{v}) = |\{j: (v^{(1)}_j,v^{(2)}_j) \neq (0,0)\}|,
\end{equation*}
where $\mathbf{v}^{(1)}=[v^{(1)}_1, v^{(1)}_2, \ldots, v^{(1)}_n]$ and $\mathbf{v}^{(2)} = [v^{(2)}_1, v^{(2)}_2, \ldots, v^{(2)}_n]$. This is equal to the weight of the Pauli operator corresponding to $\mathbf{v}$. 

An $[n,k]$ binary linear code has length $n$ and dimension $k$, i.e., it bijectively maps $k$ data bits into $n$ coded bits. If, in addition, it has minimum Hamming distance $d$, then it is an $[n,k,d]$ binary linear code. Analogously, an $[[n,k]]$ QECC\footnote{Since all our quantum codes are defined on qubit spaces, we drop the subscript $2$ from the notation $[[n,k]]_2$ and $[[n,k,d]]_2$.} bijectively maps the state of $k$ logical qubits to a subspace of the Hilbert space of $n$ physical qubits. Following \cite{Huber_2020}, we define a \emph{pure} $[[n,k,d]]$ quantum code to be a $2^k$-dimensional subspace, $\mathcal{Q}$, of ${(\mathbb{C}^2)}^{\otimes n}$ such that for all Pauli operators $E \in \mathcal{P}_n$ of weight strictly between $0$ and $d$, we have $\braket{\psi | E | \psi} = 0$ for all $\ket{\psi} \in \mathcal{Q}$. 

\subsection{Graph states and $m$-uniformity}

We start with the standard definition of a graph state.
\begin{definition} 
    Consider a set of $n$ qubits and an undirected, simple graph $G$ on the vertex set $[n]$. The graph state $\ket{\mathrm G}$ associated with $G$ is the simultaneous $+1$-eigenstate of the stabilizers generated by $\{S_i := X_i Z_{N(i)}: i \in [n]\}$, where $N(i)$ is the set of neighbours of vertex $i$ in the graph $G$.
\end{definition}

Throughout this paper, we will use $S_i$ to denote the stabilizer generator associated with the vertex $i$. Since the $S_i$'s are a set of $n$ independent generators on $n$ qubits, the graph state $\ket{\mathrm{G}}$ is a \emph{stabilizer state}, i.e., an $[[n,0,d]]$ stabilizer code for some $d \ge 1$.

In the literature, we find the terms graph states and cluster states being used interchangeably. In this work, we use \emph{graph states} to denote states associated with arbitrary undirected simple graphs, and \emph{cluster states} for graph states associated with regular lattices. Of particular significance to our work are cluster states obtained from $D$-dimensional rectangular lattices with periodic boundary conditions. For positive integers $D, n_1, n_2, \ldots, n_D$, let $\Lambda_{n_1,n_2,\ldots,n_D}$ denote the $D$-dimensional rectangular lattice with periodic boundary conditions and side lengths $n_1, n_2, \ldots, n_D$. Mathematically, this is the (undirected) graph on the vertex set $[0:n_1-1] \times [0:n_2-1] \times \cdots \times [0:n_D-1]$, with an edge drawn between vertices $\mathbf{x} = (x_1,x_2,\ldots,x_D)$ and $\mathbf{y} = (y_1,y_2,\ldots,y_D)$ iff the Lee distance
$$d_L(\mathbf{x},\mathbf{y}) := \sum_{i=1}^D \min(|x_i-y_i|, n_i - |x_i-y_i|)$$
between $\mathbf{x}$ and $\mathbf{y}$ equals $1$. As an illustration, Figure~\ref{fig:2DCluster} shows the lattice $\Lambda_{3,3}$.

\begin{figure}[h]
    \centering
    \includegraphics[width=0.45\linewidth]{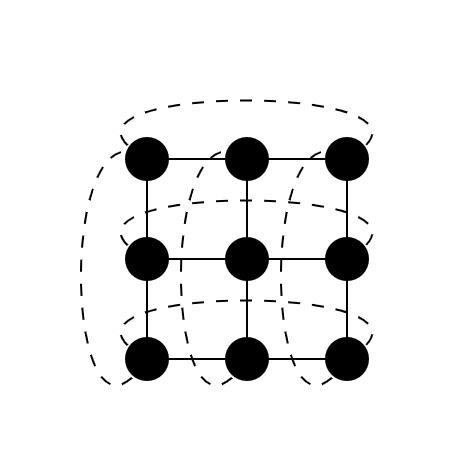}
    \caption{$\Lambda_{3,3}$: An example of $2$-dimensional rectangular lattice with periodic boundary conditions. The periodic boundaries are shown with dotted lines.}
    \label{fig:2DCluster}
\end{figure}

In general, graph states can be highly entangled. The type of multipartite entanglement that is relevant to us is $m$-uniformity.
\begin{definition} \label{def:m-unif}
    A quantum state $\ket{\psi}$ is said to be $m$-uniform if all possible $m$-qubit marginals are maximally mixed. Further, a quantum code is said to be $m$-uniform when every state it contains is $m$-uniform.
\end{definition}
An equivalent characterization of an $m$-uniform state is as follows: An $n$-qubit quantum state $\ket{\psi}$ is $m$-uniform iff $\bra{\psi}\mathcal{O}\ket{\psi}=0$ for all non-identity operators $\mathcal{O} \in \mathcal{P}_n$ with weight at most $m$ \cite{PhysRevA.87.012319}. Expressed in the language of quantum codes, an $m$-uniform state $\ket{\psi}$ spans a pure $[[n,0,m+1]]$ QECC. 
For graph states, there is a simple characterization of $m$-uniformity, stated in the proposition below.

\begin{proposition} \label{prop:degree}
 A graph state $\ket{\mathrm{G}}$ associated with a graph $G$ is $m$-uniform iff all the stabilizers in $\mathcal{S} = \langle S_1,S_2,\ldots,S_n \rangle$, other than the identity $I$, have weight at least $m+1$. In particular, if $\ket{\mathrm{G}}$ is $m$-uniform, then every vertex of $G$ has degree at least $m$.
\end{proposition}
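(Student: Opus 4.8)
The plan is to read off the result by combining the expectation-value characterization of $m$-uniformity recalled just above the proposition with the standard description of how Pauli operators behave as observables on a stabilizer state. I would first note that $\ket{\mathrm{G}}$ is a bona fide stabilizer state: for $i\neq j$ the $X$-part of $S_i$ and the $Z$-part of $S_j$ overlap on qubit $i$ exactly when $i\in N(j)$, and symmetrically the $X$-part of $S_j$ and the $Z$-part of $S_i$ overlap on qubit $j$ exactly when $j\in N(i)$; since $G$ is undirected these two events coincide, so $S_i$ and $S_j$ overlap in an even number (zero or two) of anticommuting positions and hence commute. Each $S_i$ squares to $I$ because $i\notin N(i)$ in a simple graph, and the $n$ generators are independent since their $X$-parts are the standard basis vectors of $\mathbb{F}_2^n$. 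Thus $\mathcal{S}$ has order $2^n$ and, in particular, $-I\notin\mathcal{S}$.

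Next I would establish the key fact: for a Hermitian Pauli operator $\mathcal{O}\in\mathcal{P}_n$, one has $\braket{\mathrm{G}|\mathcal{O}|\mathrm{G}}=1$ if $\mathcal{O}\in\mathcal{S}$, $\braket{\mathrm{G}|\mathcal{O}|\mathrm{G}}=-1$ if $-\mathcal{O}\in\mathcal{S}$, and $\braket{\mathrm{G}|\mathcal{O}|\mathrm{G}}=0$ otherwise. If $\mathcal{O}$ anticommutes with some $S_i$, then $\braket{\mathrm{G}|\mathcal{O}|\mathrm{G}}=\braket{\mathrm{G}|\mathcal{O}S_i|\mathrm{G}}=-\braket{\mathrm{G}|S_i\mathcal{O}|\mathrm{G}}=-\braket{\mathrm{G}|\mathcal{O}|\mathrm{G}}$, so the value is $0$. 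If instead $\mathcal{O}$ commutes with every $S_i$, then in symplectic language its vector lies in the symplectic complement of the $n$-dimensional isotropic subspace associated with $\mathcal{S}$, which equals that subspace itself; hence $\mathcal{O}=\pm g$ for some $g\in\mathcal{S}$ (the factor is real since $\mathcal{O}$ is Hermitian), and $g\ket{\mathrm{G}}=\ket{\mathrm{G}}$ gives the claimed $\pm1$.

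With these in hand the equivalence is immediate. By the quoted characterization, $\ket{\mathrm{G}}$ is $m$-uniform iff no non-identity Pauli operator of weight at most $m$ has nonzero expectation in $\ket{\mathrm{G}}$; by the previous paragraph this holds iff there is no $g\in\mathcal{S}$ with $g\neq I$ (equivalently $g\neq\pm I$, since $-I\notin\mathcal{S}$) and $\mathrm{wt}(g)\le m$, using that weight is invariant under the overall sign. Contrapositively, $\ket{\mathrm{G}}$ is $m$-uniform iff every element of $\mathcal{S}\setminus\{I\}$ has weight at least $m+1$. For the ``in particular'' clause, note $\mathrm{wt}(S_i)=\mathrm{wt}(X_iZ_{N(i)})=1+|N(i)|=1+\deg(i)$, again because $i\notin N(i)$; so $m$-uniformity forces $1+\deg(i)\ge m+1$, i.e.\ $\deg(i)\ge m$ for every vertex $i$. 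The one step that requires a genuine structural input rather than bookkeeping is the claim that a Hermitian Pauli commuting with all of $\mathcal{S}$ must lie in $\pm\mathcal{S}$; I expect this (the self-duality of a maximal isotropic subspace under the symplectic form, equivalently the fact that the centralizer of a full stabilizer group is generated by $\mathcal{S}$ and $iI$) to be the main point to spell out carefully, with everything else following routinely.
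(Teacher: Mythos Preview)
Your proof is correct and follows essentially the same approach as the paper's: both hinge on the fact that for a stabilizer state the centralizer of $\mathcal{S}$ coincides with $\mathcal{S}$ (up to phases), so a Pauli has nonzero expectation iff it lies in $\pm\mathcal{S}$, and then the equivalence with the expectation-value characterization of $m$-uniformity is immediate. You are simply more explicit than the paper about verifying that the $S_i$ generate a genuine maximal stabilizer group and about tracking the $\pm$ signs, which the paper treats informally.
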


\begin{proof}
    For any stabilizer $S$ of $\ket{\mathrm{G}}$, we have $\braket{\mathrm{G} | S | \mathrm{G}} = \braket{\mathrm{G} | \mathrm{G}} = 1 \ne 0$. Hence, if some stabilizer $S \ne I$ of $\ket{\mathrm{G}}$ has weight at most $m$, then $\ket{\mathrm{G}}$ cannot be $m$-uniform.

    Conversely, suppose that no stabilizer $S \ne I$ of $\ket{\mathrm{G}}$ has weight at most $m$. Consider any non-identity Pauli operator $\mathcal{O} \in \mathcal{P}_n$ of weight at most $m$. We then have $\mathcal{O} \notin \mathcal{S}$. Since $\ket{\mathrm{G}}$ is a stabilizer state, the centralizer, $\mathcal{C}(\mathcal{S})$, of $\mathcal{S}$ is the same as $\mathcal{S}$. Hence, $\mathcal{O} \notin \mathcal{C}(\mathcal{S})$, from which it follows that $\braket{\mathrm{G} | \mathcal{O} | \mathrm{G}} = 0$. Therefore, $\ket{\mathrm{G}}$ is $m$-uniform.

    Finally, if $\ket{\mathrm{G}}$ is $m$-uniform, then each $S_i$ has weight at least $m+1$. Since the weight of $S_i$ equals $1 + |N(i)|$, we must have $|N(i)|\geq m$, i.e., the degree of each vertex $i$ is at least $m$.
\end{proof}

The following useful result, shown by Sudevan et al.\ \cite[Appendix~C]{sowrabh}, demonstrates that cluster states associated with sufficiently large rectangular lattices have the kind of multipartite entanglement we seek.

\begin{proposition} \label{prop:sudevan_etal}
The cluster state associated with the $D$-dimensional rectangular lattice $\Lambda_{n_1,n_2,\ldots,n_D}$ is $2D$-uniform when $n_i \ge 8$ for all $i$.
\end{proposition}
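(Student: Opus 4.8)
My plan is to reduce to a statement about stabilizer weights and then induct on the dimension $D$. By Proposition~\ref{prop:degree} it suffices to show that every non-identity element of $\mathcal{S}=\langle S_1,\dots,S_n\rangle$ has weight at least $2D+1$. Such an element is $S_A:=\prod_{i\in A}S_i$ for a nonempty $A\subseteq V(\Lambda_{n_1,\dots,n_D})$, and since $S_i=X_iZ_{N(i)}$ we have, up to a global phase, $S_A=X_A\,Z_B$ with $B:=\bigtriangleup_{i\in A}N(i)=\{v:|N(v)\cap A|\text{ odd}\}$. Reading off the Pauli at each site ($X$ on $A\setminus B$, $Y$ on $A\cap B$, $Z$ on $B\setminus A$, $I$ elsewhere) gives $\mathrm{wt}(S_A)=|A\cup B|$. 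If $|A|\ge 2D+1$ we are done, so assume $1\le|A|\le 2D$; the task is to prove $|A\cup B|\ge 2D+1$, i.e.\ that $A$ has a large ``odd boundary'' $B\setminus A$.

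I would argue by induction on $D$ using the Cartesian-product factorization $\Lambda_{n_1,\dots,n_D}=\Lambda_{n_1,\dots,n_{D-1}}\,\square\,C_{n_D}$. The base case $D=1$ is an easy direct check on the cycle $C_{n_1}$ (the hypothesis $n_1\ge 8$ rules out $n_1\mid 4$, which is the only obstruction). For the inductive step, write a vertex as $(u,t)$ with $u$ in $G':=\Lambda_{n_1,\dots,n_{D-1}}$ and $t\in\mathbb{Z}_{n_D}$, and slice $A$ into layers $A_t:=\{u:(u,t)\in A\}\subseteq V(G')$. Expanding $S_A=\prod_t\prod_{u\in A_t}S_{(u,t)}$ and collecting the Pauli acting on layer $t$, one finds $X$-support $A_t$ and $Z$-support $B'_t\,\bigtriangleup\,A_{t-1}\,\bigtriangleup\,A_{t+1}$, where $B'_t:=\bigtriangleup_{u\in A_t}N_{G'}(u)$ is the ``internal'' odd boundary of $A_t$ inside $G'$; hence
\[
\mathrm{wt}(S_A)=\sum_{t\in\mathbb{Z}_{n_D}}\bigl|A_t\cup\bigl(B'_t\,\bigtriangleup\,A_{t-1}\,\bigtriangleup\,A_{t+1}\bigr)\bigr|.
\]
If a single layer $t_0$ is active, the layers $t_0\pm1$ each contribute $|A_{t_0}|$ and every other (inactive) layer contributes $0$, so $\mathrm{wt}(S_A)=|A_{t_0}\cup B'_{t_0}|+2|A_{t_0}|\ge\bigl(2(D-1)+1\bigr)+2=2D+1$ by the induction hypothesis applied to $A_{t_0}$ in $G'$. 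This single-layer case is immediate, and already tight.

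The remaining case — $A$ spread over two or more layers — is where the real work lies. Here I would group the active layers into maximal runs of consecutive layers and estimate, run by run, the contribution of a run together with its two neighbouring layers. Two ingredients drive this: the elementary inequality $|A_t\cup(B'_t\bigtriangleup A_{t-1}\bigtriangleup A_{t+1})|\ge|A_t\cup B'_t|-|A_{t-1}|-|A_{t+1}|$, which lets the induction hypothesis $|A_t\cup B'_t|\ge 2D-1$ enter; and the observation that a layer just outside a run ``returns'' the weight $|A_{\text{adjacent}}|$ charged there as a loss, at least when the bordering gap has length $\ge 2$. Combining a single run's block with the bound $|A|\le 2D$ then forces $\mathrm{wt}(S_A)\ge 2D+1$. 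I expect the main obstacle to be the bookkeeping for gaps of length exactly one between runs, together with wrap-around effects in short cycles: these are precisely the configurations producing spurious low-weight stabilizers such as $S_{\{0\}}S_{\{2\}}=X_0X_2$ in $C_4$, and ruling them out is exactly what the hypothesis $n_i\ge 8$ is for.
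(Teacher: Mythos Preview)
The paper does not give its own proof of this proposition; it simply quotes the result from Sudevan et al.\ \cite[Appendix~C]{sowrabh}. So there is no in-paper argument to compare against, and your proposal should be judged on its own terms.

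Your reduction via Proposition~\ref{prop:degree} to $|A\cup B|\ge 2D+1$ is correct, as is the layer decomposition and the single-layer case of the induction (and the latter is already tight, as you note). The difficulty is that the multi-layer case---which you yourself flag as ``where the real work lies''---is not actually carried out. You assert that combining the per-layer inequality $|A_t\cup(B'_t\bigtriangleup A_{t-1}\bigtriangleup A_{t+1})|\ge |A_t\cup B'_t|-|A_{t-1}|-|A_{t+1}|$, the induction hypothesis, the boundary ``returns,'' and the global constraint $|A|\le 2D$ forces $\mathrm{wt}(S_A)\ge 2D+1$, but you do not verify this. For a single run of length $k$ with boundary gaps $\ge 2$, summing your lower bounds gives $k(2D-1)-2|A_{\text{run}}|+2|A_{t}|+2|A_{t+k-1}|$, which does indeed clear $2D+1$ once one checks the small-$k$ cases separately; but for several runs, gaps of length one, or wrap-around, the accounting is genuinely delicate (your own per-layer bound can go negative on interior layers, so one must interleave it with the trivial bound $\ge |A_t|$), and you have not shown how to close it. As written, the proposal is a plausible plan with the hard case left open.

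Two small corrections. For the base case $D=1$, the obstruction is $n_1\in\{3,4\}$ (e.g.\ $S_0S_1=Y_0Y_1$ in $C_3$), not ``$n_1\mid 4$''; indeed $n_1\ge 5$ already suffices for $2$-uniformity. And in the single-layer display you should record that the two flanking layers $t_0\pm 1$ are distinct, which uses $n_D\ge 3$.
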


\subsection{CWS codes from graph states}

Codeword stabilized (CWS) codes were introduced by Cross et al.\ \cite{CWSCode} as a means of extending the stabilizer code formalism to encompass non-additive QECCs. Our starting point is a specific construction (called the standard form in \cite{CWSCode}) of an additive CWS code, which we summarize in the proposition below. The proposition is a direct consequence of Theorems~1 and 5 in \cite{CWSCode}. 

\begin{proposition}
    \label{prop:CWScode}
    Consider an $n$-qubit graph state, $\ket{\mathrm{G}}$, whose stabilizers are constructed from a graph $G$. Let $\mathcal{C}$ be a binary linear code of length $n$ and dimension $k$. Then, 
    the CWS code defined as $\text{span}\bigl(\{Z_{\mathbf{c}}\ket{\mathrm{G}}: \mathbf{c} \in \mathcal{C}\}\bigr)$ is an $[[n,k]]$ stabilizer code.
\end{proposition}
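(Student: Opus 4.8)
The plan is to exhibit an explicit abelian subgroup $\mathcal{S}' \subseteq \mathcal{P}_n$ of order $2^{n-k}$ with $-I \notin \mathcal{S}'$, and to show that its joint $(+1)$-eigenspace is exactly $\text{span}\bigl(\{Z_{\mathbf{c}}\ket{\mathrm{G}}: \mathbf{c}\in\mathcal{C}\}\bigr)$; this is precisely the statement that the span is an $[[n,k]]$ stabilizer code. The argument then has two halves: a dimension count for the span, and the identification of $\mathcal{S}'$.

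For the dimension count I would show that the $2^k$ states $Z_{\mathbf{c}}\ket{\mathrm{G}}$, $\mathbf{c}\in\mathcal{C}$, are pairwise orthogonal. For $\mathbf{c},\mathbf{c}'\in\mathcal{C}$ we have $\braket{\mathrm{G}|Z_{\mathbf{c}}^\dagger Z_{\mathbf{c}'}|\mathrm{G}} = \braket{\mathrm{G}|Z_{\mathbf{c}+\mathbf{c}'}|\mathrm{G}}$, and since $\ket{\mathrm{G}}$ is a stabilizer state this quantity is $\pm 1$ if $\pm Z_{\mathbf{c}+\mathbf{c}'}\in\mathcal{S}$ and $0$ otherwise. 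Writing a general element of $\mathcal{S}$ as $S_T := \prod_{i\in T}S_i$ with $T\subseteq[n]$, the defining form $S_i = X_i Z_{N(i)}$ shows that the $X$-part of $S_T$ (in the symplectic notation of Section~\ref{sec:prelims}) is the characteristic vector of $T$; hence the only element of $\mathcal{S}$ with trivial $X$-part is $S_\emptyset = I$ (using $-I\notin\mathcal{S}$ to rule out $-I$). Therefore $\pm Z_{\mathbf{c}+\mathbf{c}'}\in\mathcal{S}$ forces $\mathbf{c}+\mathbf{c}' = \mathbf{0}$, i.e.\ $\mathbf{c}=\mathbf{c}'$, so the states are orthogonal; as each $Z_{\mathbf{c}}$ is unitary and $\ket{\mathrm{G}}\neq 0$, they are in fact orthonormal, and the span has dimension $2^k$.

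To identify the stabilizer group, observe that $T\mapsto S_T$ is an isomorphism from $(\mathbb{F}_2^n,\oplus)$ onto $\mathcal{S}$ (the $S_i$ are independent generators), and under the symplectic correspondence the $X$-part of $S_T$ is the characteristic vector of $T$; write $S_{\mathbf{v}}$ for the element of $\mathcal{S}$ with $X$-part $\mathbf{v}$. A routine commutation computation in the symplectic formalism gives $S_{\mathbf{v}} Z_{\mathbf{c}} = (-1)^{\mathbf{v}\cdot\mathbf{c}} Z_{\mathbf{c}} S_{\mathbf{v}}$, hence $S_{\mathbf{v}}\bigl(Z_{\mathbf{c}}\ket{\mathrm{G}}\bigr) = (-1)^{\mathbf{v}\cdot\mathbf{c}} Z_{\mathbf{c}}\ket{\mathrm{G}}$. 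Now set $\mathcal{S}' := \{S_{\mathbf{v}}: \mathbf{v}\in\mathcal{C}^\perp\}$, where $\mathcal{C}^\perp$ is the dual of $\mathcal{C}$. Then $\mathcal{S}'$ is a subgroup of $\mathcal{S}$, hence abelian with $-I\notin\mathcal{S}'$, and $|\mathcal{S}'| = |\mathcal{C}^\perp| = 2^{n-k}$; moreover every $S_{\mathbf{v}}\in\mathcal{S}'$ fixes every $Z_{\mathbf{c}}\ket{\mathrm{G}}$ with $\mathbf{c}\in\mathcal{C}$ because $\mathbf{v}\cdot\mathbf{c} = 0$. Thus $\text{span}\bigl(\{Z_{\mathbf{c}}\ket{\mathrm{G}}\}\bigr)$ lies inside the joint $(+1)$-eigenspace of $\mathcal{S}'$, which has dimension $2^{n}/2^{n-k} = 2^k$; combined with the dimension count above, the span equals this eigenspace and is therefore an $[[n,k]]$ stabilizer code with stabilizer group $\mathcal{S}'$.

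The only mildly delicate points are the phase/sign bookkeeping: verifying that ``$\pm Z_{\mathbf{c}+\mathbf{c}'}\in\mathcal{S}$'' really does force the $X$-part to vanish (so that the pure-$Z$ part of $\mathcal{S}$ is trivial), and that $-I\notin\mathcal{S}'$, which is automatic once one knows $\mathcal{S}'\subseteq\mathcal{S}$ and that $\ket{\mathrm{G}}$ is a genuine stabilizer state. Everything else is linear algebra over $\mathbb{F}_2$ together with the standard fact that the common $(+1)$-eigenspace of an order-$2^{r}$ abelian Pauli subgroup avoiding $-I$ has dimension $2^{n-r}$. I would also remark that this argument simply makes explicit, in the additive ($\mathcal{C}$ linear) case, the content of Theorems~1 and~5 of \cite{CWSCode}.
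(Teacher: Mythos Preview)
Your proof is correct. The paper does not actually prove this proposition; it simply states that ``the proposition is a direct consequence of Theorems~1 and 5 in \cite{CWSCode}'' and moves on. So there is no in-paper argument to compare against, and your self-contained proof fills that gap cleanly.

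It is worth noting that your explicit stabilizer group $\mathcal{S}' = \{S_{\mathbf{v}} : \mathbf{v}\in\mathcal{C}^\perp\}$ coincides with the group $\mathcal{S}_\mathcal{C}$ that the paper describes separately in Proposition~\ref{codeStabilizers}: an element $S_{\mathbf{v}}\in\mathcal{S}$ commutes with every $Z_{\mathbf{a}_i}$ iff $\mathbf{v}\cdot\mathbf{a}_i = 0$ for all $i$, i.e., iff $\mathbf{v}\in\mathcal{C}^\perp$. Thus your argument not only proves Proposition~\ref{prop:CWScode} directly but also yields the content of Proposition~\ref{codeStabilizers} in a more explicit form, whereas the paper treats the two results separately (citing \cite{CWSCode} for the first and giving an intersection-of-conjugates argument in Appendix~\ref{appendix:codeStabilizers} for the second).
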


\begin{example}
Let $\ket{\mathrm{G}}$ be the graph state associated with the $5$-cycle $C_5$, and let $\mathcal{C}$ be the $5$-bit repetition code $\{00000, 11111\}$. Then, $\text{span}\bigl( \{\ket{\mathrm{G}}, Z_{11111}\ket{\mathrm{G}}\} \bigr)$ is a $[[5,1]]$ stabilizer code, which can be shown to have minimum distance $3$.
\end{example}

The following proposition, proved in Appendix~\ref{appendix:codeStabilizers}, describes the stabilizer group of this CWS code.
\begin{proposition} \label{codeStabilizers}
    Consider a graph state $\ket{\mathrm{G}}$ with stabilizer group $\mathcal{S}$. Let $\{\mathbf{a}_1, \mathbf{a}_2, \ldots, \mathbf{a}_k\}$ be a basis for a binary linear code $\mathcal{C}$, and let $\mathbf{a}_0 = \mathbf{0}$ be the zero vector. Then, the stabilizer group $\mathcal{S}_\mathcal{C}$ of $\text{span}\bigl(\{Z_{\mathbf{c}}\ket{\mathrm{G}}: \mathbf{c} \in \mathcal{C}\}\bigr)$ is
    \begin{equation*}
        \mathcal{S}_\mathcal{C} = \bigcap_{i=0}^{k}{Z_{\mathbf{a}_i}\mathcal{S}Z_{\mathbf{a}_i}},
    \end{equation*}
    where $Z_{\mathbf{a}_i}\mathcal{S}Z_{\mathbf{a}_i} = \{Z_{\mathbf{a}_i} S Z_{\mathbf{a}_i}: S \in \mathcal{S}\}$.
\end{proposition}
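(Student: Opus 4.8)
The plan is to describe $\mathcal{S}_\mathcal{C}$ in two steps: first identify it with the intersection $\bigcap_{\mathbf{c}\in\mathcal{C}} Z_{\mathbf{c}}\mathcal{S}Z_{\mathbf{c}}$ taken over \emph{all} codewords, and then show that this intersection collapses onto the finite index set $\{\mathbf{a}_0,\mathbf{a}_1,\ldots,\mathbf{a}_k\}$. The one external fact I will invoke is standard: for the stabilizer \emph{state} $\ket{\mathrm{G}}$, the group $\mathcal{S}$ is \emph{exactly} the set of Pauli operators that fix $\ket{\mathrm{G}}$. (Indeed, if $P\ket{\mathrm{G}}=\ket{\mathrm{G}}$ then $P$ has eigenvalue $+1$, hence is Hermitian, and $P$ commutes with every $S\in\mathcal{S}$ because the Pauli commutator $PSP^{-1}S^{-1}\in\{\pm I\}$ must fix $\ket{\mathrm{G}}$ and so equals $+I$; as $\mathcal{S}$ is a full-rank stabilizer group on $n$ qubits, its centralizer in $\mathcal{P}_n$ is $\langle\mathcal{S},iI\rangle$, so $P=cS'$ for some $S'\in\mathcal{S}$ and a phase $c$, and $P\ket{\mathrm{G}}=\ket{\mathrm{G}}$ then forces $c=1$.) I also record that $-I\notin\mathcal{S}$, so $\mathcal{S}$ cannot contain both $M$ and $-M$.

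\emph{Step 1.} Since $Z_{\mathbf{c}}^{2}=I$, a Pauli $M$ fixes $Z_{\mathbf{c}}\ket{\mathrm{G}}$ if and only if $Z_{\mathbf{c}}MZ_{\mathbf{c}}$ fixes $\ket{\mathrm{G}}$, which by the fact above holds if and only if $Z_{\mathbf{c}}MZ_{\mathbf{c}}\in\mathcal{S}$, i.e.\ $M\in Z_{\mathbf{c}}\mathcal{S}Z_{\mathbf{c}}$. A Pauli operator stabilizes $\text{span}\{Z_{\mathbf{c}}\ket{\mathrm{G}}:\mathbf{c}\in\mathcal{C}\}$ precisely when it fixes each spanning vector $Z_{\mathbf{c}}\ket{\mathrm{G}}$, and therefore $\mathcal{S}_\mathcal{C}=\bigcap_{\mathbf{c}\in\mathcal{C}}Z_{\mathbf{c}}\mathcal{S}Z_{\mathbf{c}}$.

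\emph{Step 2.} It remains to show $\bigcap_{\mathbf{c}\in\mathcal{C}}Z_{\mathbf{c}}\mathcal{S}Z_{\mathbf{c}}=\bigcap_{i=0}^{k}Z_{\mathbf{a}_i}\mathcal{S}Z_{\mathbf{a}_i}$. The inclusion ``$\subseteq$'' is immediate since $\{\mathbf{a}_0,\ldots,\mathbf{a}_k\}\subseteq\mathcal{C}$. For ``$\supseteq$'', take $M$ in the right-hand side. The term $i=0$ (with $\mathbf{a}_0=\mathbf{0}$, hence $Z_{\mathbf{a}_0}=I$) gives $M\in\mathcal{S}$. For each $i\in[k]$ we have $Z_{\mathbf{a}_i}MZ_{\mathbf{a}_i}\in\mathcal{S}$; but conjugating a Pauli by a Pauli returns $\pm$ the same Pauli, so $Z_{\mathbf{a}_i}MZ_{\mathbf{a}_i}=\pm M$, and the $-$ sign is ruled out because $M\in\mathcal{S}$ and $-I\notin\mathcal{S}$. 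Hence $M$ commutes with every $Z_{\mathbf{a}_i}$, and since every $Z_{\mathbf{c}}$ with $\mathbf{c}\in\mathcal{C}$ is a product of the $Z_{\mathbf{a}_i}$'s (because $Z_{\mathbf{u}}Z_{\mathbf{v}}=Z_{\mathbf{u}+\mathbf{v}}$ with addition mod $2$), $M$ commutes with every $Z_{\mathbf{c}}$; thus $Z_{\mathbf{c}}MZ_{\mathbf{c}}=M\in\mathcal{S}$, i.e.\ $M\in Z_{\mathbf{c}}\mathcal{S}Z_{\mathbf{c}}$ for all $\mathbf{c}\in\mathcal{C}$. Chaining Steps 1 and 2 gives the stated formula.

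The only genuinely nontrivial ingredient is the fact that no Pauli outside $\mathcal{S}$ fixes $\ket{\mathrm{G}}$; this is where it matters that $\ket{\mathrm{G}}$ is a stabilizer \emph{state} (a one-dimensional code), and it is precisely what powers the ``only if'' direction in Step 1. Step 2 is elementary group theory, the single delicate point being the observation that keeping the index $i=0$ (equivalently, $\mathbf{a}_0=\mathbf{0}$) in the intersection is what forces $M\in\mathcal{S}$, which in turn is what upgrades $Z_{\mathbf{a}_i}MZ_{\mathbf{a}_i}\in\mathcal{S}$ to genuine commutation of $M$ with $Z_{\mathbf{a}_i}$.
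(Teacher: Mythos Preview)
Your proof is correct and follows essentially the same two-step approach as the paper: first identify $\mathcal{S}_\mathcal{C}$ with the intersection $\bigcap_{\mathbf{c}\in\mathcal{C}}Z_{\mathbf{c}}\mathcal{S}Z_{\mathbf{c}}$ over all codewords, then collapse this to the basis $\{\mathbf{a}_0,\ldots,\mathbf{a}_k\}$ via the observation that $Z_{\mathbf{a}_i}MZ_{\mathbf{a}_i}=\pm M$ together with $-I\notin\mathcal{S}$ forces commutation, which then propagates to all of $\mathcal{C}$ by linearity. The only difference is cosmetic: you spell out in detail why a Pauli fixing $\ket{\mathrm{G}}$ must lie in $\mathcal{S}$, whereas the paper simply cites the standard result that the stabilizer group of $Z_{\mathbf{c}}\ket{\mathrm{G}}$ is $Z_{\mathbf{c}}\mathcal{S}Z_{\mathbf{c}}$.
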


\begin{example}
Consider the graph state $\ket{\mathrm{G}}$ associated with the path, $P_3$, on $3$ vertices. The stabilizer group, $\mathcal{S}$, of $\ket{\mathrm{G}}$ has generators $g_1 = X_1 Z_2$, $g_2 = Z_1 X_2 Z_3$ and $g_3 = Z_2 X_3$. Let $\mathcal{C} = \{000, 110\}$. Note that $Z_{000}\mathcal{S}Z_{000} = \mathcal{S} = \{I, g_1, g_2, g_3, g_1 g_2 , g_2 g_3, g_1 g_3, g_1 g_2 g_3\} $, and $Z_{110}\mathcal{S}Z_{110} = (Z_1Z_2)\mathcal{S}(Z_1Z_2) = \{I, -g_1, -g_2, g_3, g_1 g_2, -g_2 g_3, -g_1 g_3, g_1 g_2 g_3\}$. Thus, from Proposition~\ref{codeStabilizers}, we have $\mathcal{S}_{\mathcal{C}} = \mathcal{S} \bigcap (Z_1 Z_2) \mathcal{S}(Z_1 Z_2) = \{I, g_1 g_2, g_3, g_1 g_2 g_3\}.$
\end{example}

Cross et al.\ \cite[Section~V]{CWSCode} give an encoding circuit for the CWS code that has the same complexity as the encoding circuit used for the classical linear code $\mathcal{C}$. In Section~\ref{sec: encoding} of our paper, we describe an alternative, measurement-based procedure for encoding. Owing to the role played by the code $\mathcal{C}$ in our encoding procedure (see Figure~\ref{kAncillas}), we hereafter refer to $\mathcal{C}$ as the \emph{tent peg code} in the CWS code construction.

\section{Main Results}\label{sec:main}

The basis of our construction of CWS codes with good error-correction properties is the following theorem, which shows that the graph state and the tent peg code need to be chosen carefully to obtain a good QECC.

\begin{theorem} \label{theorem_1}
    Let $\mathcal{S}$ denote the stabilizer group corresponding to a graph state $\ket{\mathrm{G}}$ on $n$ qubits, and let $\mathcal{V}_\mathcal{S}$ be the corresponding binary vector space in the symplectic representation. Let $\mathcal{C}$ be a binary linear code of length $n$ and dimension $k$, and consider its length-$2n$ extension $\mathcal{C}^\prime := \{[0,0, \ldots, 0 \ | \ \mathbf{c}]: \mathbf{c}\in \mathcal{C} \}$. Define $\mathcal{W}:= \mathcal{V}_\mathcal{S}+\mathcal{C}^\prime = \{\mathbf{v}+\mathbf{c}^\prime: \mathbf{v} \in \mathcal{V}_\mathcal{S} \text{ and } \mathbf{c}^\prime \in \mathcal{C}^\prime\}$, where the vector addition in $\mathbf{v} + \mathbf{c}'$ is coordinatewise addition modulo $2$. Then, $\mathcal{Q} := \text{span}\bigl(\{ Z_{\mathbf{c}}\ket{\mathrm{G}}: \mathbf{c} \in \mathcal{C} \}\bigr)$ is a pure $[[n,k,m+1]]$ QECC if and only if the minimum symplectic weight of a non-zero vector in $\mathcal{W}$ is at least $m+1$.
\end{theorem}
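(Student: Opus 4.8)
The plan is to compute expectation values of Pauli operators directly on the graph state $\ket{\mathrm{G}}$, rather than first describing the stabilizer group of $\mathcal{Q}$. By Proposition~\ref{prop:CWScode}, $\mathcal{Q}$ is already a $2^k$-dimensional $[[n,k]]$ stabilizer code, so by the definition of purity in Section~\ref{sec:prelims} it remains only to show that $\braket{\psi|E|\psi}=0$ for every $\ket{\psi}\in\mathcal{Q}$ and every Pauli $E$ with $0<\mathrm{wt}(E)\le m$ if and only if the minimum symplectic weight of a nonzero vector in $\mathcal{W}$ is at least $m+1$. (That this orthogonality condition is exactly what ``pure $[[n,k,m+1]]$ QECC'' means, and that it yields the Knill--Laflamme conditions for a pure distance-$(m{+}1)$ code, follows by polarization.) As a preliminary I would note that $\{Z_{\mathbf{c}}\ket{\mathrm{G}}:\mathbf{c}\in\mathcal{C}\}$ is an orthonormal basis of $\mathcal{Q}$: for $\mathbf{c}\ne\mathbf{c}'$, $Z_{\mathbf{c}}^\dagger Z_{\mathbf{c}'}=Z_{\mathbf{c}+\mathbf{c}'}$ is a nonidentity $Z$-type Pauli whose symplectic vector $[\mathbf{0}\,|\,\mathbf{c}+\mathbf{c}']$ lies outside $\mathcal{V}_\mathcal{S}$ (the only $Z$-type vector in $\mathcal{V}_\mathcal{S}$ is $\mathbf{0}$, since every generator $S_i$ carries an $X$ on qubit $i$), so its expectation in $\ket{\mathrm{G}}$ vanishes.

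Next I would reduce the purity condition to a statement about the matrix elements $\braket{\mathrm{G}|Z_{\mathbf{c}'}EZ_{\mathbf{c}}|\mathrm{G}}$. Phases on $E$ are irrelevant to whether $\braket{\psi|E|\psi}$ vanishes, so fix a Hermitian Pauli $E$; writing a general $\ket{\psi}\in\mathcal{Q}$ as $\sum_{\mathbf{c}\in\mathcal{C}}\alpha_{\mathbf{c}}Z_{\mathbf{c}}\ket{\mathrm{G}}$ turns $\braket{\psi|E|\psi}$ into the quadratic form $\sum_{\mathbf{c},\mathbf{c}'}\overline{\alpha_{\mathbf{c}'}}\,\alpha_{\mathbf{c}}\braket{\mathrm{G}|Z_{\mathbf{c}'}EZ_{\mathbf{c}}|\mathrm{G}}$ in $\alpha\in\mathbb{C}^{|\mathcal{C}|}$, which vanishes identically in $\alpha$ if and only if every coefficient $\braket{\mathrm{G}|Z_{\mathbf{c}'}EZ_{\mathbf{c}}|\mathrm{G}}$ is zero. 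Hence $\mathcal{Q}$ satisfies the purity condition iff $\braket{\mathrm{G}|Z_{\mathbf{c}'}EZ_{\mathbf{c}}|\mathrm{G}}=0$ for every $\mathbf{c},\mathbf{c}'\in\mathcal{C}$ and every Hermitian Pauli $E$ with $0<\mathrm{wt}(E)\le m$.

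The core step is the symplectic bookkeeping. If $E$ has symplectic vector $[\mathbf{a}\,|\,\mathbf{b}]$, then commuting the $Z$-strings past the $X$-part of $E$ shows that $Z_{\mathbf{c}'}EZ_{\mathbf{c}}$ is, up to a unit-modulus phase, a Pauli operator with symplectic vector $[\mathbf{a}\,|\,\mathbf{b}+\mathbf{c}+\mathbf{c}']$. The relevant property of the stabilizer state $\ket{\mathrm{G}}$ is that $\braket{\mathrm{G}|P|\mathrm{G}}\ne 0$ exactly when the symplectic vector of $P$ lies in $\mathcal{V}_\mathcal{S}$: since $\mathcal{V}_\mathcal{S}$ is Lagrangian (it equals its own symplectic complement), a vector outside $\mathcal{V}_\mathcal{S}$ corresponds to an operator anticommuting with some generator $S_i$, forcing $\braket{\mathrm{G}|P|\mathrm{G}}=\braket{\mathrm{G}|PS_i|\mathrm{G}}=-\braket{\mathrm{G}|S_iP|\mathrm{G}}=-\braket{\mathrm{G}|P|\mathrm{G}}=0$, while a vector inside $\mathcal{V}_\mathcal{S}$ makes $P$ a unit-modulus scalar times the unique element of $\mathcal{S}$ with that symplectic vector, so $|\braket{\mathrm{G}|P|\mathrm{G}}|=1$. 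Thus the matrix-element condition is equivalent to $[\mathbf{a}\,|\,\mathbf{b}+\mathbf{c}+\mathbf{c}']\notin\mathcal{V}_\mathcal{S}$ for all $\mathbf{c},\mathbf{c}'\in\mathcal{C}$; since $\mathcal{C}$ is linear, $\mathbf{c}+\mathbf{c}'$ ranges over all of $\mathcal{C}$, so this says $[\mathbf{a}\,|\,\mathbf{b}]+[\mathbf{0}\,|\,\mathbf{c}]\notin\mathcal{V}_\mathcal{S}$ for every $\mathbf{c}\in\mathcal{C}$, i.e., $[\mathbf{a}\,|\,\mathbf{b}]\notin\mathcal{V}_\mathcal{S}+\mathcal{C}'=\mathcal{W}$.

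Assembling the pieces: $\mathcal{Q}$ is a pure $[[n,k,m+1]]$ QECC iff the symplectic vector of every Pauli of weight between $1$ and $m$ lies outside $\mathcal{W}$; since symplectic weight equals Pauli weight and every symplectic vector of weight in $\{1,\dots,m\}$ is realized by some such Pauli, this is precisely the statement that $\mathcal{W}$ contains no nonzero vector of symplectic weight $\le m$, i.e., the minimum symplectic weight of a nonzero vector in $\mathcal{W}$ is at least $m+1$. I expect the main obstacle to be not conceptual but the phase and sign accounting in the previous paragraph --- justifying the ``$\braket{\mathrm{G}|P|\mathrm{G}}\ne 0$ iff symplectic vector in $\mathcal{V}_\mathcal{S}$'' claim in both directions and checking that the introduced phases can never cause a spurious cancellation --- together with the small but necessary point that restricting $\ket{\psi}$ to the spanning family $\{Z_{\mathbf{c}}\ket{\mathrm{G}}\}$ loses no information about the purity condition.
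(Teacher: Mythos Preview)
Your proposal is correct and follows essentially the same approach as the paper's proof: both reduce the purity condition to the vanishing of the matrix elements $\braket{\mathrm{G}|Z_{\mathbf{c}'}EZ_{\mathbf{c}}|\mathrm{G}}$, use the linearity of $\mathcal{C}$ to collapse $\mathbf{c}+\mathbf{c}'$ into a single codeword, and translate ``$Z_{\mathbf{c}}E\notin\mathcal{S}$'' into the symplectic-weight condition on $\mathcal{W}$. The only difference is presentational---you spell out the polarization step and the ``$\braket{\mathrm{G}|P|\mathrm{G}}\ne 0$ iff symplectic vector lies in $\mathcal{V}_\mathcal{S}$'' fact in detail, whereas the paper treats these as routine and moves directly to the matrix elements.
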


\begin{proof}
Suppose that there is no non-zero vector of symplectic weight at most $m$ in $\mathcal{W}$. Then, for any non-identity operator $E \in \mathcal{P}_n$ of weight at most $m$, we must have $Z_\mathbf{c}E \notin \mathcal{S}$ for all $\mathbf{c} \in \mathcal{C}$. This implies
    \begin{equation}
        \label{eq:theorem1}
        \bra{\mathrm{G}}Z_\mathbf{c}E\ket{\mathrm{G}} = 0 \hspace{0.5cm}\forall \mathbf{c} \in \mathcal{C},
    \end{equation}
so that, for any $\mathbf{c}_1,\mathbf{c}_2 \in \mathcal{C}$, we have 
    \begin{equation*}
        \bra{\mathrm{G}}Z_{\mathbf{c}_1} Z_{\mathbf{c}_2} E\ket{\mathrm{G}} = 0.
    \end{equation*}
Since $Z_{\mathbf{c}_2} E = \pm E Z_{\mathbf{c}_2}$ and $Z_{\mathbf{c}_1}^\dagger = Z_{\mathbf{c}_1}$, we in fact have 
    \begin{equation}
    \label{eq:KL}
        \bra{\mathrm{G}}Z_{\mathbf{c}_1}^\dagger E Z_{\mathbf{c}_2}\ket{\mathrm{G}} =0 \hspace{0.5cm}\forall \mathbf{c}_1, \mathbf{c}_2 \in \mathcal{C}.
    \end{equation}
    Therefore, $\mathcal{Q} := \text{span}\{ Z_{\mathbf{c}}\ket{\mathrm{G}}: \mathbf{c} \in \mathcal{C} \}$ is a pure $[[n,k,m+1]]$ QECC. 

    Conversely, if $\mathcal{Q}$ is a pure $[[n,k,m+1]]$ QECC, then for any non-identity operator $E \in \mathcal{P}_n$ of weight at most $m$, Eq.~\eqref{eq:KL} must hold. This implies that Eq.~\eqref{eq:theorem1} also holds, from which it follows that the minimum symplectic weight of a non-zero vector in $\mathcal{W}$ is at least $m+1$.
 \end{proof}
 
Theorem~\ref{theorem_1} does not explicitly impose an $m$-uniformity condition on the graph state $\ket{\mathrm{G}}$. However, to obtain a pure $[[n,k,m+1]]$ QECC via the theorem, $\ket{\mathrm{G}}$ must necessarily be $m$-uniform. This is because the minimum symplectic weight of a non-zero vector in $\mathcal{W}$ can be at least $m+1$ only if $\ket{\mathrm{G}}$ is $m$-uniform. Indeed, $\mathcal{W}$ contains all the non-zero vectors in $\mathcal{V}_{\mathcal{S}}$, which correspond to the non-identity stabilizers of $\ket{\mathrm{G}}$. Therefore, the condition on the minimum symplectic weight of a non-zero vector in $\mathcal{W}$ implies that all the non-identity stabilizers of $\ket{\mathrm{G}}$ have weight at least $m+1$, so that, by Proposition~\ref{prop:degree}, $\ket{\mathrm{G}}$ must be $m$-uniform. Thus, in all the constructions that follow of $[[n,k,m+1]]$ QECCs based on Theorem~\ref{theorem_1}, we will always start with an $m$-uniform graph state $\ket{\mathrm{G}}$.

Given $n$ and an $m$-uniform graph state $\ket{\mathrm{G}}$, one can ask what the largest value of $k$ is for which an $[[n,k,m+1]]$ QECC can be constructed using Theorem~\ref{theorem_1}. However, this is not easy to determine. Consider the set, $\mathcal{B}_m(\mathcal{V}_\mathcal{S}) \subseteq \{0,1\}^{2n}$, consisting of all vectors of the form $\mathbf{v} + \mathbf{x}$ with $\mathbf{v} \in \mathcal{V}_\mathcal{S}$ and $\mathbf{x} \in \{0,1\}^{2n}$, $\mathrm{wt}_{\mathrm{s}}(\mathbf{x}) \le m$. (Here, again, the vector addition is coordinatewise, modulo $2$.) Let $\mathcal{V}_X$ denote the binary vectors corresponding to the Pauli operators containing at least one $X$. In order to get the tent peg code that maximizes the dimension of the resulting QECC, we need to find the largest subspace of $\mathbb{F}_2^{2n}$ that has no intersection with $\mathcal{B}_m(\mathcal{V}_{\mathcal{S}}) \cup \mathcal{V}_X$ except $\mathbf{0}$. This problem reduces to the 4-colorability problem, which is known to be NP-hard \cite{andreasBlass}. So, instead of trying to find the QECC of largest dimension that can be constructed in this way, we aim to identify some simple conditions on the linear code $\mathcal C$ that will result in QECCs of reasonably large dimension. 

As we saw in Proposition~\ref{prop:degree}, $m$-uniformity of $\ket{\mathrm{G}}$ implies that the degree of every vertex in $G$ is at least $m$. If we further insist that $\ket{\mathrm{G}}$ arises from a graph $G$ in which every vertex has degree equal to $m$ (i.e., an $m$-regular graph), then by choosing a tent peg code $\mathcal{C}$ of sufficiently large minimum distance, we obtain a construction of a pure $[[n,\dim\mathcal{C},m+1]]$ QECC via Theorem~\ref{theorem_1}. 

\begin{theorem}\label{theorem: distanceBased}
   Let $\ket{\mathrm{G}}$ be an $m$-uniform graph state {on $n$ qubits} constructed from an $m$-regular graph {with $n$ vertices}. Then, for any $[n,k,d]$ binary linear code $\mathcal{C}$ with $d>m(m+1)$, the quantum code $\mathcal{Q} := \text{span}\bigl(\{Z_{\mathbf{c}}\ket{\mathrm{G}}: \mathbf{c} \in \mathcal{C}\}\bigr)$ is a pure $[[n,k,m+1]]$ QECC.
\end{theorem}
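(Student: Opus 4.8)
The plan is to reduce to Theorem~\ref{theorem_1}, which says that $\mathcal{Q}$ is a pure $[[n,k,m+1]]$ QECC exactly when every non-zero vector of $\mathcal{W} = \mathcal{V}_\mathcal{S} + \mathcal{C}'$ has symplectic weight at least $m+1$; so the whole task becomes a weight estimate in $\mathbb{F}_2^{2n}$. First I would make $\mathcal{V}_\mathcal{S}$ explicit: letting $\Gamma$ be the $\mathbb{F}_2$ adjacency matrix of the $m$-regular graph $G$ and $\mathbf{e}_i$ the $i$th standard basis vector, the generator $S_i = X_i Z_{N(i)}$ corresponds in the symplectic picture to $[\mathbf{e}_i \mid \Gamma\mathbf{e}_i]$, so $\mathcal{V}_\mathcal{S} = \{[\mathbf{t}\mid\Gamma\mathbf{t}]: \mathbf{t}\in\mathbb{F}_2^n\}$ and a typical element of $\mathcal{W}$ is $[\mathbf{t}\mid\Gamma\mathbf{t}+\mathbf{c}]$ with $\mathbf{t}\in\mathbb{F}_2^n$ and $\mathbf{c}\in\mathcal{C}$, whose symplectic weight is $|\mathrm{supp}(\mathbf{t})\cup\mathrm{supp}(\Gamma\mathbf{t}+\mathbf{c})|$.

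Next I would dispose of the two easy cases. If $\mathbf{t}=\mathbf{0}$, the vector is $[\mathbf{0}\mid\mathbf{c}]$; being non-zero forces $\mathbf{c}\neq\mathbf{0}$, and its symplectic weight equals $\mathrm{wt}(\mathbf{c})\ge d > m(m+1)\ge m+1$. If $\mathbf{t}\neq\mathbf{0}$ but $\mathbf{c}=\mathbf{0}$, the vector $[\mathbf{t}\mid\Gamma\mathbf{t}]$ is the symplectic image of a non-identity stabilizer of $\ket{\mathrm{G}}$, and since $\ket{\mathrm{G}}$ is $m$-uniform, Proposition~\ref{prop:degree} guarantees this stabilizer, hence this vector, has (symplectic) weight at least $m+1$.

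The remaining case, $\mathbf{t}\neq\mathbf{0}$ and $\mathbf{c}\neq\mathbf{0}$, is where the hypothesis $d > m(m+1)$ is used, and it is the one calculation worth spelling out. Arguing by contradiction, suppose $\mathrm{wt}_\mathrm{s}([\mathbf{t}\mid\Gamma\mathbf{t}+\mathbf{c}])\le m$; then $\mathrm{wt}(\mathbf{t})\le m$ and $\mathrm{wt}(\Gamma\mathbf{t}+\mathbf{c})\le m$, each being at most the symplectic weight. Here $m$-regularity enters: $\mathrm{supp}(\Gamma\mathbf{t})\subseteq\bigcup_{i\in\mathrm{supp}(\mathbf{t})}N(i)$ with $|N(i)| = m$ for every $i$, so $\mathrm{wt}(\Gamma\mathbf{t})\le m\,\mathrm{wt}(\mathbf{t})\le m^2$. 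Writing $\mathbf{c} = \Gamma\mathbf{t} + (\Gamma\mathbf{t}+\mathbf{c})$ and applying the triangle inequality for Hamming weight yields $\mathrm{wt}(\mathbf{c})\le m^2 + m = m(m+1) < d$, contradicting the minimum distance of $\mathcal{C}$ since $\mathbf{c}\neq\mathbf{0}$. Combining the three cases shows $\mathcal{W}$ has minimum non-zero symplectic weight at least $m+1$, and Theorem~\ref{theorem_1} then delivers the conclusion.

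I do not expect a genuine obstacle; the argument is short. The one point that needs care is recognizing that it is $m$-\emph{regularity}, not merely the $m$-uniformity bound of Proposition~\ref{prop:degree}, that controls the weight of the $Z$-part $\Gamma\mathbf{t}$, and that pairing $\mathrm{wt}(\Gamma\mathbf{t})\le m\,\mathrm{wt}(\mathbf{t})$ with $\mathrm{wt}(\mathbf{t})\le m$ is precisely what produces the threshold $m(m+1)$ in the statement — a code with smaller minimum distance could in principle yield a low-symplectic-weight element of $\mathcal{W}$ through cancellation between $\mathbf{c}$ and $\Gamma\mathbf{t}$. (The degenerate possibility $\mathbf{t}\neq\mathbf{0}$, $\Gamma\mathbf{t}+\mathbf{c}=\mathbf{0}$ is already subsumed in the last case, since then $\mathrm{wt}(\Gamma\mathbf{t}+\mathbf{c})=0\le m$ and the same bound applies.)
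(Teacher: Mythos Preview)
Your proof is correct and follows essentially the same approach as the paper's: both reduce to Theorem~\ref{theorem_1} and bound the $Z$-part of a putative low-weight element by $m^2$ using $m$-regularity, then derive a contradiction with $d>m(m+1)$. The only difference is presentational --- you work directly in the symplectic picture with the adjacency matrix $\Gamma$, while the paper phrases the same counting in Pauli-operator language (``number of $X$s'' bounding the number of generators, each contributing at most $m$ $Z$s).
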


\begin{proof}
    The idea is to show that in the vector space $\mathcal W$ defined in the statement of Theorem~\ref{theorem_1}, there are no non-zero vectors of symplectic weight at most $m$, so that we can draw the desired conclusion from Theorem~\ref{theorem_1}. We will show the equivalent statement that for any non-identity operator $E \in \mathcal{P}_n$ of weight at most $m$, the operator $Z_\mathbf{c}E$ is not a stabilizer of $\ket{\mathrm{G}}$, for all $\mathbf{c} \in \mathcal{C}$. To this end, consider any non-identity operator $E \in \mathcal{P}_n$ of weight at most $m$. Note first that $E$ itself cannot be a stabilizer of $\ket{\mathrm{G}}$, since by the characterization of $m$-uniformity given after Definition~\ref{def:m-unif}, we have $\braket{\mathrm{G} | E | \mathrm{G}} = 0$. 
    
    Now, consider $E'= Z_\mathbf{c}E$, for some $\mathbf{c} \in \mathcal{C}$, $\mathbf{c} \ne \mathbf{0}$. Since $\mathbf{c}$ has Hamming weight strictly greater than $m(m+1)$ and the weight of $E$ is at most $m$, the number of $Z$s in $E'$ must be strictly greater than $m^2$. On the other hand, the number of $X$s in $E'$ is at most $m$, since all of these have to be contributed by $E$. Suppose that $E'$ were a stabilizer of $\ket{\mathrm{G}}$. Then, it is a product of stabilizer generators of the form $S_i = X_i Z_{N(i)}$. Since the number of $X$s in $E'$ is at most $m$, $E'$ is a product of at most $m$ generators $S_i$. Moreover, each generator $S_i$ contributes at most $|N(i)| = m$ $Z$s to $E'$. So, the total number of $Z$s in $E'$ cannot exceed $m \times m = m^2$, which contradicts the lower bound derived earlier for the number of $Z$s in $E'$. Hence, $Z_\mathbf{c}E$ cannot be a stabilizer of $\ket{\mathrm{G}}$, which proves the theorem.
\end{proof}

Since Proposition~\ref{prop:sudevan_etal} gives us a concrete construction of a graph state that satisfies the requirement of the above theorem, we have the following corollary.

\begin{corollary}\label{cor:cluster_CWS}
    Let $\ket{\mathrm{G}}$ be the cluster state associated with the $D$-dimensional rectangular lattice $\Lambda_{n_1,n_2,\ldots,n_D}$, with $n_i \ge 8$ for all $i$. {Let $n = n_1 \times n_2 \times \cdots \times n_D$.} Then, for any $[n,k,d]$ binary linear code $\mathcal{C}$ with ${d > 2D(2D+1)}$, the quantum code $\mathcal{Q} := \text{span}\bigl(\{ Z_{\mathbf{c}}\ket{\mathrm{G}}: \mathbf{c} \in \mathcal{C} \}\bigr)$ is a pure $[[n,k,2D+1]]$ QECC.
\end{corollary}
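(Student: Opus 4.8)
The plan is to derive Corollary~\ref{cor:cluster_CWS} as a direct specialization of Theorem~\ref{theorem: distanceBased}. The only gap to fill is verifying that the cluster state associated with $\Lambda_{n_1,n_2,\ldots,n_D}$ meets the two hypotheses of that theorem: that it is $2D$-uniform, and that the underlying graph is $2D$-regular on $n = n_1 \cdots n_D$ vertices. Once both are established, setting $m = 2D$ in Theorem~\ref{theorem: distanceBased} immediately yields that any $[n,k,d]$ binary linear code with $d > 2D(2D+1)$ produces a pure $[[n,k,2D+1]]$ QECC, which is exactly the claim.

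First I would invoke Proposition~\ref{prop:sudevan_etal}, which states precisely that the cluster state on $\Lambda_{n_1,\ldots,n_D}$ is $2D$-uniform provided $n_i \ge 8$ for all $i$; this is where the side-length hypothesis $n_i \ge 8$ is used. Second I would check $2D$-regularity directly from the definition of $\Lambda_{n_1,\ldots,n_D}$: each vertex $\mathbf{x} = (x_1,\ldots,x_D)$ is adjacent exactly to those $\mathbf{y}$ at Lee distance $1$, i.e., the vertices obtained by changing a single coordinate $x_i$ to $x_i \pm 1 \bmod n_i$; since each $n_i \ge 8 \ge 3$, the two neighbours $x_i+1$ and $x_i-1$ in coordinate $i$ are distinct and distinct from $x_i$, giving exactly $2$ neighbours per coordinate direction and hence $2D$ neighbours total. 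Thus the graph is $2D$-regular with $n = \prod_i n_i$ vertices. Third, I would note that the blocklength of the classical code must match the number of qubits, which is the number of lattice vertices $n = n_1 \times \cdots \times n_D$.

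With these three observations in hand, the corollary follows by applying Theorem~\ref{theorem: distanceBased} with $m = 2D$: the graph state $\ket{\mathrm{G}}$ is a $2D$-uniform graph state on $n$ qubits arising from a $2D$-regular graph on $n$ vertices, and $\mathcal{C}$ is an $[n,k,d]$ binary linear code with $d > 2D(2D+1) = m(m+1)$, so $\mathcal{Q} = \text{span}(\{Z_{\mathbf{c}}\ket{\mathrm{G}}: \mathbf{c}\in\mathcal{C}\})$ is a pure $[[n,k,2D+1]]$ QECC.

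I do not anticipate a genuine obstacle here — the corollary is essentially a packaging of Proposition~\ref{prop:sudevan_etal} and Theorem~\ref{theorem: distanceBased}. The only point requiring minor care is confirming that the lattice is genuinely $2D$-\emph{regular} (not merely of minimum degree $2D$); this needs $n_i \ge 3$ in each direction so that the forward and backward neighbours in each coordinate are distinct, and this is subsumed by the hypothesis $n_i \ge 8$ already imposed for $2D$-uniformity. Everything else is bookkeeping about blocklengths.
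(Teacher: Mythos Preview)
Your proposal is correct and matches the paper's approach exactly: the paper simply notes that Proposition~\ref{prop:sudevan_etal} furnishes a graph state satisfying the hypotheses of Theorem~\ref{theorem: distanceBased}, so the corollary follows. Your additional explicit verification of $2D$-regularity (needing $n_i \ge 3$) is a detail the paper leaves implicit, but it is indeed the right thing to check.
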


{From the Gilbert-Varshamov bound (see e.g.,\cite[Section~4.3]{Roth2006}), we know that for any fixed $D \ge 1$, there exist $[n,k,d]$ binary linear codes with $d > 2D(2D+1)$ and $k/n \to 1$ as $n \to \infty$. More precisely, we have the following additional corollary to Theorem~\ref{theorem: distanceBased}. 
}
\begin{corollary} \label{cor:GV}
For any fixed $D \ge 1$, we can obtain, via the CWS code construction, pure $[[n^D,k,2D+1]$ QECCs with $k = n^D - \lceil 2D^2(2D+1) \log_2 n\rceil$ for all $n \ge 8$. 
\end{corollary}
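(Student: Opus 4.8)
The plan is to derive the statement by pairing Corollary~\ref{cor:cluster_CWS} with the Gilbert--Varshamov (GV) existence bound for linear codes. First I would fix the quantum resource: take the $D$-dimensional rectangular lattice $\Lambda_{n,n,\ldots,n}$ all of whose $D$ side lengths equal $n \ge 8$, and let $\ket{\mathrm{G}}$ be the associated cluster state on $N := n^D$ qubits; by Proposition~\ref{prop:sudevan_etal} it is $2D$-uniform. By Corollary~\ref{cor:cluster_CWS}, it then suffices to produce, for each $n \ge 8$ for which the target dimension $k$ is positive, an $[N,k]$ binary linear code of minimum distance at least $2D(2D+1)+1$ with $k = N - \lceil 2D^2(2D+1)\log_2 n\rceil$, since feeding any such code into the CWS construction with $\ket{\mathrm{G}}$ yields a pure $[[N,k,2D+1]]$ QECC, which is exactly the claim.

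For the classical code I would invoke the GV bound in its standard form (see \cite[Section~4.3]{Roth2006}): an $[N,k]$ binary linear code of minimum distance at least $d$ exists whenever $\sum_{i=0}^{d-2}\binom{N-1}{i} < 2^{\,N-k}$. Set $d := 2D(2D+1)+1$, $t := d-2 = 2D(2D+1)-1$, and redundancy $r := N-k = \lceil 2D^2(2D+1)\log_2 n\rceil$; the argument then reduces to verifying $\sum_{i=0}^{t}\binom{N-1}{i} < 2^{\,r}$ for all (relevant) $n \ge 8$. On the right, $2^{\,r} \ge 2^{\,2D^2(2D+1)\log_2 n} = n^{\,2D^2(2D+1)} = N^{\,t+1}$. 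On the left, since $N = n^D \ge 8^D \ge t+1$, the quantities $N^i/i!$ are nondecreasing over $0 \le i \le t$, so $\sum_{i=0}^{t}\binom{N-1}{i} \le \sum_{i=0}^{t} N^i/i! \le (t+1)\,N^t/t! \le N^t$, where the last step uses $t! \ge t+1$ (valid since $t \ge 5$). Hence the left side is at most $N^t < N^{t+1} \le 2^{\,r}$, which is what we need.

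Finally, the GV bound yields a code of minimum distance at least $d = 2D(2D+1)+1 > 2D(2D+1)$, so Corollary~\ref{cor:cluster_CWS} applies verbatim and produces the asserted pure $[[n^D,k,2D+1]]$ QECC (and if one wants dimension exactly $k$, passing to a $k$-dimensional subcode only helps the distance). I do not expect a genuine obstacle: the conceptual content lies in Proposition~\ref{prop:sudevan_etal}, Theorem~\ref{theorem: distanceBased}, and Corollary~\ref{cor:cluster_CWS}, and the only thing requiring care is the elementary tail estimate above, which goes through comfortably because $r$ was chosen with a full spare factor of $N$ relative to the binomial sum $\sum_{i=0}^{d-2}\binom{N-1}{i}$. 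For the finitely many small $n \ge 8$ at which $k \le 0$ the statement is vacuous, so it should be read for all $n \ge 8$ with $n^D > \lceil 2D^2(2D+1)\log_2 n\rceil$.
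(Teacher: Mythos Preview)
Your proposal is correct and follows essentially the same approach as the paper: pair the $2D$-uniform cluster state on $\Lambda_{n,\ldots,n}$ with a binary linear code of minimum distance exceeding $2D(2D+1)$ supplied by the Gilbert--Varshamov bound, then apply Corollary~\ref{cor:cluster_CWS}. The paper simply asserts the GV step as ``straightforward'' without spelling out the tail estimate you provide, and does not explicitly address the vacuous small-$n$ cases that you flag.
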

\begin{proof}
Let $N = n^D$, $n \ge 8$. By a straightforward application of the Gilbert-Varshamov bound for binary linear codes, there exists a binary linear code, $\mathcal{C}$, of blocklength $N$, dimension $k = N - \lceil 2D(2D+1) \log_2 N\rceil$ and minimum distance $d > 2D(2D+1)$. The claimed pure QECC is obtained by applying the construction in Corollary~\ref{cor:cluster_CWS} to this code $\mathcal{C}$ and the cluster state $\ket{\mathrm{G}}$ associated with the $D$-dimensional rectangular lattice $\Lambda_{n,n,\ldots,n}$.
\end{proof}

The lower bound of $m(m+1)$ on the minimum distance of the linear code $\mathcal C$ in Theorem~\ref{theorem: distanceBased} is a sufficient, but not necessary, condition for the conclusion of the theorem to hold. Indeed, Theorem~\ref{theorem_1} shows that only a weaker lower bound of $m+1$ is necessary (note that the minimum distance of $\mathcal C$ bounds from above the minimum symplectic weight of a non-zero vector in $\mathcal W$). It would thus be useful to try to find linear codes $\mathcal C$ with minimum distance that is of the order of $m$ rather than of the order of $m^2$, from which we can still obtain the same conclusion as in Theorem~\ref{theorem: distanceBased}. Linear codes with a less stringent requirement on minimum distance will have a larger dimension $k$, which directly translates to a larger dimension for the resulting QECCs. 

Given an $m$-uniform graph state $\ket{\mathrm{G}}$, to weaken the minimum distance requirement on the tent peg code $\mathcal C$, we need to incorporate the condition on $\mathcal W$ in Theorem~\ref{theorem_1} into the design of the tent peg code. While this seems difficult to do for arbitrary $m$-uniform graph states, we show how this strategy can be successfully implemented for the cluster states obtained from $1$- and $2$-dimensional lattices. 

\subsection{Pure $[[n,k,3]]$ QECCs from $1$-dimensional cluster states}

Consider the cluster state on $n$ qubits associated with the $1$-dimensional lattice $\Lambda_n$. It can be verified that such a cluster state is $2$-uniform for $n\geq 5$. Note that, in this case, Corollary~\ref{cor:cluster_CWS} requires a minimum distance of $d > 6$ for the tent peg code. Instead, via a direct application of Theorem~\ref{theorem_1}, we will construct a pure $[[n,k,3]]$ QECC by requiring the $[n,k]$ tent peg code $\mathcal{C}$ to be such that $\mathcal{W}$ does not contain a nonzero vector of symplectic weight up to 2. Equivalently, the code $\mathcal{C}$ should be such that we do not get a vector in $\mathcal{V}_\mathcal{S}$ by adding a binary vector of symplectic weight at most 2 to a vector $\mathbf{c}^\prime \in \mathcal{C}^\prime$. This demands the code $\mathcal{C}$ to have a minimum distance of at least 3 and to not contain codewords of the following form:
  \begin{gather}
  \text{\rule[0.25em]{1cm}{0.5pt}} \, 1 \ast \ast \, 1 \,  \text{\rule[0.25em]{1cm}{0.5pt}}   \label{pattern1} \\
  \text{\rule[0.25em]{1cm}{0.5pt}} \, 1 \ast 0 \ast 1 \, \text{\rule[0.25em]{1cm}{0.5pt}}   \label{pattern2} \\
  \text{\rule[0.25em]{1cm}{0.5pt}} \, 1 \ast 1  \, \text{\rule[0.25em]{1cm}{0.5pt}}  \, 1 \ast 1 \, \text{\rule[0.25em]{1cm}{0.5pt}}   \label{pattern3} \\
  \text{\rule[0.25em]{1cm}{0.5pt}} \, 1 \ast 1  \, \text{\rule[0.25em]{1cm}{0.5pt}}  \, \ast \, \text{\rule[0.25em]{1cm}{0.5pt}} \label{pattern4}
 \end{gather}
Here, each `\rule[0.25em]{1cm}{0.5pt}' represents a run of 0s, possibly empty, and each $\ast$ can be a $0$ or a $1$. Also, cyclic shifts of such codewords are not allowed, i.e., the coordinates of these codewords are assumed to lie on a circle. We give a construction of a tent peg code satisfying the above conditions.

Let $r \ge 2$ be an integer. Set $n = 2^{2r}-1$, and $b = (2^{2r}-1)/3$. 
Consider the binary code $\mathcal{C}_r$ with parity-check matrix
\begin{equation}
H = \begin{bmatrix} 
1 & \alpha & \alpha^2 & \alpha^3 & \cdots & \alpha^{n-1} \\
1 & \alpha^b & {(\alpha^b)}^2 & {(\alpha^b)}^3 & \cdots & {(\alpha^b)}^{n-1} 
\end{bmatrix}
\label{def:H}
\end{equation}
where $\alpha$ is a primitive element in $\mathbb{F}_{2^{2r}}$. By choice of $b$, the second row of the $H$ matrix above is simply
$$
[1 \ \ \ \alpha^b \ \ \ \alpha^{2b} \ \ \ 1 \ \ \ \alpha^b \ \ \ \alpha^{2b} \ \ \ \cdots \ \ \ 1 \ \ \ \alpha^b \ \ \ \alpha^{2b}]
$$
We also have $1 + \alpha^b + \alpha^{2b} = \frac{\alpha^{3b} - 1} {\alpha^b - 1} = 0$, so $\alpha^b$ has $x^2+x+1$ as its minimal polynomial over $\mathbb{F}_2$.

The code $\mathcal{C}_r$ is the cyclic code generated by $g(x) = M_{\alpha}(x) M_{\alpha^b}(x)$, where $M_{\alpha}(x)$ and $M_{\alpha^b}(x)$ are the minimal polynomials over $\mathbb{F}_2$ of $\alpha$ and $\alpha^b$, respectively. Thus, $\deg g(x) = 2r + 2$, so that 
\begin{equation*}
\begin{split}
\dim(\mathcal{C}_r) &= n-\deg g(x) \\
&= 2^{2r}-2r-3.
\end{split}
\end{equation*}
Moreover, $d_{\min}(\mathcal{C}_r) \ge 3$, as the row of the $H$ matrix consists of distinct and nonzero elements of $\mathbb{F}_{2^{2r}}$. 

\begin{remark}
Note that $b = (4^r-1)/3 = \sum_{i=0}^{r-1} 4^i \equiv r \pmod{3}$, since $4^i \equiv 1 \pmod{3}$. So, if we assume that $r$ is not a multiple of $3$, we get $b \equiv 1 \text{ or } 2 \pmod{3}$. In this case, we obtain that 
$$
\mathbf{c} = (1,\underbrace{0,\ldots,0}_{b-1 \, 0\text{s}},1,\underbrace{0,\ldots,0}_{b-1 \, 0\text{s}},1,\underbrace{0,\ldots,0}_{b-1 \, 0\text{s}})
$$
is a codeword of $\mathcal{C}_r$, since we have 
\begin{equation*}
\begin{split}
    H \mathbf{c}^T &= [1+\alpha^b + \alpha^{2b}, \ \ 1 + ({\alpha^b})^b + ({\alpha^b})^{2b}]^T\\ &= [0 \ \ 0]^T.
\end{split}
\end{equation*}
Indeed, the fact that $b \equiv 1 \text{ or } 2 \pmod{3}$ implies that $\{({\alpha^b})^b, ({\alpha^b})^{2b}\} = \{\alpha^b,\alpha^{2b}\}$, so that $1 + ({\alpha^b})^b + ({\alpha^b})^{2b} = 0$. Thus, $\mathcal{C}_r$ contains a codeword of weight equal to $3$, which proves that $d_{\min}(\mathcal{C}_r) = 3$ whenever $r$ is not a multiple of $3$.

\end{remark}

\medskip

Our construction of pure QECCs from $1$-dimensional cluster states is summarized in the theorem stated next. In the statement of the theorem, $\Log$ stands for the discrete logarithm in the field $\mathbb{F}_{2^{2r}}$ with respect to a fixed primitive element $\alpha$, defined as $\Log(\beta) = \ell \ \Longleftrightarrow \ \beta = \alpha^{\ell}$.

\begin{theorem}\label{theorem3}
Let $\ket{\mathrm{G}}$ be the cluster state associated with the $1$-dimensional rectangular lattice $\Lambda_n$, {where $n = 2^{2r}-1$ for some $r \ge 2$}. {To construct the $H$ matrix in \eqref{def:H}, choose a primitive element $\alpha \in \mathbb{F}_{2^{2r}}$ with the property that $\Log(1+\alpha) \not\equiv 2 \pmod{3}$.} Then, for the binary cyclic code $\mathcal{C}_r$ {with parity-check matrix $H$}, the quantum code  $\mathcal{Q}_r := \text{span}\bigl(\{ Z_{\mathbf{c}}\ket{\mathrm{G}}: \mathbf{c} \in \mathcal{C}_r \}\bigr)$ is a pure $[[2^{2r}-1,2^{2r}-2r-3,3]]$ QECC.
\end{theorem}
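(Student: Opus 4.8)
The plan is to invoke Theorem~\ref{theorem_1}. Since $\ket{\mathrm{G}}$ is the $2$-uniform cluster state of the cycle $\Lambda_n$ with $n=2^{2r}-1\ge 15$, and since the computations $\dim\mathcal{C}_r=2^{2r}-2r-3$ and $d_{\min}(\mathcal{C}_r)\ge 3$ have already been carried out, it remains only to show that $\mathcal{W}=\mathcal{V}_{\mathcal{S}}+\mathcal{C}_r'$ contains no non-zero vector of symplectic weight at most $2$. By the equivalent description of this condition given above --- namely, that $\mathcal{C}_r$ must have minimum distance at least $3$ and must avoid the shapes \eqref{pattern1}--\eqref{pattern4} and all their cyclic shifts --- it suffices to rule out one representative of each forbidden shape (cyclic shifts are automatically codewords since $\mathcal{C}_r$ is cyclic).

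The workhorse is the description of $\mathcal{C}_r$ through its zeros: a word with support $T\subseteq\{0,1,\ldots,n-1\}$ lies in $\mathcal{C}_r$ if and only if $\sum_{i\in T}\alpha^i=0$ and $\sum_{i\in T}\alpha^{bi}=0$. Two structural facts make this manageable. First, $\omega:=\alpha^b$ is a primitive cube root of unity lying in the subfield $\mathbb{F}_4\subseteq\mathbb{F}_{2^{2r}}$ (here $\alpha^{3b}=\alpha^n=1$ and $2\mid 2r$), so $\omega^2=1+\omega$; hence the second check depends only on the residues modulo $3$ of the elements of $T$: writing $n_j:=|\{i\in T: i\equiv j \pmod{3}\}|$, one has $n_0+n_1\omega+n_2\omega^2=0$ iff $n_0\equiv n_1\equiv n_2\pmod{2}$. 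Second, because $\alpha$ is primitive of multiplicative order $n=2^{2r}-1>3$, we have $\alpha^t\ne 1$ for $0<t<n$ and $1+\alpha+\alpha^2\ne 0$.

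With this in hand I would run through the finitely many shapes in \eqref{pattern1}--\eqref{pattern4}; up to a cyclic shift each is a word of weight at most $6$ supported on one or two short blocks with a single ``gap length'' parameter, and weight-$2$ words are killed by $d_{\min}\ge 3$. For every remaining shape \emph{except one}, a single obstruction applies. Either the residue counts $(n_0,n_1,n_2)$ fail to share a common parity, so the second check fails --- this handles the weight-$3$ instances of \eqref{pattern1} and \eqref{pattern2}, the weight-$4$ ``three-consecutive-plus-one-far'' instance of \eqref{pattern4}, and the weight-$5$ instances of \eqref{pattern3}. Or else the first check, after regrouping, factors into terms of the form $(1+\alpha)$, $(1+\alpha+\alpha^2)$, or $(1+\alpha^t)$ with $0<t<n$, none of which vanishes --- this handles the ``all-consecutive'' words of weights $3$ and $4$ (via $(1+\alpha+\alpha^2)$ and $(1+\alpha)^3$), the weight-$4$ instance of \eqref{pattern2} (via $1+\alpha+\alpha^3+\alpha^4=(1+\alpha)^2(1+\alpha+\alpha^2)$), and the weight-$4$ and weight-$6$ instances of \eqref{pattern3} (via $(1+\alpha)^2(1+\alpha^t)$ and $(1+\alpha+\alpha^2)(1+\alpha^t)$).

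The sole shape escaping both generic obstructions --- and the reason the theorem carries a hypothesis on $\alpha$ --- is the weight-$3$ instance of \eqref{pattern4} of the form $1\,0\,1\,0\cdots 0\,1$, i.e.\ (after a cyclic shift placing the ``$101$'' at positions $\{0,2\}$) a support $\{0,2,c\}$ with $c\ge 3$. For such a word the first check reads $1+\alpha^2+\alpha^c=0$, i.e.\ $\alpha^c=(1+\alpha)^2$, which forces $c\equiv 2\,\Log(1+\alpha)\pmod{n}$; and because $3\mid n$, the parity requirement $n_0=n_1=n_2=1$ is exactly $c\equiv 1\pmod{3}$, hence $2\,\Log(1+\alpha)\equiv 1\pmod 3$, hence $\Log(1+\alpha)\equiv 2\pmod 3$. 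Thus a codeword of this shape exists precisely when $\Log(1+\alpha)\equiv 2\pmod 3$, which the hypothesis forbids. (When the first check holds one also has $c\notin\{0,1,2\}$, since $1+\alpha^2\notin\{1,\alpha,\alpha^2\}$, so the shape is genuinely of weight $3$.) This verifies the hypothesis of Theorem~\ref{theorem_1}, yielding the claimed pure $[[2^{2r}-1,\,2^{2r}-2r-3,\,3]]$ QECC. I expect the shape-by-shape bookkeeping to be the only real labour; the substance lies in the $\mathbb{F}_4$-valued second parity check and the isolation of this one exceptional support pattern.
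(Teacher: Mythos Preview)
Your proposal is correct and follows essentially the same route as the paper's proof: reduce to the pattern list \eqref{pattern1}--\eqref{pattern4} via Theorem~\ref{theorem_1}, use the second row of $H$ (equivalently, your $\mathbb{F}_4$-valued residue-parity test) to eliminate most shapes, factor the first-row check over $\mathbb{F}_{2^{2r}}$ for the rest, and isolate the single surviving shape $\{0,2,c\}$ that forces the hypothesis $\Log(1+\alpha)\not\equiv 2\pmod 3$. The only difference is organizational---you group the cases by which obstruction applies, whereas the paper runs through the four patterns in order (Claims~\ref{claim:1}--\ref{claim:4})---and your explicit formulation of the second check as $n_0\equiv n_1\equiv n_2\pmod 2$ is a slightly cleaner way of saying what the paper verifies case-by-case.
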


A proof of this theorem is given in Appendix \ref{appendix:theorem3}. For the construction in the theorem to work, we need a primitive element $\alpha$ in $\mathbb{F}_{2^{2r}}$ satisfying $\Log(1+\alpha)\not\equiv 2 \pmod 3$. {Such an element always exists for $r\geq 2$, as stated in the proposition below. A proof is given in Appendix~\ref{conjecture_proof}.}

\begin{proposition} \label{resolved_conjecture}
For any integer $r\geq 2$, there exists a primitive element $\alpha \in \mathbb{F}_{2^{2r}}$ such that $\Log(1+\alpha) \not\equiv 2 \pmod 3$.
\end{proposition}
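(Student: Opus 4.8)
The plan is to count, over all primitive elements $\alpha \in \mathbb{F}_{2^{2r}}$, how many fail the condition $\Log(1+\alpha) \not\equiv 2 \pmod 3$, and to show this count is strictly smaller than the total number of primitive elements. Write $q = 2^{2r}$. The number of primitive elements is $\varphi(q-1)$. The ``bad'' elements are the primitive $\alpha$ for which $1+\alpha = \alpha^{\ell}$ with $\ell \equiv 2 \pmod 3$; since $3 \mid q-1$ (because $q-1 = 4^r - 1 \equiv 0 \pmod 3$), the condition $\ell \equiv 2 \pmod 3$ is equivalent to saying $1+\alpha$ lies in a particular coset of the index-$3$ subgroup of cubes in $\mathbb{F}_q^\times$. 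Equivalently, $1+\alpha = \gamma \beta^3$ for a fixed non-cube $\gamma$ (of discrete log $\equiv 2$) and some $\beta \in \mathbb{F}_q^\times$. So I would estimate
$$
\#\{\alpha \text{ primitive} : \Log(1+\alpha) \equiv 2 \pmod 3\} \le \#\{\alpha \in \mathbb{F}_q^\times \setminus\{1\} : 1+\alpha \in \gamma\,(\mathbb{F}_q^\times)^3\},
$$
and show the right-hand side is less than $\varphi(q-1)$.

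The natural tool for the right-hand side is a character sum. The indicator that a nonzero element $u$ is in the coset $\gamma (\mathbb{F}_q^\times)^3$ is $\frac{1}{3}\sum_{\chi^3 = \varepsilon} \overline{\chi(\gamma)}\,\chi(u)$, where the sum runs over the three cubic multiplicative characters of $\mathbb{F}_q$ (including the trivial one $\varepsilon$). Substituting $u = 1+\alpha$ and summing over $\alpha \in \mathbb{F}_q$, the trivial-character term contributes roughly $q/3$, while each non-trivial cubic character $\chi$ contributes $\frac{1}{3}\overline{\chi(\gamma)}\sum_{\alpha}\chi(1+\alpha) = \frac{1}{3}\overline{\chi(\gamma)}\bigl(\sum_{v \in \mathbb{F}_q}\chi(v) - \chi(0)\bigr) = 0$ after a shift — so in fact the count of all $\alpha \in \mathbb{F}_q$ with $1+\alpha$ in that coset is exactly $(q-1)/3$ up to boundary corrections for $\alpha \in \{0,1\}$ and the value $1+\alpha = 0$. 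Thus the number of bad elements among \emph{all} of $\mathbb{F}_q$ is at most about $q/3$. The point is then to compare $q/3$ against $\varphi(q-1)$: if $\varphi(q-1) > q/3$, we are done immediately, since not all primitive elements can be bad.

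The main obstacle will be the regime where $\varphi(q-1)$ is small relative to $q$ — i.e.\ when $q-1 = 4^r-1$ has many small prime factors — so that the crude bound $\varphi(q-1) > q/3$ may fail. In that case I would sharpen the argument by inserting the primitivity indicator $\frac{\varphi(q-1)}{q-1}\sum_{d \mid q-1}\frac{\mu(d)}{\varphi(d)}\sum_{\mathrm{ord}(\psi)=d}\psi(\alpha)$ into the sum and bounding the resulting mixed multiplicative-character sums $\sum_{\alpha}\psi(\alpha)\,\chi(1+\alpha)$ by the Weil bound $\sqrt{q}$. This gives an estimate of the form: number of bad primitive elements $\le \frac{1}{3}\varphi(q-1) + O\!\bigl(2^{\omega(q-1)}\sqrt{q}\bigr)$, where $\omega(\cdot)$ counts distinct prime divisors. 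Since $2^{\omega(q-1)} = q^{o(1)}$, for all sufficiently large $r$ this is strictly below $\varphi(q-1)$ (indeed below roughly $\tfrac13\varphi(q-1) + q^{1/2+o(1)}$), and the small remaining cases $r = 2, 3, \ldots, r_0$ can be checked by direct computation, exhibiting an explicit good $\alpha$ in each. A cleaner alternative, if one wants to avoid character-sum machinery entirely, is to note that it suffices to find \emph{one} primitive $\alpha$ with $1+\alpha$ a cube (then $\Log(1+\alpha) \equiv 0$), and to invoke the known result that every element of $\mathbb{F}_q$ (here, every cube) is a sum of two primitive elements for $q$ large — but that too ultimately rests on a Weil-bound estimate, so the character-sum route above is the honest path, with the small-$r$ cases dispatched by hand.
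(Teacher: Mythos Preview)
Your overall strategy---character sums plus Weil---is sound and close to the paper's, but there is a genuine slip in the reduction. You assert that $\Log(1+\alpha)\equiv 2\pmod 3$ is equivalent to $1+\alpha$ lying in a \emph{fixed} coset $\gamma(\mathbb{F}_q^\times)^3$ independent of $\alpha$. That is not right, because $\Log$ here is taken to base $\alpha$ itself: if $1+\alpha=\alpha^{\ell}$ with $\ell\equiv 2$, then $1+\alpha\in\alpha^{2}(\mathbb{F}_q^\times)^3$, and which of the two non-cube cosets $\alpha^{2}$ represents depends on $\alpha$. A correct reformulation is that $\ell\equiv 2$ is equivalent to $(1+\alpha)\alpha^{-2}$, or equivalently $\alpha(1+\alpha)=\alpha+\alpha^{2}$, being a cube. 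With this fix your argument still runs: replace $\chi(1+\alpha)$ by $\chi(\alpha+\alpha^{2})$ in the character sum, apply Weil to the degree-$2$ polynomial $x(1+x)$, and you obtain a bad-count of $\tfrac{1}{3}\varphi(q-1)+O\bigl(2^{\omega(q-1)}\sqrt{q}\bigr)$ as claimed. (By contrast, the condition $\ell\equiv 0$, i.e.\ $1+\alpha$ a cube, \emph{is} basepoint-independent, which is why your ``cleaner alternative'' avoids this pitfall.)

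Your cleaner alternative---produce a single primitive $\alpha$ with $1+\alpha$ a cube---is exactly what the paper does. It counts $z\in\mathbb{F}_q$ with $f(z)=1+z^{3}$ primitive (for then $\alpha:=1+z^{3}$ has $1+\alpha=z^{3}$), using the standard primitivity indicator and the Weil bound for $\sum_{z}\chi_d(1+z^{3})$. The payoff of this formulation is that the factor $\varphi(q-1)/(q-1)$ appears in both the main and error terms and cancels, so positivity reduces to the clean inequality $2^{r-1}\ge 2^{\omega(q-1)}$, i.e.\ $r>\omega(4^{r}-1)$. The paper then dispatches this for \emph{all} $r\ge 3$ at once by the elementary observation that $3\cdot5\cdot7=105>4^{3}$ and every further odd prime exceeds $4$, so an odd integer with $\omega\ge r$ must exceed $4^{r}$; hence $\omega(4^{r}-1)<r$. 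This replaces your ``sufficiently large $r$ plus check small cases'' with a uniform argument, leaving only $r=2$ to be handled by exhibiting $1+\alpha=\alpha^{4}$.
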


At blocklengths $n = 2^{2r}-1$, $r \ge 2$, the parameters of the code family obtained in Theorem~\ref{theorem3} are better than those of the $[[2^{2r}-1, 2^{2r}- 4r - 1, 3]]$ quantum Hamming code \cite{eczoo_quantum_hamming_css}, and are the same as those of a putative code obtained from the $[[2^{2r},2^{2r}-2r-2,3]]$ Gottesman code \cite{Gottesman_code},\cite{eczoo_quantum_gottesman} by shortening at one qubit. 

For blocklengths $15$, $63$ and $255$, the dimensions of codes obtained in Theorem~\ref{theorem3} match the dimensions of the best known codes of minimum distance 3 in the table of QECCs \cite{Grassl:codetables}, which only record codes of blocklength up to $256$.

Since the QECCs constructed in Theorem~\ref{theorem3} are pure codes, they need to respect the quantum Hamming bound \cite[Section~10.3.4]{Nielsen_Chuang_2010}:
\begin{equation*}
 Q := \frac{\sum_{i=0}^{\lfloor (d-1)/2 \rfloor} \binom{n}{i} 3^i}{2^{n-k}} \leq 1.
\end{equation*}
For the codes $\mathcal{Q}_r$ in Theorem~\ref{theorem3}, as $r\rightarrow \infty$, we have $Q \rightarrow 3/4$.


\subsection{Pure $[[n^2,k,5]]$ QECCs from $2$-dimensional cluster states}

When we start with a $2$-dimensional cluster state $\Lambda_{n,n}$ with $n \ge 8$, Corollary~\ref{cor:GV} (applied with $D=2$) shows that we can obtain pure $[[n^2,k,5]]$ QECCs with $k = n^2 - \lceil{40 \log_2 n}\rceil$. In this section, we show that for integers $n$ of the form $2^{4r}-1$, $r \ge 1$, we can do significantly better using $2$-dimensional cyclic codes. Note that for $n = 2^{4r}-1$, the $[[n^2,k,5]]$ QECCs from Corollary~\ref{cor:GV} have dimension $k = n^2 - 160r + O(1)$. The $[[n^2,k,5]]$ QECCs we construct via Theorem~\ref{thm:2D} in this section have dimension $k = n^2 - 32r - O(1)$.

Let $\mathbb{F}_{2}^{n \times n}$ denote the $n^2$-dimensional vector space over $\mathbb{F}_2$ whose elements are $n \times n$ arrays of the form $[a_{i,j}]_{0 \leq i,j \leq n-1}$. We identify such arrays with bivariate polynomials $\sum_{i=0}^{n-1} \sum_{j=0}^{n-1} a_{i,j} x^i y^j$, resulting in the vector space isomorphism $\mathbb{F}_2^{n \times n} \longleftrightarrow \mathbb{F}_2[x,y] \slash \langle x^n-1,~y^n-1 \rangle$.

A $2$-dimensional cyclic code is a subspace $C$ of $\mathbb{F}_{2}^{n \times n}$ with the property that if an array $[a_{i,j}]_{0 \leq i,j \leq n-1}$ is in $C$, then for any pair of integers $s,t$, the array $[a_{i+s,j+t}]_{0 \leq i,j \leq n-1}$ is also in $C$, where $i+s$ and $j+t$ are taken modulo $n$. By a routine argument, a subspace $C \subset \mathbb{F}_2^{n \times n}$ is $2$-dimensional cyclic iff $C$ is an ideal in the polynomial ring $\mathbb{F}_2[x,y] \slash \langle x^n-1,~y^n-1 \rangle$  \cite{guneri_thesis}.

Let $\alpha$ be a primitive element of $\mathbb{F}_{2^s}$ (for some $s \ge 1$), and let $n = 2^s-1$. Consider the set
 $$\Omega=\{(\alpha^i,\alpha^j)~:~0 \leq i,j \leq n-1\}.$$
 Note that if a polynomial $f(x,y) \in \mathbb{F}_2[x,y]$ vanishes at $(\alpha^i,\alpha^j)$ then it also vanishes at $(\alpha^{2i},\alpha^{2j}),~(\alpha^{2^2 i}, \alpha^{2^2 j}),$ $ \dots, (\alpha^{2^{m-1} i},\alpha^{2^{m-1} j})$, where $m$ is the least common multiple of $[\mathbb{F}_2(\alpha^i):\mathbb{F}_2]$ and $[\mathbb{F}_2(\alpha^j):\mathbb{F}_2]$. Let $[(\alpha^i,\alpha^j)]:=\{ (\alpha^i,\alpha^j),(\alpha^{2i},\alpha^{2j}),(\alpha^{2^2 i}, \alpha^{2^2 j}),\dots,(\alpha^{2^{m-1} i},\alpha^{2^{m-1} j})\}$ be the $\mathbb{F}_2$-conjugacy class of $(\alpha^i,\alpha^j)$ in $\Omega$. For $U \subseteq \Omega$, let $[U]$ denote the union of $\mathbb{F}_2$ conjugacy classes of elements in $U$.
 
 \begin{definition}[Definition~2.4 in \cite{guneri_thesis}]
     For $U \subseteq \Omega$, the ideal corresponding to $U$ is defined as 
     $$I(U) := \{f(x,y) \in \mathbb{F}_2[x,y]~:~f(a)=0~ \forall \, a \in U\}.$$
      \end{definition}

Since $n = 2^s - 1$, we have that $x^n-1, y^n-1 \in I(U)$ for any $U \subset \Omega$, and therefore, $C_U := I(U)\slash \langle x^n-1,~y^n-1 \rangle$ is a $2$-dimensional cyclic code. The following result, due to G\"uneri \cite{guneri_thesis}, gives the dimension, over $\mathbb{F}_2$, of the code $C_U$.
     
\begin{theorem}[Theorem~2.17 in \cite{guneri_thesis}] \label{thm:guneri}
    For $U \subseteq \Omega$, the dimension, over $\mathbb{F}_2$, of the $2$-dimensional cyclic code $C_U:=I(U)\slash \langle x^n-1,~y^n-1 \rangle$ is given by
    $$\dim_{\mathbb{F}_2}(C_U)= n^2 - \bigl|[U]\bigr|.$$
\end{theorem}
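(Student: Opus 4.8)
The plan is to prove Theorem~\ref{thm:guneri} (Theorem~2.17 in \cite{guneri_thesis}) by exhibiting an explicit $\mathbb{F}_2$-basis for a direct complement of $C_U$ in $\mathbb{F}_2[x,y]/\langle x^n-1, y^n-1\rangle$, indexed by the elements of $[U]$. The natural starting point is to pass to the algebraic closure, or rather to a large enough extension field $\mathbb{F}_{2^M}$ containing all the coordinates of points in $\Omega$ (here $M = s$ suffices since $\Omega \subseteq \mathbb{F}_{2^s} \times \mathbb{F}_{2^s}$). Over such a field the quotient ring $R := \mathbb{F}_{2^M}[x,y]/\langle x^n-1, y^n-1\rangle$ splits completely: since $\gcd(n,2)=1$, both $x^n-1$ and $y^n-1$ factor into distinct linear factors, and by the Chinese Remainder Theorem $R \cong \prod_{(\beta,\gamma)\in\Omega} \mathbb{F}_{2^M}$, with the isomorphism sending $f \mapsto (f(\beta,\gamma))_{(\beta,\gamma)\in\Omega}$ — this is just multivariate Lagrange interpolation. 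Under this identification, the ideal $I(U)\cdot \mathbb{F}_{2^M}[x,y]$ corresponds to the coordinate subspace where all components indexed by $U$ (equivalently by $[U]$, since $I(U)$ already forces vanishing on the full conjugacy closure) are zero. Hence $\dim_{\mathbb{F}_{2^M}}$ of the extended code is $n^2 - |[U]|$.

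The second and more delicate step is descent from $\mathbb{F}_{2^M}$ back to $\mathbb{F}_2$. The point is that $C_U$ is defined over $\mathbb{F}_2$, and one must argue that $\dim_{\mathbb{F}_2}(C_U) = \dim_{\mathbb{F}_{2^M}}(C_U \otimes_{\mathbb{F}_2} \mathbb{F}_{2^M})$. This follows from the general fact that for a linear code $C$ over $\mathbb{F}_q$, extension of scalars to $\mathbb{F}_{q^M}$ preserves dimension — concretely, a basis of $C$ over $\mathbb{F}_2$ remains a basis of $C\otimes \mathbb{F}_{2^M}$ over $\mathbb{F}_{2^M}$ because linear independence is detected by the non-vanishing of a minor, which is unaffected by field extension, and the spanning statement is immediate. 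So it remains only to check that $C_U \otimes_{\mathbb{F}_2}\mathbb{F}_{2^M}$ coincides with the extended ideal $I(U)\cdot \mathbb{F}_{2^M}[x,y] / \langle x^n-1,y^n-1\rangle$; this is where the conjugacy-closure hypothesis does its work. A polynomial $f \in \mathbb{F}_2[x,y]$ vanishes at $(\alpha^i,\alpha^j)$ if and only if it vanishes at the whole $\mathbb{F}_2$-conjugacy class $[(\alpha^i,\alpha^j)]$, because the $2$-power Frobenius fixes the $\mathbb{F}_2$-coefficients of $f$; conversely, a set of interpolation conditions indexed by $[U]$ cuts out a subspace defined over $\mathbb{F}_2$ precisely because $[U]$ is Frobenius-stable. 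Combining: the $\mathbb{F}_2$-subspace of $R_{\mathbb{F}_{2^M}}$ cut out by vanishing on $[U]$ descends to $C_U$, and its dimension over $\mathbb{F}_{2^M}$ is $n^2 - |[U]|$, hence $\dim_{\mathbb{F}_2}(C_U) = n^2 - |[U]|$.

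The main obstacle I anticipate is making the descent argument clean and rigorous rather than hand-wavy — specifically, pinning down that the code cut out over $\mathbb{F}_{2^M}$ by the Frobenius-stable set of evaluation functionals $\{f \mapsto f(a) : a \in [U]\}$ is exactly the $\mathbb{F}_{2^M}$-span of its $\mathbb{F}_2$-rational points, and that this $\mathbb{F}_2$-point set is $C_U$. One tidy way to organize this: note the evaluation map $\mathrm{ev}: R \to \prod_{\Omega}\mathbb{F}_{2^M}$ is $\mathbb{F}_2$-linear and intertwines the Frobenius $\phi(z)=z^2$ acting diagonally on the target (permuting coordinates according to the conjugacy-class structure on $\Omega$) with the Frobenius on $R$ that fixes exactly $\mathbb{F}_2[x,y]/\langle x^n-1,y^n-1\rangle$. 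Then $C_U = \{f \in \mathbb{F}_2[x,y]/\langle\cdots\rangle : \mathrm{ev}(f)_a = 0 \ \forall a \in U\}$, and since $\mathrm{ev}(f)$ is $\phi$-stable for $f$ over $\mathbb{F}_2$, vanishing on $U$ forces vanishing on $[U]$; counting free coordinates gives $n^2 - |[U]|$ as the $\mathbb{F}_2$-dimension directly, because the $\phi$-fixed vectors supported off $[U]$ form an $\mathbb{F}_2$-space of exactly that dimension (one $\mathbb{F}_2$-dimension per conjugacy class in $\Omega\setminus[U]$, via trace-type coordinates). If one prefers to avoid even this much Frobenius bookkeeping, an alternative is to cite directly the standard dimension formula for cyclic/abelian codes in terms of cyclotomic cosets and simply note that $[U]$ is, by construction, a union of such cosets; but since the theorem is attributed to \cite{guneri_thesis}, I would in the actual write-up defer to that reference for the detailed descent and only sketch the evaluation-map picture above.
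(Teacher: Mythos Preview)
The paper does not prove this theorem at all; it is quoted verbatim as Theorem~2.17 from G\"uneri's thesis \cite{guneri_thesis} and used as a black box to compute $\dim_{\mathbb{F}_2}(C_U)$ in Claim~\ref{claim:dimCU}. There is therefore no ``paper's own proof'' to compare against.

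That said, your sketch is a correct outline of the standard argument for the dimension of an abelian (multi-dimensional cyclic) code in terms of its zero set. The two ingredients you identify are exactly right: (i) since $\gcd(n,2)=1$, the ring $\mathbb{F}_{2^s}[x,y]/\langle x^n-1,y^n-1\rangle$ is isomorphic via evaluation to $\prod_{a\in\Omega}\mathbb{F}_{2^s}$, under which vanishing on $[U]$ cuts out a subspace of $\mathbb{F}_{2^s}$-dimension $n^2-|[U]|$; and (ii) descent to $\mathbb{F}_2$ works because $[U]$ is Frobenius-stable by construction. Your Frobenius bookkeeping in the final paragraph is the cleanest way to make (ii) rigorous: on each $\mathbb{F}_2$-conjugacy orbit of size $m$ in $\Omega$, the Frobenius-compatible tuples are parametrized by $\mathbb{F}_{2^m}$, contributing exactly $m$ to the $\mathbb{F}_2$-dimension, so the image of the evaluation map onto the $[U]$-coordinates has $\mathbb{F}_2$-dimension $|[U]|$ and rank--nullity finishes.

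One minor point: you could dispense with the extension-of-scalars-and-descent framing entirely and argue directly over $\mathbb{F}_2$ via the evaluation map $\mathbb{F}_2[x,y]/\langle x^n-1,y^n-1\rangle \to \prod_{a\in[U]}\mathbb{F}_{2^s}$, computing the $\mathbb{F}_2$-dimension of its image orbit-by-orbit as above; this avoids having to separately verify that $C_U\otimes\mathbb{F}_{2^s}$ coincides with the extended ideal. Either route is fine, and your closing remark that one could simply cite the abelian-code dimension formula in terms of cyclotomic cosets is also appropriate given that the paper itself treats the result as known.
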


We are now in a position to describe our construction of pure $[[n^2,k,5]]$ QECCs. We make use of $2$-dimensional cyclic codes constructed over the fields $\mathbb{F}_{2^{4r}}$, $r \ge 1$. Note that $15|(2^{4r}-1)$, so that third and fifth roots of unity exist in $\mathbb{F}_{2^{4r}}$. Let $\gamma$ and $\beta$ respectively denote a third and fifth root of unity in $\mathbb{F}_{2^{4r}}$. Let us choose $U \subset \Omega$ to be 

\begin{equation} \label{def:U}
\begin{split} 
    U &=\{(1,1),~(\alpha,\alpha),~(\alpha,\alpha^{-1}),~(\alpha,1),~(\alpha^3,1),~(1,\alpha),\\
    &{\ \ \ \ }~(1,\alpha^3),~(1,\gamma),~(\beta,\beta^2),~(\alpha,\alpha^2),~(\alpha^3,\alpha^{-3})\}.
\end{split}
\end{equation}

Take $n = 2^{4r}-1$, and let $C_U \subset \mathbb{F}_2^{n \times n}$ be the $2$-dimensional cyclic code defined by $U$. Note that $f(x,y)=(x+y)(1+x)(1+y)(1+xy)(x^2+y)=x^3+x^4+x^5 y+x^5 y^2+xy+x^3 y^2+x^2 y^2+x^4 y^3+y^2+y^3+x y^4+x^2 y^4 \in I(U)$. Hence, the minimum distance of $C_U$ does not exceed $12$, which means that the code does not satisfy the condition $d_{\min} > 2D(2D+1)$, so that Corollaries~\ref{cor:cluster_CWS} and \ref{cor:GV} do not apply to this code.

\begin{claim} \label{claim:dimCU}
    We have $\dim_{\mathbb{F}_2}(C_U) = {(2^{4r}-1)}^2 - 32r - 7$.
\end{claim}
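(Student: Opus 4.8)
\textbf{Proof proposal for Claim~\ref{claim:dimCU}.}

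By Theorem~\ref{thm:guneri}, $\dim_{\mathbb{F}_2}(C_U) = n^2 - |[U]|$, so the entire task reduces to computing the cardinality of $[U]$, the union of the $\mathbb{F}_2$-conjugacy classes of the eleven points listed in \eqref{def:U}. The plan is to enumerate these eleven conjugacy classes, determine the size of each, and then carefully account for overlaps between classes so that the union is not overcounted. To size an individual class $[(\alpha^i,\alpha^j)]$, I would use the fact (recalled just before the definition of $I(U)$) that its cardinality equals $m = \mathrm{lcm}([\mathbb{F}_2(\alpha^i):\mathbb{F}_2],[\mathbb{F}_2(\alpha^j):\mathbb{F}_2])$, i.e. the length of the orbit of $(i,j)$ under the doubling map $(i,j)\mapsto(2i,2j)$ modulo $n = 2^{4r}-1$. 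Here it is crucial that $\alpha$ has multiplicative order exactly $n=2^{4r}-1$, so that $\alpha^i$ lies in $\mathbb{F}_{2^d}$ precisely when $d \mid 4r$ and $\tfrac{n}{\gcd(i,n)}\mid 2^d-1$; the small exponents appearing (powers of $\alpha$ with exponents $0,1,2,3$, plus the roots of unity $\gamma$ of order $3$ and $\beta$ of order $5$) each generate a controlled subfield, so each class size will be a divisor of $4r$ times a small constant, and in fact I expect each class to have size either $1$, $2$, or $4r$ depending on whether the entries are $1$, lie in $\mathbb{F}_4$ (for $\gamma$ — note $3\mid 2^2-1$) or $\mathbb{F}_{16}$ (for $\beta$ — note $5\mid 2^4-1$), or are primitive-order powers of $\alpha$.

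Concretely, I would organize the count as follows. The point $(1,1)$ forms a singleton class. The point $(1,\gamma)$: since $\gamma\in\mathbb{F}_4\setminus\mathbb{F}_2$, its Frobenius orbit is $\{\gamma,\gamma^2\}$, giving a class of size $2$ (disjoint from all others, as it is the only listed point whose nonzero entry is a cube root of unity of order $3$ in the $y$-slot with $x=1$). The point $(\beta,\beta^2)$ with $\beta$ of order $5$ lies in $\mathbb{F}_{16}$, and its orbit under doubling is $\{(\beta,\beta^2),(\beta^2,\beta^4),(\beta^4,\beta^3),(\beta^3,\beta)\}$, a class of size $4$ (assuming $4r\ge 4$, i.e. $r\ge1$, the Frobenius $x\mapsto x^2$ has order $4$ on $\mathbb{F}_{16}^\times$ when restricted to order-$5$ elements). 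The remaining eight points all involve a primitive-order power of $\alpha$ (exponent $1$ or $3$, both coprime to $n$ since $n$ is odd and not divisible by $3$... wait, $3\mid n$, so I must be careful: $\alpha^3$ has order $n/3$, not $n$) — so I need to separately track $\mathrm{ord}(\alpha^3)=n/\gcd(3,n)=n/3$ and note that $n/3 = (2^{4r}-1)/3$ still has $2$ as a primitive root modulo it in the relevant sense, giving doubling-orbit length $4r$ for those as well. Each of these eight classes then has size $4r$, provided they are pairwise disjoint. Summing, $|[U]| = 1 + 2 + 4 + 8\cdot 4r = 32r+7$, whence $\dim_{\mathbb{F}_2}(C_U) = n^2 - (32r+7) = (2^{4r}-1)^2 - 32r - 7$, as claimed.

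The main obstacle will be verifying the \emph{disjointness} of the eight size-$4r$ classes (and confirming the orbit lengths are exactly $4r$ rather than a proper divisor). Two points $(\alpha^{i_1},\alpha^{j_1})$ and $(\alpha^{i_2},\alpha^{j_2})$ lie in the same class iff $(i_2,j_2)\equiv 2^t(i_1,j_1) \pmod n$ for some $t$; I would rule this out by comparing, for each pair of listed points, the ratio $j/i$ (or an appropriate invariant insensitive to the simultaneous scaling by $2^t$) — for instance $(\alpha,\alpha)$ has $j\equiv i$, $(\alpha,\alpha^{-1})$ has $j\equiv -i$, $(\alpha,\alpha^2)$ has $j\equiv 2i$, $(\alpha,1)$ has $j\equiv 0$ with $i\not\equiv0$, $(\alpha^3,1)$ likewise but with a different order for the first coordinate, $(1,\alpha)$ and $(1,\alpha^3)$ are the transposes, and $(\alpha^3,\alpha^{-3})$ has $j\equiv -i$ but with both coordinates of order $n/3$. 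The invariant "exponent-pair up to multiplication by a power of $2$ mod $n$" distinguishes all eight; I would present this as a short case check, and separately confirm that none of the eight collides with the small classes of $(1,1)$, $(1,\gamma)$, or $(\beta,\beta^2)$ (immediate, since those have entries of order $1$, $3$, or $5$, while the eight have an entry of order $n$ or $n/3$, and $n$ is far larger than $5$ for $r\ge1$). A minor subtlety to flag: for the smallest case $r=1$ one should double-check that $4r=4$ does not accidentally equal the order of some element causing a class to shrink, but since $n/3=5$ and $n=15$ have $2$ of order $4$ modulo them, the orbit lengths are genuinely $4$ there, consistent with $4r$.
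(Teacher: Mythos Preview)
Your proposal is correct and follows essentially the same approach as the paper: apply Theorem~\ref{thm:guneri} to reduce to computing $|[U]|$, determine that the eleven conjugacy classes have sizes $1,2,4$ and eight copies of $4r$, verify disjointness, and sum to $32r+7$. The paper's own proof simply asserts the class sizes and disjointness as ``straightforward to verify,'' whereas you spell out the orbit-length reasoning and the invariant used to separate the classes; your extra care about the order of $\alpha^3$ (namely $n/3$ rather than $n$, yet still yielding orbit length $4r$) is warranted and handled correctly.
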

\begin{proof}
It is straightforward to verify that the $\mathbb{F}_2$-conjugacy classes of distinct elements of $U$ in \eqref{def:U} are disjoint. We moreover have $|[(1,1)]|=1$, $|[(\alpha,\alpha)]|=4r$, $|[(\alpha,\alpha^{-1})]|=4r$, $|[(\alpha,1)]|=4r$, $|[(\alpha^3,1)]|=4r$, $|[(1,\alpha)]|=4r$, $|[(1,\alpha^3)]|=4r$, $|[(1,\gamma)]|=2$, $|[(\beta,\beta^2)]|=4$, $|[(\alpha,\alpha^2)]|=4r$, and $|[(\alpha^3,\alpha^{-3})]|=4r$. Thus, $\bigl|[U]\bigr| = 8 \times (4r) + 7$, and the claim now follows from Theorem~\ref{thm:guneri}.
\end{proof}

\begin{example}
    For $r=1$, let $\alpha$ be a primitive element of $\mathbb{F}_{16}$. Then $\gamma=\alpha^5$ and $\beta=\alpha^3$. Thus,
    \begin{equation*}
    \begin{split}
    U &=\{(1,1),~(\alpha,\alpha),~(\alpha,\alpha^{-1}),~(\alpha,1),~(\alpha^3,1),~(1,\alpha),\\
    &{\ \ \ }~(1,\alpha^3),~(1,\alpha^5),~(\alpha^3,\alpha^6),~(\alpha,\alpha^2),~(\alpha^3,\alpha^{-3})\}.
    \end{split}
    \end{equation*}
 Here $C_U$ is a binary linear code given by the null space of a $11 \times 225$ matrix $H$ with columns
 \begin{equation*}
 \begin{split}
 \big[ &1~~\alpha^{l_1+l_2}~~\alpha^{l_1-l_2}~~\alpha^{l_1}~~\alpha^{3l_1}~~\alpha^{l_2}~~\alpha^{3l_2}~~\alpha^{5l_2}~~\alpha^{3l_1+6l_2}\\ &~~\alpha^{l_1+2l_2}~~\alpha^{3(l_1-l_2)}\big]^{T},
 \end{split}
 \end{equation*}
 where $0\leq l_1,l_2 \leq 14$. Then $k=dim_{\mathbb{F}_2}(C_U)=186$ and minimum distance of $C_U$ is $d_{min}(C_U)=6$ (verified by Magma \cite{Magma}). An example of a minimum weight codeword of $C_U$ is $f:=1+x+x^4+x^5+x^6+x^9$.
\end{example}

Our construction of pure QECCs from $2$-dimensional cluster states and $2$-dimensional cyclic codes is summarized by the following theorem. 

\begin{theorem}\label{thm:2D}
Let $\ket{\mathrm{G}}$ be the cluster state associated with the $2$-dimensional rectangular lattice $\Lambda_{n,n}$, where $n = 2^{4r}-1$ for some $r \ge 1$. Then, for the $2$-dimensional cyclic code $C_U \subset \mathbb{F}_2^{n \times n}$ defined from the set $U$ in \eqref{def:U}, the quantum code  $\mathcal{Q}_U := \text{span}\bigl(\{ Z_{\mathbf{c}}\ket{\mathrm{G}}: \mathbf{c} \in C_U\}\bigr)$ is a pure $[[(2^{4r}-1)^2, (2^{4r}-1)^2 - 32r-7, 5]]$ QECC.
\end{theorem}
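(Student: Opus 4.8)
The plan is to invoke Theorem~\ref{theorem_1} with $m = 4$: since Proposition~\ref{prop:sudevan_etal} guarantees that $\ket{\mathrm{G}}$ is $4$-uniform (as $n = 2^{4r}-1 \ge 15 \ge 8$), and Claim~\ref{claim:dimCU} already gives the dimension $k = (2^{4r}-1)^2 - 32r - 7$, it remains only to show that the space $\mathcal{W} = \mathcal{V}_{\mathcal{S}} + C_U'$ contains no non-zero vector of symplectic weight at most $4$. Equivalently, for every non-identity Pauli operator $E \in \mathcal{P}_{n^2}$ with $\mathrm{wt}(E) \le 4$ and every codeword $\mathbf{c} \in C_U$, the operator $Z_{\mathbf{c}} E$ must not be a stabilizer of $\ket{\mathrm{G}}$. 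First I would reduce this, exactly as in the proof of Theorem~\ref{theorem: distanceBased}, to a combinatorial statement about products of the lattice stabilizer generators $S_v = X_v Z_{N(v)}$: if $Z_{\mathbf{c}} E$ were a stabilizer, then writing $E = X_{\mathbf{e}^{(1)}} Z_{\mathbf{e}^{(2)}}$ (symplectic form) with $\mathrm{wt}_s([\mathbf{e}^{(1)}|\mathbf{e}^{(2)}]) \le 4$, the product $Z_{\mathbf{c}} E$ is $\prod_{v \in A} S_v$ for some vertex set $A$ with characteristic vector $\mathbf{a}$; matching the $X$-part forces $\mathbf{a} = \mathbf{e}^{(1)}$, so $|A| \le 4$, and matching the $Z$-part forces the codeword to satisfy $\mathbf{c} = \mathbf{e}^{(2)} + \sum_{v \in A} \mathbf{n}_v$, where $\mathbf{n}_v$ is the characteristic vector of $N(v)$ in the torus $\Lambda_{n,n}$. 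Since $|\mathrm{supp}(\mathbf{e}^{(2)})| \le 4 - |A|$ and $\bigl|\mathrm{supp}(\sum_{v\in A}\mathbf{n}_v)\bigr| \le 4|A| \le 16$, any such $\mathbf{c}$ has Hamming weight at most $16$, and moreover its support is confined to a small, structured region of the lattice (the closed neighbourhoods of at most four vertices). So the task becomes: show that $C_U$ has no non-zero codeword whose support is contained in $\bigcup_{v \in A} N[v]$ for a vertex set $A$ with $|A| \le 4$, where $N[v] = \{v\} \cup N(v)$.

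Next I would translate this support condition into the bivariate polynomial language. A codeword of $C_U$ is a polynomial $f(x,y) \in \mathbb{F}_2[x,y]/\langle x^n-1, y^n-1\rangle$ vanishing at all eleven points of $U$ (hence on $[U]$). A product $\prod_{v \in A} S_v$ with $A$ mapped, under the identification of lattice vertices $(i,j)$ with monomials $x^i y^j$, to a ``shape polynomial'' $p(x,y) = \sum_{v \in A} x^{i_v} y^{j_v}$ has the property that its contribution to $\mathbf{c}$ is, up to the low-weight $\mathbf{e}^{(2)}$ correction, $p(x,y)\cdot(x + x^{-1} + y + y^{-1})$ — because the neighbourhood characteristic vector of a single vertex corresponds to the ``Laplacian-type'' polynomial $\ell(x,y) := x + x^{-1} + y + y^{-1}$ (all arithmetic mod $x^n-1, y^n-1$). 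Thus the candidate codeword is $\mathbf{c} \leftrightarrow p(x,y)\,\ell(x,y) + e(x,y)$ with $\deg$-bounded $e$ of weight $\le 4 - |A|$ and $p$ a sum of $|A| \le 4$ monomials. The plan is to evaluate this at each point of $U$ and derive a contradiction: for instance, at $(1,1)$ we get $\ell(1,1) = 0$, so that constraint is automatic — which is precisely why the "obvious" low-distance codeword $f$ of $C_U$ (divisible by $\ell$-type factors) does not threaten us; but at the other points of $U$, $\ell$ evaluates to a non-zero constant, so vanishing of $\mathbf{c}$ there forces a genuine linear constraint on the monomial coefficients of $p$ and $e$. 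I would argue that the points $(\alpha,\alpha)$, $(\alpha,\alpha^{-1})$, $(\alpha,1)$, $(\alpha^3,1)$, $(1,\alpha)$, $(1,\alpha^3)$, $(1,\gamma)$, $(\beta,\beta^2)$, $(\alpha,\alpha^2)$, $(\alpha^3,\alpha^{-3})$ — together with their conjugates — impose enough independent conditions that no non-zero $p(x,y)\ell(x,y) + e(x,y)$ of the stated sparse form can satisfy them all; this is essentially a weight/distance argument on a derived code or a small explicit Vandermonde-style rank computation exploiting that the chosen exponents $\{1, 3, 5, \pm 1 \text{ combinations}\}$ and the third/fifth roots of unity were picked precisely to kill short patterns.

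The main obstacle I anticipate is the casework/bookkeeping in that last step: because $A$ ranges over all vertex subsets of size $\le 4$ on the torus, the shape polynomial $p$ can place its (at most four) monomials in many relative configurations — collinear, forming small clusters, far apart, overlapping neighbourhoods, etc. — and one must verify in each configuration that $p\cdot\ell + e$ cannot vanish on $[U]$ for any weight-$\le(4-|A|)$ correction $e$. I would organize this by the total weight $w := \mathrm{wt}_s(E) + (\text{overlap corrections}) = |A| + \mathrm{wt}(\mathbf{e}^{(2)})$ and argue that when the four vertices of $A$ are "spread out" the codeword $p\ell + e$ has Hamming weight close to $4|A|$ but with a rigid local structure that is incompatible with lying in $C_U$ (using that $C_U$'s defining set $U$ was engineered to have no codewords supported on such local patterns), while when the vertices cluster, overlaps reduce the weight but then $p\ell$ develops cancellations that are even more constrained. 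An alternative, cleaner route — and the one I would try first — is to import the $1$-dimensional analysis: project the configuration onto each coordinate axis (rows and columns of the lattice are themselves cycles $\Lambda_n$), reuse the forbidden-pattern characterization from Eqs.~\eqref{pattern1}--\eqref{pattern4} that drove the $1$-dimensional Theorem~\ref{theorem3}, and show that the two-dimensional defining set $U$ simultaneously enforces the row-projected and column-projected exclusions. I expect the detailed verification that the eleven points of $U$ (and their conjugacy classes) block every such pattern to be the technical heart of the argument, and it is likely carried out in an appendix with the aid of a computer-algebra check at small $r$ plus a uniformity-in-$r$ argument, since the conjugacy-class structure stabilizes once $r$ is large enough.
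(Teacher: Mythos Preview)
Your high-level strategy matches the paper's exactly: invoke Theorem~\ref{theorem_1} with $m=4$, use Claim~\ref{claim:dimCU} for the dimension, and reduce to showing that no polynomial of the form ``sum of at most four neighbourhood patterns plus a short correction'' lies in $C_U$, then kill each candidate by evaluating at points of $U$. The paper does precisely this, organizing the casework into four ``Types'' according to $|A| \in \{1,2,3,4\}$ and then, within each Type, splitting on the parity/pattern of the free $\{0,1\}$ parameters and evaluating at carefully chosen elements of $U$ (often $(\alpha,\alpha)$, $(\alpha,\alpha^{-1})$, $(\alpha,1)$, $(\alpha^3,1)$, $(1,\gamma)$, $(\beta,\beta^2)$) to force a contradiction such as $(a_i,b_i)=(a_j,b_j)$.

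Two inaccuracies in your reduction are worth fixing before you launch into the casework. First, your bound $\mathrm{wt}(e) \le 4 - |A|$ undercounts: since the centres $v \in A$ already sit in $\text{supp}_X(E)$, a $Z$ on any of them costs nothing in symplectic weight, so each centre can be toggled independently. The paper's parametrization makes this explicit by writing each neighbourhood term as $x^{a_i}y^{b_i}(x+y+A_i\,xy+xy^2+x^2y)$ with $A_i \in \{0,1\}$, and much of the casework turns on the values of these $A_i$. Second, your sentence ``show that $C_U$ has no non-zero codeword whose support is contained in $\bigcup_{v\in A} N[v]$'' is both too strong and too weak: too strong because the actual candidate codewords are far more structured than ``arbitrary subset of that region'', and too weak because when $|A|<4$ the extra $Z$-monomials (your $e$) can land \emph{anywhere} on the torus, not just in the closed neighbourhoods --- this is exactly what distinguishes the paper's Types~II--IV from Type~I.

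Your proposed ``cleaner route'' of projecting onto rows and columns and reusing the $1$-dimensional forbidden patterns \eqref{pattern1}--\eqref{pattern4} is not what the paper does, and I do not see how to make it work: the $2$-dimensional constraints from $U$ genuinely mix the two coordinates (e.g.\ $(\alpha,\alpha^2)$, $(\beta,\beta^2)$, $(\alpha^3,\alpha^{-3})$), and the relevant obstructions --- such as forcing $(a_i,b_i)=(a_j,b_j)$ from $f(\alpha,\alpha)=f(\alpha,\alpha^{-1})=0$ --- are not reducible to separate row and column conditions. The paper carries out the full casework by hand (no computer-algebra verification beyond a single Magma check of $d_{\min}$ for $r=1$), and the arguments are uniform in $r$ from the start.
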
 

Since the code $C_U$ does not satisfy the condition $d_{\min} > 2D(2D+1)$, the proof of the theorem relies on a direct verification of the necessary and sufficient condition in Theorem~\ref{theorem3}. We present the proof in Appendix~\ref{appendix:thm2D}. 

We are not aware of any infinite families of pure QECCs of distance $5$ against which the parameters of our code family from Theorem~\ref{thm:2D} can be compared. A direct comparison of the code constructed in Theorem~\ref{thm:2D} with the best known QECC from the code table \cite{Grassl:codetables} is only possible for $r=1$. In this case, our construction yields a pure $[[225,186,5]]$ QECC, whereas the code table \cite{Grassl:codetables} lists a $[[225,186,7]]$ QECC. The latter QECC has better minimum distance but the table does not record if the code is pure.

\section{Measurement-Based Encoding and Recovery Protocols}
\label{sec: encoding}

In this section, we explain our tent peg protocols for encoding and recovery of logical information for additive CWS codes. Sections~\ref{sec:enc} and \ref{sec:seq_enc} explain the encoding protocol, which uses $n$ physical qubits that are initially in a graph state $\ket{\mathrm{G}}$. Sections~\ref{sec:dec} and \ref{sec:partial_dec} explain the protocol for recovering the logical qubits from the encoded state. 

\subsection{Encoding} \label{sec:enc}

\begin{figure}[h]
    \centering

\tikzset{every picture/.style={line width=0.75pt}} 

\begin{tikzpicture}[x=0.75pt,y=0.75pt,yscale=-0.8,xscale=0.8]

\draw  [fill={rgb, 255:red, 0; green, 0; blue, 0 }  ,fill opacity=1 ] (19.5,230.75) .. controls (19.5,227.85) and (21.85,225.5) .. (24.75,225.5) .. controls (27.65,225.5) and (30,227.85) .. (30,230.75) .. controls (30,233.65) and (27.65,236) .. (24.75,236) .. controls (21.85,236) and (19.5,233.65) .. (19.5,230.75) -- cycle ;
\draw  [fill={rgb, 255:red, 0; green, 0; blue, 0 }  ,fill opacity=1 ] (70.5,230.75) .. controls (70.5,227.85) and (72.85,225.5) .. (75.75,225.5) .. controls (78.65,225.5) and (81,227.85) .. (81,230.75) .. controls (81,233.65) and (78.65,236) .. (75.75,236) .. controls (72.85,236) and (70.5,233.65) .. (70.5,230.75) -- cycle ;
\draw  [fill={rgb, 255:red, 0; green, 0; blue, 0 }  ,fill opacity=1 ] (119.5,230.75) .. controls (119.5,227.85) and (121.85,225.5) .. (124.75,225.5) .. controls (127.65,225.5) and (130,227.85) .. (130,230.75) .. controls (130,233.65) and (127.65,236) .. (124.75,236) .. controls (121.85,236) and (119.5,233.65) .. (119.5,230.75) -- cycle ;
\draw  [fill={rgb, 255:red, 0; green, 0; blue, 0 }  ,fill opacity=1 ] (180.5,230.75) .. controls (180.5,227.85) and (182.85,225.5) .. (185.75,225.5) .. controls (188.65,225.5) and (191,227.85) .. (191,230.75) .. controls (191,233.65) and (188.65,236) .. (185.75,236) .. controls (182.85,236) and (180.5,233.65) .. (180.5,230.75) -- cycle ;
\draw  [fill={rgb, 255:red, 0; green, 0; blue, 0 }  ,fill opacity=1 ] (229.5,230.75) .. controls (229.5,227.85) and (231.85,225.5) .. (234.75,225.5) .. controls (237.65,225.5) and (240,227.85) .. (240,230.75) .. controls (240,233.65) and (237.65,236) .. (234.75,236) .. controls (231.85,236) and (229.5,233.65) .. (229.5,230.75) -- cycle ;
\draw [line width=1.5]    (24.75,230.75) -- (124.75,230.75) ;
\draw [line width=1.5]  [dash pattern={on 3.75pt off 3.75pt}]  (124.75,230.75) -- (234.75,230.75) ;
\draw  [fill={rgb, 255:red, 0; green, 0; blue, 0 }  ,fill opacity=1 ] (140.5,114.75) .. controls (140.5,111.85) and (142.85,109.5) .. (145.75,109.5) .. controls (148.65,109.5) and (151,111.85) .. (151,114.75) .. controls (151,117.65) and (148.65,120) .. (145.75,120) .. controls (142.85,120) and (140.5,117.65) .. (140.5,114.75) -- cycle ;
\draw [line width=0.75]    (145.75,114.75) -- (75.75,230.75) ;
\draw [line width=0.75]    (145.75,114.75) -- (185.75,230.75) ;
\draw [line width=0.75]    (145.75,114.75) -- (124.75,230.75) ;

\draw (24.75,223.5) node [anchor=south] [inner sep=0.75pt]   [align=left] {$\displaystyle 1$};
\draw (234.75,223.5) node [anchor=south] [inner sep=0.75pt]   [align=left] {$\displaystyle n$};
\draw (120.75,223.5) node [anchor=south] [inner sep=0.75pt]   [align=left] {$\displaystyle 3$};
\draw (74.75,223.5) node [anchor=south] [inner sep=0.75pt]   [align=left] {$\displaystyle 2$};
\draw (130,253.27) node [anchor=south] [inner sep=0.75pt]   [align=left] {$\displaystyle \underbrace{\ \ \ \ \ \ \ \ \ \ \ \ \ \ \ \ \ \ \ \ \ }$};
\draw (127.39,252) node [anchor=north] [inner sep=0.75pt]   [align=left] {$\displaystyle A$};
\draw (99,68) node [anchor=north west][inner sep=0.75pt]   [align=left] {$\displaystyle \alpha \ket{0} +\beta \ket{1}$};
\draw (99,88) node [anchor=north west][inner sep=0.75pt]   [align=left] {Input qubit};

\end{tikzpicture}
\caption{{Encoding one logical qubit: the input qubit contains the information to be encoded, and the bottom array of $n$ qubits is initially in a graph state. Each solid line connecting two qubits represents a controlled-$Z$ gate.}}

    \label{one ancilla}
\end{figure}

For clarity, we first revisit the protocol for encoding a single qubit \cite{sowrabh} into a $2$-dimensional CWS code $\text{span}\{\ket{\mathrm{G}}, Z_{A}\ket{\mathrm{G}}\}$. As shown in Figure~\ref{one ancilla}, we have $n$ physical qubits, referred to hereafter as target qubits, prepared in the graph state $\ket{\mathrm{G}}$, and an {input qubit} in the state $\alpha \ket{0} + \beta \ket{1}$, which needs to be encoded. We use the {input qubit} as the control qubit of a set of controlled-$Z$ ($CZ$) gates that act on a subset $A$ of the target qubits. These $CZ$ gates bring the joint state of the system into the form given below:
\begin{align*}
    \prod_{i\in A} CZ_{i} \big( \alpha\ket{0} +\beta \ket{1} \big)\ket{\mathrm{G}}  &= {\alpha\ket{0}\ket{G} + \beta\ket{1} Z_A\ket{G}} \\
    &= \frac{1}{\sqrt{2}} \Big(\ket{+}\ket{\mathrm{e}_{+}} + \ket{-}\ket{\mathrm{e}_{-}}\Big),
\end{align*}
where $CZ_{i}$ is a $CZ$ gate acting on the $i^{th}$ target qubit, and $\ket{\mathrm{e}_{\pm}} = \alpha\ket{\mathrm{G}} \pm \beta Z_A \ket{\mathrm{G}}$. {The second line is obtained by expressing $\ket{0}$ and $\ket{1}$ in the $X$-basis $\{\ket{+}, \ket{-}\}$.} Applying a projective measurement on the {input qubit} in the $X$-basis transforms the target qubits to either the desired encoding $\ket{\mathrm{e}} = \ket{\mathrm{e}_+}$ or a different state $\ket{\mathrm{e}_{-}}$ depending on the measurement outcome. It is easy to see that $\ket{\mathrm{e}_{-}}$ can be transformed to $\ket{\mathrm{e}}$ by performing a logical-$Z$ operation. Note that the graph stabilizer generator, $S_i$, corresponding to any qubit $i \in A$, anti-commutes with $Z_A$. Therefore, $S_{i}\ket{\mathrm{e}_{-}} = \ket{\mathrm{e}}$, and hence, $S_i$ is a logical-$Z$ operator for the code. Since the stabilizers $S_i \in \mathcal{P}_n$ consist of tensor products over local unitaries, they are not difficult to implement.

This measurement-based approach can be generalized to construct higher-dimensional codes following the visual guide in Figure~\ref{kAncillas}. We again start with $n$ physical qubits, indexed by $[n]$, that are jointly in a graph state $\ket{\mathrm{G}}$. But we now have $k$ {input qubits} (shown at the top of the figure), with the $i$-th {input qubit} controlling $CZ$ gates that act on a subset $A_i \subseteq [n]$ of the $n$ physical qubits, which we will again refer to as target qubits. The figure motivates the ``tent peg'' nomenclature we use for the encoding protocol, as the figure brings to mind a row of tents whose tops are the {input qubits} and whose pegs are the target qubits with which the {input qubits} interact. It should be clarified, however, that the ``tent peg sets'' $A_i$ need not be disjoint.

\begin{figure}[h]
    \centering
    \includegraphics[width = 0.5 \textwidth]{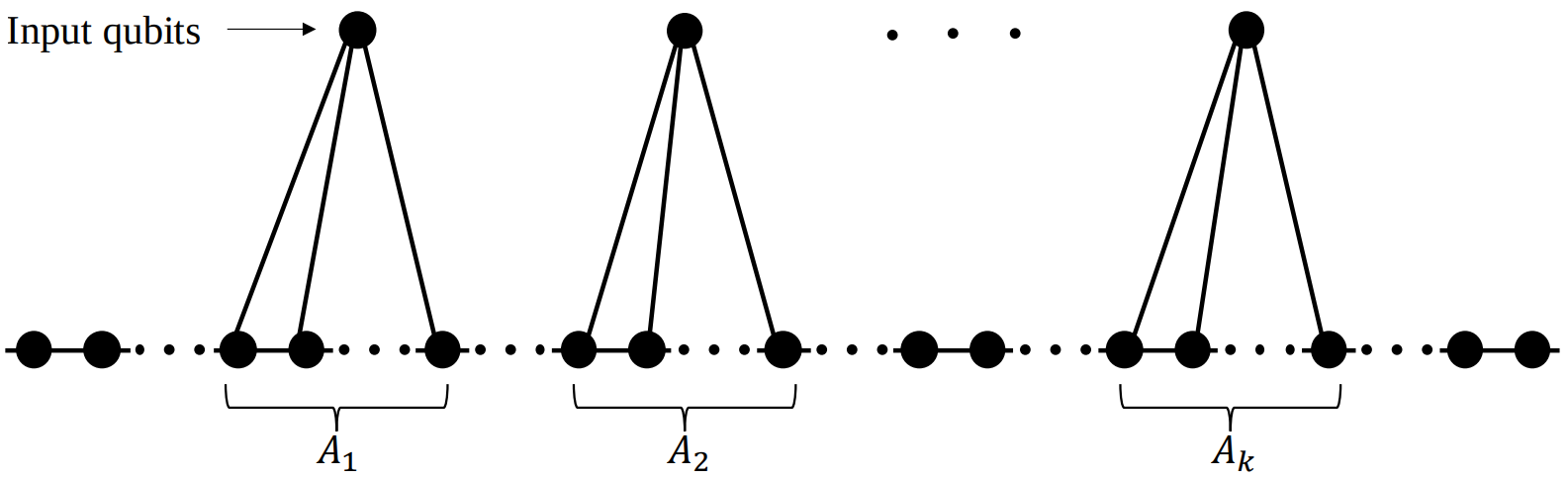}
    \caption{Tent peg protocol: The top horizontal array of dots represents $k$ {input qubits} holding logical information. The bottom horizontal array of dots represents the $n$ target qubits, initially in a graph state. Each solid line between a pair of qubits represents a controlled-$Z$ gate between the two qubits.}
    \label{kAncillas}
\end{figure}

The choice of the tent peg sets $A_i$ is governed by an $[n,k]$ binary linear code $\mathcal{C}$, which we refer to as the ``tent peg code''. Let $\mathbf{A}$ be a $k \times n$ generator matrix for $\mathcal C$, which, without loss of generality, we will assume to be in systematic form. Thus, $\mathbf{A} = [\mathbf{I} \mid \mathbf{B}]$, where $\mathbf{I}$ is the $k \times k$ identity matrix and $\mathbf{B}$ is a binary $k \times (n-k)$ matrix. Denoting the rows of $\mathbf{A}$ by $\mathbf{a}_1, \ldots, \mathbf{a}_k$, we take $A_i = \text{supp}(\mathbf{a}_i)$ for $i=1,\ldots,k$. 

Let the $k$ logical qubits to be encoded be in an arbitrary pure state 
\begin{equation} \label{eq:logical_qubits}
\ket{\phi} = \sum_{\mathbf{x}\in\{0,1\}^k} a_{\mathbf{x}}\ket{\mathbf{x}}, 
\end{equation}
We prepare $k$ {input qubits} in the state $\sum_{\mathbf{x}\in\{0,1\}^k} a_{\mathbf{x}}\ket{\mathbf{x}}_{\mathrm a}$, where the subscript $\mathrm a$ is used for clarity to identify the {input qubits}. Then, the {input qubits} and target qubits are jointly in the state
\begin{align}\label{Before encoding}
    \ket{\psi} = \sum_{\mathbf{x} \in \{0,1\}^k}{a_{\mathbf{x}}\ket{\mathbf{x}}_{\mathrm a} \ket{\mathrm{G}}}.
\end{align}
Applying the $CZ$ gates controlled by the {input qubits}, the joint state of the $k$ {input qubits} and $n$ target qubits becomes
\begin{align*}
    \ket{\psi^\prime} = \sum_{\mathbf{x}\in \{0,1\}^k}{a_{\mathbf{x}}\ket{\mathbf{x}}_{\mathrm a} \Big(Z_{\mathbf{x}\mathbf{A}}\ket{\mathrm{G}}\Big)}.
\end{align*}
Here, the subscript $\mathbf{x}\mathbf{A}$ on $Z$ refers to the binary vector obtained as the modulo-$2$ product of the vector $\mathbf{x}$ and the matrix $\mathbf{A}$.

Next, the {input qubits} are measured sequentially in the $X$-basis. The joint state of the $n$ target qubits is then transformed to
\begin{align*} 
    \ket{\mathrm{e}^{\prime}} = \sum_{\mathbf{x} \in \{0,1\}^k}{{(-1)}^{\mathbf{x} \cdot \mathbf{y}} a_{\mathbf{x}}Z_{\mathbf{x}\mathbf{A}}\ket{\mathrm{G}}},
\end{align*}
where $\mathbf{y} \in \{0,1\}^k$ is a vector whose $i^{th}$ entry is $0$ if the $i^{th}$ {input qubit} measurement resulted in the $\ket{+}$ state, and $1$ if the measurement resulted in the $\ket{-}$ state. 

Similar to the case where a single qubit was encoded, for each $i \in \text{supp}(\mathbf{y})$, applying the logical-$Z$ operator, $Z^{L}_i$, corresponding to the $i^{th}$ logical qubit transforms the post-measurement state $\ket{\mathrm{e}^{\prime}}$ to the desired form
\begin{equation} \label{eq:encoding}
\ket{\mathrm{e}}=\sum_{\mathbf{x} \in \{0,1\}^k}{ a_{\mathbf{x}}Z_{\mathbf{x}\mathbf{A}}\ket{\mathrm{G}}}.
\end{equation}
Explicitly, $Z^{L}_i = S_i$, the stabilizer generator associated with the $i$th vertex of the graph underlying $\ket{\mathrm{G}}$. This is because, for each $i \in [k]$, $S_i = X_i Z_{N(i)}$ anti-commutes with $Z_{A_i}$ and commutes with $Z_{A_j}$ for all $j \neq i$. This is due to the fact that we chose the generator matrix $\mathbf{A}$ to be of the form $[\mathbf{I} \mid \mathbf{B}]$, so that only $Z_{A_i}$ contains the operator $Z_i$, for each $i \in [k]$.

We summarize in Algorithm~\ref{alg:enc} below the tent peg protocol for encoding $k$ logical qubits into a CWS code 
\begin{equation}\label{eq:Q_CWS}
\mathcal{Q} := \text{span}\bigl(\{Z_{\mathbf{x}\mathbf{A}} \ket{\mathrm{G}}: \mathbf{x} \in \{0,1\}^k\}\bigr)
\end{equation}
obtained from a graph state $\ket{\mathrm{G}}$ and a tent peg code with a systematic generator matrix $\mathbf{A}$. 
\begin{algorithm}
\caption{The tent peg protocol for encoding $k$ logical qubits in a CWS code.}\label{alg:enc}
\hspace*{\algorithmicindent} {\textbf{Input:} $k$ input qubits in the logical state to be encoded} \\
\hspace*{\algorithmicindent} {\textbf{Output:} $n$ physical qubits in the code state} 
\begin{enumerate}
    \item Prepare $n$ physical qubits in the graph state $\ket{\mathrm{G}}$
    \item For $i = 1 \text{ to } k$:
    \begin{itemize}
        \item[] apply a $CZ$ gate with the $i$-th {input qubit} as control and the subset $A_i$ of the $n$ physical qubits as the target
    \end{itemize}
    \item For $i = 1 \text{ to } k$:
    \begin{enumerate}
        \item measure the $i$-{th} {input qubit} in the $X$-basis
        \item if the measurement outcome is `$-1$', apply $Z_i^L = S_i$ on the physical qubits
    \end{enumerate}
\end{enumerate}
\end{algorithm}

Alternatively, instead of transforming the post-measurement state to the desired form, we may use $k$ classical bits to store the measurement outcome, along with the state $\ket{\mathrm{e}^\prime}$. A party without access to the classical bits cannot decode the correct logical information only from $\ket{\mathrm{e}^\prime}$.

\subsection{Sequential encoding} \label{sec:seq_enc}
Among the $k$ logical qubits, suppose that there is a qubit that is not entangled with the remaining $k-1$ logical qubits. Without loss of generality, assume that the $k$th logical qubit is separable from (i.e., not entangled with) the first $k-1$ logical qubits. Then, the entire set of $k$ logical qubits can be sequentially encoded by first applying the tent peg protocol to the initial set of $k-1$ logical qubits, and then encoding the $k$th logical qubit. To describe this in more detail, let us assume that the logical qubits are in a product state 
\begin{equation} \label{eq:productstate} 
\ket{\phi} = \left(\sum_{\mathbf{x}' \in \{0,1\}^{k-1}} a_{\mathbf{x}'} \ket{\mathbf{x}'} \right) \otimes (b_0\ket{0} + b_1\ket{1}),
\end{equation}
where $b_0 \ket{0} + b_1\ket{1}$ is the state of the $k$th logical qubit. Comparing with \eqref{eq:logical_qubits}, we see that $a_{\mathbf{x}'0} = a_\mathbf{x}' b_0$ and $a_{\mathbf{x}'1} = a_{\mathbf{x}'}b_1$. When we apply the protocol in Algorithm~\ref{alg:enc} to the first $k-1$ logical qubits, we get an intermediate encoded state
\begin{equation} \label{eq:ketf}
    \ket{\mathrm{f}} = \sum_{\mathbf{x}' \in \{0,1\}^{k-1}}{ a_{\mathbf{x}'}Z_{\mathbf{x}'\mathbf{A}^\prime}\ket{\mathrm{G}}},
\end{equation}
where $\mathbf{A}^\prime$ is the submatrix of $\mathbf{A}$ consisting of all but the last row of $\mathbf{A}$. Observe that the state $\ket{\mathrm{f}}$ belongs to the subcode 
\begin{equation} \label{eq:Q'}
\mathcal{Q}' :=  \text{span}\bigl(\{Z_{\mathbf{x}'\mathbf{A}^\prime}\ket{\mathrm{G}}: \mathbf{x}' \in \{0,1\}^{k-1}\}\bigr)
\end{equation}
of the overall CWS code $\mathcal Q$ in \eqref{eq:Q_CWS}.
With the $n$ target qubits in state $\ket{\mathrm{f}}$, if we further apply the encoding protocol for the $k$th logical qubit (i.e., apply $CZ$s controlled by the $k$th logical qubit, then measure that qubit in the $X$-basis, and finally, if the measurement outcome is $-1$, apply $X_k^L = S_k$ on the target qubits), then it can be verified that we will get the final encoding to be $\ket{\mathrm{e}}$ as in~\eqref{eq:encoding}.

\subsection{Recovering the logical qubits}\label{sec:dec}

We propose a similar measurement-based approach to recover the $k$ logical qubits encoded in a state of the CWS code $\mathcal Q$. Note that any correctable errors affecting the physical qubits will first need to be corrected using the standard error-correcting techniques for a stabilizer code before we apply the recovery protocol.

Consider a CWS encoding $\ket{\mathrm{e}}$ of $k$ logical qubits as given in \eqref{eq:encoding}, obtained using an $n$-qubit graph state $\ket{\mathrm{G}}$ and an $[n,k]$ tent peg code with systematic generator matrix $\mathbf{A}$. Start by preparing $k$ {recovery qubits} in the product state $\ket{+}^{\otimes k}$, so that the joint state of the composite system is
\begin{equation*}
    |\psi\rangle = \frac{1}{2^{k/2}}\sum_{\mathbf{x},\mathbf{y}\in\{0,1\}^k} a_{\mathbf{x}} \ket{\mathbf{y}}_{\textsc{r}} Z_{\mathbf{x}\mathbf{A}}\ket{\mathrm{G}},
\end{equation*}
{the subscript `$\textsc{r}$' being used to distinguish the recovery qubits}. Next, as in the encoding protocol, apply $CZ$ gates controlled by the {recovery qubits}, with targets determined by the supports of the rows of $\mathbf A$. This results in the joint state
\begin{equation} \label{eq:phi'}
    \ket{\psi^{\prime}} = \frac{1}{2^{k/2}}\sum_{\mathbf{x},\mathbf{y}\in \{0,1\}^k} a_{\mathbf{x}}\ket{\mathbf{y}}_{\textsc{r}} Z_{(\mathbf{x}+\mathbf{y})\mathbf{A}} \ket{\mathrm{G}}.
\end{equation}
Observe that, if we now apply the projector $\ket{\mathrm G} \bra{\mathrm G}$ to the state of the $n$ physical qubits, then the joint state of the composite system becomes
\begin{equation} \label{eq:post_proj}
    \sum_{\mathbf{x},\mathbf{y}\in \{0,1\}^k} a_{\mathbf{x}} \bra{\mathrm{G}} Z_{(\mathbf{x}+\mathbf{y})\mathbf{A}} \ket{\mathrm{G}} \ket{\mathbf{y}}_{\textsc{r}} \ket{\mathrm{G}}.
\end{equation}
Arguing as in the proof of Proposition~\ref{prop:CWScode}, since the stabilizer of $\ket{\mathrm G}$ is equal to the centralizer and contains no non-identity operators consisting only of Pauli-$Z$s, $\bra{\mathrm{G}}Z_{(\mathbf{x}+\mathbf{y})\mathbf{A}} \ket{\mathrm{G}}$ is non-zero iff $\mathbf{x} = \mathbf{y}$. Therefore, the right-hand side of \eqref{eq:post_proj} simplifies to
\begin{equation}\label{full decoded}
    \sum_{\mathbf{x}\in \{0,1\}^k} a_{\mathbf{x}} \ket{\mathbf{x}}_{\textsc{r}} \ket{\mathrm{G}}.
\end{equation}
Note that this is the same as the state $\ket{\psi}$ in \eqref{Before encoding}. Thus, we have reconstructed the original graph state and extracted the $k$ logical qubits of information onto the $k$ {recovery qubits}.

One way to effect the projection onto $\ket{\mathrm G}$ required to obtain \eqref{eq:post_proj} is as follows. Starting from the state $\ket{\phi'}$ in \eqref{eq:phi'}, we sequentially measure the $n$ independent stabilizer generators $S_1, S_2, \ldots, S_n$. If, for a particular $S_i$ in the sequence, the measurement outcome is `$+1$', then we retain the post-measurement state and measure $S_{i+1}$ next. On the other hand, if the outcome of the measurement is `$-1$', then we pick any Pauli operator that anti-commutes with that $S_i$, but commutes with all other $S_j$, $j \ne i$, and apply that Pauli operator to the post-measurement state. Note that, since $S_i = X_i Z_{N(i)}$, the Pauli operator $Z_i$ (i.e., Pauli-$Z$ acting on the $i$th physical qubit) anti-commutes with only $S_i$ among all the stabilizer generators. The procedure of measuring $S_i$ followed by applying $Z_i$ conditioned on the $S_i$ measurement outcome is mathematically equivalent to the action of the projection operator $\frac{1}{2}(I+S_i)$. Therefore, repeating this for all $S_i$'s results in the action of
\begin{equation} \notag 
    \frac{1}{2^n} \prod_{i=1}^n (I+S_i) = \ket{\mathrm{G}}\bra{\mathrm{G}}
\end{equation} 
on the physical qubits.

\subsection{Partial recovery of logical qubits}  \label{sec:partial_dec}
As in Section~\ref{sec:seq_enc}, we consider here the situation when the $k$th logical qubit is separable from (i.e., not entangled with) the first $k - 1$ logical qubits, as in \eqref{eq:productstate}. In this case, the encoded state $\ket{\mathrm e}$ in \eqref{eq:encoding} can be expressed as 
\begin{equation*}
    \ket{\mathrm{e}} = \sum_{\mathbf{x}'\in \{0,1\}^{k-1}} \bigg(a_{\mathbf{x}'} b_0 \, Z_{\mathbf{x}'\mathbf{A}^\prime}\ket{\mathrm{G}} + a_{\mathbf{x}'} b_1 \, Z_{\mathbf{a}_k}Z_{\mathbf{x}'\mathbf{A}^\prime}\ket{\mathrm{G}} \bigg).
\end{equation*}

We prepare a {recovery qubit} in the $\ket{+}_{\mathrm a}$ state and apply the same set of $CZ$ gates that were used for encoding the $k$th logical qubit. This results in the joint state
\begin{equation*}
    \ket{\psi'} = \frac{1}{\sqrt{2}}\left(\ket{0}_{\mathrm{a}} \ket{\mathrm{e}} + \ket{1}_{\mathrm a}Z_{\mathbf{a}_k} \ket{\mathrm{e}}\right).
\end{equation*}
We then measure the $k$th stabilizer generator $S_k = X_k Z_{N(k)}$. If the measurement outcome is `$+1$', then the post-measurement state, $\ket{\zeta_+}$, is the appropriately normalized projection $\frac{1}{2}\Big(I+S_k\Big) \ket{\psi'}$. Note that, since the generator matrix $\mathbf{A}$ was chosen to be in systematic form, $S_k$ anti-commutes with $Z_{\mathbf{a}_k}$ and commutes with all other $Z_{\mathbf{a}_i}$'s for $i\neq k$. Using this, we can compute 
\begin{align}\label{general one logical recovered}
   \ket{\zeta_+} &= \sum_{\mathbf{x}' \in \{0,1\}^{k-1}} \Big(a_{\mathbf{x}'} b_0 \ket{0}_{\mathrm a} Z_{\mathbf{x}'\mathbf{A}^\prime}\ket{\mathrm{G}} + a_{\mathbf{x}'} b_1 \ket{1}_{\mathrm a} Z_{\mathbf{x}'\mathbf{A}^\prime}\ket{\mathrm{G}} \Big) \notag \\
        & = \ \ket{\mathrm f} \otimes \left(b_0 \ket{0}_{\mathrm a} + b_1 \ket{1}_{\mathrm a}\right).
\end{align}

On the other hand, if the measurement outcome is `$-1$', then the post-measurement state $\ket{\zeta_-}$ is $\frac12 (I-S_k) \ket{\psi'}$, appropriately normalized. This can be shown to be related to $\ket{\zeta_+}$ as below:
\begin{equation} \label{eq:zeta_-}
    \ket{\zeta_+} = \left(X\otimes Z_{\mathbf{a}_k}\right) \ket{\zeta_-}.
\end{equation}
Thus, the state $\ket{\zeta_-}$ can be transformed to $\ket{\zeta_+}$ by applying $X$ on the {recovery qubit} and $Z_{\mathbf{a}_k}$ on the physical qubits.

In summary, at the end of the recovery protocol, the composite system of physical qubits and {recovery qubit} is as in \eqref{general one logical recovered}. The $k$th logical qubit has thus been extracted into the {recovery qubit} and the remaining $k-1$ logical qubits are left encoded in the state $\ket{\mathrm f} \in \mathcal{Q}'$.

\subsection{Connection between recovery procedure and code distance} \label{subsec:decoding and distance}

In the decoding subsection, we used $CZ$ gates between {recovery qubits} and the physical qubits in the encoded state. We see that the number of $CZ$s used in encoding and recovery are identical. In fact, we apply the exact same tent pegs, and the difference between encoding and recovery only lies in whether we measure the {input qubits} or the graph state qubits. In this section, we show how recovery can involve a smaller number of interactions than what is needed for the encoding if one allows controlled-$X,Y,Z$ gates {controlled by the recovery qubits}.

Since any stabilizer for the graph state contains at least one $X$, strings of $Z$s can never be in the graph stabilizer and, hence, are chosen for constructing CWS codes. This restricts us to using only $CZ$ gates in the encoding procedure. However, this is not the case for the recovery procedure. We can use a mix of controlled gates to retrieve information without any loss.

For instance, consider recovering the single logical qubit $\alpha \ket{0} + \beta \ket{1}$ from its encoding $\ket{\mathrm{e}}= \alpha \ket{\mathrm{G}} + \beta Z_{A} \ket{\mathrm{G}}$. Similar to the recovery procedure discussed in Subsection~\ref{sec:dec}, we append the physical qubits in state $\ket{\mathrm{e}}$ with a qubit in state $\ket{+}_a$. Let $U \in \mathcal{P}_n$ be a Pauli string acting on the $n$ physical qubits. Let $CU$ denote the controlled-$U$ gate controlled by the {recovery qubit}, with the $n$ physical qubits as the target. The action of $CU$ on the joint system results in the state
\begin{equation*}
    \ket{\psi_1} = \frac{1}{2}\left(\ket{0}_a\ket{\mathrm{e}}+\ket{1}_aU\ket{\mathrm{e}}\right).
\end{equation*}

Measuring all the stabilizer generators of $\ket{\mathrm{G}}$ on the physical qubits as done in \eqref{eq:post_proj} gives the state
\begin{equation} \label{U decoding}
   \ket{\psi_2} = \Big[\alpha \ket{0}_a + \Big(\alpha \bra{\mathrm{G}}U \ket{\mathrm{G}}+\beta \bra{\mathrm{G}}U Z_A \ket{\mathrm{G}}\Big)\ket{1}_a\Big]\ket{\mathrm{G}}.
\end{equation}
This expression reduces to the required product state $\left(\alpha \ket{0}_a + \beta \ket{1}_a\right)\ket{\mathrm{G}}$ when the following two conditions are satisfied.
    \begin{equation}\label{UZ}
        \bra{\mathrm{G}}U Z_A \ket{\mathrm{G}} = 1 \ \ \ (\text{i.e., } U Z_A \in \mathcal{S})
    \end{equation}
\begin{equation} \label{U}
    \bra{\mathrm{G}} U \ket{\mathrm{G}} = 0 \ \ \ (\text{i.e., } U \notin \mathcal{S})
\end{equation}

Conditions \ref{UZ} and \ref{U} are not independent, and in fact, we prove the following proposition in Appendix \ref{UZU proof}.
\begin{proposition} \label{UZU_prop}
    If $\mathcal{S}$ is the stabilizer group of a graph state, then for any Pauli operator $U$, 
    $$U Z_A \in \mathcal{S} \implies U \notin \mathcal{S}.$$
\end{proposition}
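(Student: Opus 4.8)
The plan is to argue by contradiction: suppose both $UZ_A \in \mathcal{S}$ and $U \in \mathcal{S}$. Since $\mathcal{S}$ is a group, multiplying these two elements gives $U Z_A \cdot U^{-1} = (UZ_A)U \in \mathcal{S}$ (using that $U$ is a Pauli operator, so $U^{-1} = \pm U$ up to phase, and $\mathcal{S}$ is closed under such products). Now $U$ and $Z_A$ either commute or anticommute. If they commute, then $(UZ_A)U = Z_A U^2 = \pm Z_A$, so $Z_A \in \mathcal{S}$ (up to sign — and since $\mathcal{S}$ is a stabilizer group not containing $-I$, the sign must be $+$, else $-Z_A \in \mathcal{S}$ forces $Z_A \notin \mathcal{S}$ but $Z_A^2 = I$ would still be fine; I need to be a little careful, but the upshot is $\pm Z_A \in \mathcal{S}$). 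Either way, $\mathcal{S}$ contains a nonidentity operator that is a pure product of Pauli-$Z$'s (assuming $A \neq \emptyset$; the case $A = \emptyset$ makes $U Z_A = U$, so $U \in \mathcal{S}$ and $U \notin \mathcal{S}$ is already contradictory). But this is impossible: as noted repeatedly in the excerpt (e.g., in the discussion after Proposition~\ref{prop:CWScode} and in Section~\ref{sec:dec}), every nonidentity stabilizer $S_i = X_i Z_{N(i)}$ of a graph state, and hence every nonidentity element of $\mathcal{S}$, contains at least one Pauli-$X$, so $\mathcal{S}$ has no nonidentity all-$Z$ element. This contradiction handles the commuting case.

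The remaining case is that $U$ and $Z_A$ anticommute. Then $(UZ_A)U = -Z_A U^2 = \mp Z_A$, and the same reasoning applies: $\pm Z_A \in \mathcal{S}$, again contradicting the fact that $\mathcal{S}$ contains no nonidentity all-$Z$ operator (for $A \neq \emptyset$). Actually, the anticommuting case is even cleaner to rule out directly: if $UZ_A \in \mathcal{S}$ and $U \in \mathcal{S}$, then $U$ and $UZ_A$ are both in the abelian group $\mathcal{S}$, hence commute; but $U$ anticommutes with $UZ_A$ whenever $U$ anticommutes with $Z_A$, so this case cannot even arise alongside $U \in \mathcal{S}$. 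Either framing works; I will present the commutator/product argument uniformly since it also cleanly forces the sign to be correct via the $-I \notin \mathcal{S}$ property of genuine stabilizer groups.

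So the key steps, in order, are: (1) assume for contradiction that $U \in \mathcal{S}$ as well; (2) dispose of the trivial case $A = \emptyset$; (3) form the product $(UZ_A) \cdot U^{\pm 1} \in \mathcal{S}$ and simplify using $U^2 = \pm I$, obtaining $\pm Z_A \in \mathcal{S}$; (4) invoke that $\mathcal{S}$ is a stabilizer group, so $-I \notin \mathcal{S}$, pinning down $Z_A \in \mathcal{S}$ with $A \neq \emptyset$; (5) contradict this with the structural fact that every nonidentity element of a graph-state stabilizer group contains at least one $X$ (since it is a product of generators $S_i = X_i Z_{N(i)}$, and the $X$ on the lowest-indexed generator used cannot be cancelled by any other generator, as $S_j$ for $j \neq i$ has no $X$ on qubit $i$). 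I expect step (5) to be the only part needing a genuine (though short) argument — it is essentially the observation that the $X$-part of a product of $S_i$'s is supported exactly on the set of indices of generators used, so it vanishes only for the empty product. The sign-tracking in steps (3)–(4) is routine Pauli bookkeeping but should be stated carefully to be airtight.
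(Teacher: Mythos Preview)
Your proof is correct, but it takes a different route from the paper's. The paper argues directly via expectation values: it first observes that $Z_A \notin \mathcal{S}$ (since $Z_A$ anticommutes with $S_i$ for any $i \in A$), hence $\braket{\mathrm{G}|Z_A|\mathrm{G}} = 0$; then, writing $UZ_A$ as a product of stabilizer generators and right-multiplying by $Z_A$, it computes $\braket{\mathrm{G}|U|\mathrm{G}} = \braket{\mathrm{G}|Z_A|\mathrm{G}} = 0$, so $U \notin \mathcal{S}$. Your argument is purely group-theoretic: assuming $U \in \mathcal{S}$ as well, closure forces $Z_A \in \mathcal{S}$, which you rule out by the structural observation that the $X$-support of any product $\prod_{i \in J} S_i$ is exactly $J$. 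Both proofs ultimately hinge on the same fact ($Z_A \notin \mathcal{S}$ for nonempty $A$), but yours avoids invoking the state $\ket{\mathrm{G}}$ altogether, which is arguably cleaner. One simplification you could make: once $U \in \mathcal{S}$, you have $U^2 = I$ (since $-I \notin \mathcal{S}$), so $U^{-1}(UZ_A) = Z_A \in \mathcal{S}$ directly, and all the commute/anticommute case-splitting and sign-tracking in your steps~(3)--(4) becomes unnecessary.
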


Therefore, in Eq.~\eqref{U decoding}, the logical state is recovered on the {recovery qubit} when $UZ_A \in \mathcal{S}$. Comparing conditions in Eq.~\eqref{UZ} and \eqref{U} to Theorem~\ref{theorem_1} for one logical qubit, we can see that the support of $U$ or the number of qubits that the {recovery qubit} needs to interact with can be as small as the distance of the code. This can be smaller than the number of $CZ$s used in encoding. 

\section{Concluding Remarks}

In this paper, we revisit the CWS code construction proposed in \cite{CWSCode}. Using highly entangled graph states and carefully chosen classical binary linear codes, we show that $[[n,k,d]]$ CWS codes can be constructed with $d = O(1)$ and $k/n \to 1$ as $n \to \infty$. We further propose a measurement-based encoding algorithm that supports sequential encoding of logical qubits. A recovery algorithm is also given for partial recovery of logical qubits. 

Several interesting directions of future research arise from this work. One is the problem of finding higher-dimensional analogues of the code construction encapsulated within Theorem~\ref{theorem3}. This will require constructing codes whose coordinates are identified with the sites of a $D$-dimensional lattice with periodic boundary conditions (i.e., a $D$-dimensional torus) with certain constraints imposed on the positions of $1$s in codewords. 

The stabilizer group of a CWS code obtained from a graph state and a classical binary linear code is described in Lemma~\ref{codeStabilizers}. It would be of interest to see whether we can impose conditions on the binary linear code so as to obtain stabilizer generators of bounded weight and, hence, quantum low-density parity-check (QLDPC) codes.

Based on practical considerations, it would be useful to explore the possibility of scalably constructing CWS code families with properties such as high error thresholds and fault-tolerant logical operations. It would also be desirable to have fault-tolerant implementations of the measurement-based encoding and recovery procedures we have proposed.

\section*{Acknowledgment}

This work was supported in part by a Prime Minister's Research Fellowship awarded to T.~Aswanth, an IISc-IoE postdoctoral fellowship awarded to N.~Patanker, and by the grant DST/INT/RUS/RSF/P-41/2021 from the Department of Science and Technology, Government of India. We would also like to acknowledge the productive discussions we had with Tania Sidana.

\bibliographystyle{IEEEtran}
\bibliography{bibliofile}

\begin{thebibliography}{10}
\providecommand{\url}[1]{#1}
\csname url@samestyle\endcsname
\providecommand{\newblock}{\relax}
\providecommand{\bibinfo}[2]{#2}
\providecommand{\BIBentrySTDinterwordspacing}{\spaceskip=0pt\relax}
\providecommand{\BIBentryALTinterwordstretchfactor}{4}
\providecommand{\BIBentryALTinterwordspacing}{\spaceskip=\fontdimen2\font plus
\BIBentryALTinterwordstretchfactor\fontdimen3\font minus
  \fontdimen4\font\relax}
\providecommand{\BIBforeignlanguage}[2]{{%
\expandafter\ifx\csname l@#1\endcsname\relax
\typeout{** WARNING: IEEEtran.bst: No hyphenation pattern has been}%
\typeout{** loaded for the language `#1'. Using the pattern for}%
\typeout{** the default language instead.}%
\else
\language=\csname l@#1\endcsname
\fi
#2}}
\providecommand{\BIBdecl}{\relax}
\BIBdecl

\bibitem{Shor_1997}
\BIBentryALTinterwordspacing
P.~W. Shor, ``Polynomial-time algorithms for prime factorization and discrete
  logarithms on a quantum computer,'' \emph{SIAM Journal on Computing},
  vol.~26, no.~5, p. 1484–1509, Oct. 1997. [Online]. Available:
  \url{http://dx.doi.org/10.1137/S0097539795293172}
\BIBentrySTDinterwordspacing

\bibitem{10.1145/237814.237866}
\BIBentryALTinterwordspacing
L.~K. Grover, ``A fast quantum mechanical algorithm for database search,'' in
  \emph{Proceedings of the Twenty-Eighth Annual ACM Symposium on Theory of
  Computing}, ser. STOC '96.\hskip 1em plus 0.5em minus 0.4em\relax New York,
  NY, USA: Association for Computing Machinery, 1996, p. 212–219. [Online].
  Available: \url{https://doi.org/10.1145/237814.237866}
\BIBentrySTDinterwordspacing

\bibitem{doi:10.1126/science.270.5242.1633}
\BIBentryALTinterwordspacing
I.~L. Chuang, R.~Laflamme, P.~W. Shor, and W.~H. Zurek, ``Quantum computers,
  factoring, and decoherence,'' \emph{Science}, vol. 270, no. 5242, pp.
  1633--1635, Dec. 1995. [Online]. Available:
  \url{https://www.science.org/doi/abs/10.1126/science.270.5242.1633}
\BIBentrySTDinterwordspacing

\bibitem{Plenio_1997}
\BIBentryALTinterwordspacing
M.~B. Plenio and P.~L. Knight, ``Decoherence limits to quantum computation
  using trapped ions,'' \emph{Proceedings of the Royal Society of London.
  Series A: Mathematical, Physical and Engineering Sciences}, vol. 453, no.
  1965, pp. 2017--2041, Oct. 1997. [Online]. Available:
  \url{https://doi.org/10.1098/rspa.1997.0109}
\BIBentrySTDinterwordspacing

\bibitem{1996}
\BIBentryALTinterwordspacing
G.~M. Palma, K.-A. Suominen, and A.~Ekert, ``Quantum computers and
  dissipation,'' \emph{Proceedings of the Royal Society of London. Series A:
  Mathematical, Physical and Engineering Sciences}, vol. 452, no. 1946, pp.
  567--584, Jan. 1996. [Online]. Available:
  \url{https://royalsocietypublishing.org/doi/abs/10.1098/rspa.1996.0029}
\BIBentrySTDinterwordspacing

\bibitem{PhysRevA.51.992}
\BIBentryALTinterwordspacing
W.~G. Unruh, ``Maintaining coherence in quantum computers,'' \emph{Physical
  Review A}, vol.~51, pp. 992--997, Feb. 1995. [Online]. Available:
  \url{https://link.aps.org/doi/10.1103/PhysRevA.51.992}
\BIBentrySTDinterwordspacing

\bibitem{knill1999theory}
E.~Knill, R.~Laflamme, and L.~Viola, ``Theory of quantum error correction for
  general noise,'' Aug. 1999, arXiv:quant-ph/9908066.

\bibitem{PhysRevA.52.R2493}
\BIBentryALTinterwordspacing
P.~W. Shor, ``Scheme for reducing decoherence in quantum computer memory,''
  \emph{Physical Review A}, vol.~52, pp. R2493--R2496, Oct. 1995. [Online].
  Available: \url{https://link.aps.org/doi/10.1103/PhysRevA.52.R2493}
\BIBentrySTDinterwordspacing

\bibitem{gottesman1997stabilizer}
\BIBentryALTinterwordspacing
D.~Gottesman, ``Stabilizer codes and quantum error correction,'' Ph.D.
  dissertation, California Institute of Technology, 1997. [Online]. Available:
  \url{arXiv:quant-ph/9705052}
\BIBentrySTDinterwordspacing

\bibitem{kitaev1997fault}
\BIBentryALTinterwordspacing
A.~Kitaev, ``Fault-tolerant quantum computation by anyons,'' \emph{Annals of
  Physics}, vol. 303, no.~1, pp. 2--30, Jan. 2003. [Online]. Available:
  \url{https://www.sciencedirect.com/science/article/pii/S0003491602000180}
\BIBentrySTDinterwordspacing

\bibitem{Dennis_2002}
\BIBentryALTinterwordspacing
E.~Dennis, A.~Kitaev, A.~Landahl, and J.~Preskill, ``Topological quantum
  memory,'' \emph{Journal of Mathematical Physics}, vol.~43, no.~9, pp.
  4452--4505, Aug. 2002. [Online]. Available:
  \url{https://doi.org/10.1063/1.1499754}
\BIBentrySTDinterwordspacing

\bibitem{Fowler_2012}
\BIBentryALTinterwordspacing
A.~G. Fowler, M.~Mariantoni, J.~M. Martinis, and A.~N. Cleland, ``Surface
  codes: towards practical large-scale quantum computation,'' \emph{Physical
  Review A}, vol.~86, no.~3, Sep. 2012. [Online]. Available:
  \url{https://doi.org/10.1103/physreva.86.032324}
\BIBentrySTDinterwordspacing

\bibitem{Breuckmann_2021}
\BIBentryALTinterwordspacing
N.~P. Breuckmann and J.~N. Eberhardt, ``Quantum low-density parity-check
  codes,'' \emph{{PRX} Quantum}, vol.~2, no.~4, Oct. 2021. [Online]. Available:
  \url{https://doi.org/10.1103/prxquantum.2.040101}
\BIBentrySTDinterwordspacing

\bibitem{Lai_2012}
\BIBentryALTinterwordspacing
C.-Y. Lai and T.~A. Brun, ``Entanglement-assisted quantum error-correcting
  codes with imperfect ebits,'' \emph{Physical Review A}, vol.~86, no.~3, Sep.
  2012. [Online]. Available: \url{https://doi.org/10.1103/physreva.86.032319}
\BIBentrySTDinterwordspacing

\bibitem{Grassl_1997}
\BIBentryALTinterwordspacing
M.~Grassl, T.~Beth, and T.~Pellizzari, ``Codes for the quantum erasure
  channel,'' \emph{Physical Review A}, vol.~56, no.~1, pp. 33--38, Jul. 1997.
  [Online]. Available: \url{https://doi.org/10.1103/physreva.56.33}
\BIBentrySTDinterwordspacing

\bibitem{Grassl1999QuantumBC}
M.~Grassl and T.~Beth, ``Quantum {BCH} codes,'' Oct. 1999,
  arXiv:quant-ph/9910060.

\bibitem{barkeshli2023highergroup}
M.~Barkeshli, Y.-A. Chen, P.-S. Hsin, and R.~Kobayashi, ``Higher-group symmetry
  in finite gauge theory and stabilizer codes,'' Jul. 2023, arXiv:2211.11764.

\bibitem{Brun2006CorrectingQE}
\BIBentryALTinterwordspacing
T.~Brun, I.~Devetak, and M.-H. Hsieh, ``Correcting quantum errors with
  entanglement,'' \emph{Science}, vol. 314, no. 5798, pp. 436--439, Oct. 2006.
  [Online]. Available:
  \url{https://www.science.org/doi/abs/10.1126/science.1131563}
\BIBentrySTDinterwordspacing

\bibitem{Wang2022}
\BIBentryALTinterwordspacing
W.~Wang and J.~Li, ``Two classes of entanglement-assisted quantum {MDS} codes
  from generalized {R}eed--{S}olomon codes,'' \emph{Quantum Information
  Processing}, vol.~21, no.~7, p. 245, Jul. 2022. [Online]. Available:
  \url{https://doi.org/10.1007/s11128-022-03595-6}
\BIBentrySTDinterwordspacing

\bibitem{Gottesman_2001}
\BIBentryALTinterwordspacing
D.~Gottesman, A.~Kitaev, and J.~Preskill, ``Encoding a qubit in an
  oscillator,'' \emph{Physical Review A}, vol.~64, p. 012310, Jun. 2001.
  [Online]. Available:
  \url{https://link.aps.org/doi/10.1103/PhysRevA.64.012310}
\BIBentrySTDinterwordspacing

\bibitem{PRXQuantum.2.020101}
\BIBentryALTinterwordspacing
A.~L. Grimsmo and S.~Puri, ``Quantum error correction with the
  {G}ottesman-{K}itaev-{P}reskill code,'' \emph{PRX Quantum}, vol.~2, p.
  020101, Jun. 2021. [Online]. Available:
  \url{https://link.aps.org/doi/10.1103/PRXQuantum.2.020101}
\BIBentrySTDinterwordspacing

\bibitem{Zhang_2021}
\BIBentryALTinterwordspacing
J.~Zhang, J.~Zhao, Y.-C. Wu, and G.-P. Guo, ``Quantum error correction with the
  color-{G}ottesman-{K}itaev-{P}reskill code,'' \emph{Physical Review A}, vol.
  104, no.~6, Dec. 2021. [Online]. Available:
  \url{https://doi.org/10.1103/physreva.104.062434}
\BIBentrySTDinterwordspacing

\bibitem{GR-chapter-in-LB}
M.~Grassl and M.~R\"otteler, ``Nonadditive quantum codes,'' in \emph{Quantum
  Error Correction}, D.~A. Lidar and T.~A. Brun, Eds.\hskip 1em plus 0.5em
  minus 0.4em\relax Cambridge University Press, Sep. 2013, pp. 261--278.

\bibitem{Grassl1}
------, ``Non-additive quantum codes from {G}oethals and {P}reparata codes,''
  in \emph{2008 IEEE Information Theory Workshop}, 2008, pp. 396--400.

\bibitem{grassl2}
------, ``Quantum {G}oethals-{P}reparata codes,'' in \emph{2008 IEEE
  International Symposium on Information Theory}, 2008, pp. 300--304.

\bibitem{CWSCode}
A.~Cross, G.~Smith, J.~A. Smolin, and B.~Zeng, ``Codeword stabilized quantum
  codes,'' \emph{IEEE Transactions on Information Theory}, vol.~55, no.~1, pp.
  433--438, 2009.

\bibitem{CWSCode2}
\BIBentryALTinterwordspacing
X.~Chen, B.~Zeng, and I.~L. Chuang, ``Nonbinary codeword-stabilized quantum
  codes,'' \emph{Physical Review A}, vol.~78, p. 062315, Dec. 2008. [Online].
  Available: \url{https://link.aps.org/doi/10.1103/PhysRevA.78.062315}
\BIBentrySTDinterwordspacing

\bibitem{Calderbank1998}
A.~Calderbank, E.~Rains, P.~Shor, and N.~Sloane, ``Quantum error correction via
  codes over gf(4),'' \emph{IEEE Transactions on Information Theory}, vol.~44,
  no.~4, pp. 1369--1387, 1998.

\bibitem{PhysRevA.87.012319}
\BIBentryALTinterwordspacing
L.~Arnaud and N.~J. Cerf, ``Exploring pure quantum states with maximally mixed
  reductions,'' \emph{Physical Review A}, vol.~87, p. 012319, Jan. 2013.
  [Online]. Available:
  \url{https://link.aps.org/doi/10.1103/PhysRevA.87.012319}
\BIBentrySTDinterwordspacing

\bibitem{PhysRevA.69.052330}
\BIBentryALTinterwordspacing
A.~J. Scott, ``Multipartite entanglement, quantum-error-correcting codes, and
  entangling power of quantum evolutions,'' \emph{Physical Review A}, vol.~69,
  p. 052330, May 2004. [Online]. Available:
  \url{https://link.aps.org/doi/10.1103/PhysRevA.69.052330}
\BIBentrySTDinterwordspacing

\bibitem{Huber_2020}
\BIBentryALTinterwordspacing
F.~Huber and M.~Grassl, ``Quantum codes of maximal distance and highly
  entangled subspaces,'' \emph{Quantum}, vol.~4, p. 284, Jun. 2020. [Online].
  Available: \url{http://dx.doi.org/10.22331/q-2020-06-18-284}
\BIBentrySTDinterwordspacing

\bibitem{sowrabh}
\BIBentryALTinterwordspacing
S.~Sudevan, D.~Azses, E.~G. Dalla~Torre, E.~Sela, and S.~Das, ``Multipartite
  entanglement and quantum error identification in $d$-dimensional cluster
  states,'' \emph{Physical Review A}, vol. 108, p. 022426, Aug. 2023. [Online].
  Available: \url{https://link.aps.org/doi/10.1103/PhysRevA.108.022426}
\BIBentrySTDinterwordspacing

\bibitem{Briegel2009}
\BIBentryALTinterwordspacing
H.~J. Briegel, D.~E. Browne, W.~D{\"u}r, R.~Raussendorf, and M.~Van~den Nest,
  ``Measurement-based quantum computation,'' \emph{Nature Physics}, vol.~5,
  no.~1, pp. 19--26, Jan. 2009. [Online]. Available:
  \url{https://doi.org/10.1038/nphys1157}
\BIBentrySTDinterwordspacing

\bibitem{NIELSEN2006147}
\BIBentryALTinterwordspacing
M.~A. Nielsen, ``Cluster-state quantum computation,'' \emph{Reports on
  Mathematical Physics}, vol.~57, no.~1, pp. 147--161, Feb. 2006. [Online].
  Available:
  \url{https://www.sciencedirect.com/science/article/pii/S0034487706800145}
\BIBentrySTDinterwordspacing

\bibitem{oneWayQC}
\BIBentryALTinterwordspacing
R.~Raussendorf and H.~J. Briegel, ``A one-way quantum computer,''
  \emph{Physical Review Letters}, vol.~86, pp. 5188--5191, May 2001. [Online].
  Available: \url{https://link.aps.org/doi/10.1103/PhysRevLett.86.5188}
\BIBentrySTDinterwordspacing

\bibitem{PhysRevA.62.052316}
\BIBentryALTinterwordspacing
X.~Zhou, D.~W. Leung, and I.~L. Chuang, ``Methodology for quantum logic gate
  construction,'' \emph{Physical Review A}, vol.~62, p. 052316, Oct. 2000.
  [Online]. Available:
  \url{https://link.aps.org/doi/10.1103/PhysRevA.62.052316}
\BIBentrySTDinterwordspacing

\bibitem{HardnessOfQuantumCodes}
U.~Kapshikar and S.~Kundu, ``On the hardness of the minimum distance problem of
  quantum codes,'' \emph{IEEE Transactions on Information Theory}, vol.~69,
  no.~10, pp. 6293--6302, 2023.

\bibitem{ERS66}
P.~Erd\"os, A.~R\'enyi, and V.~T. S\'os, ``On a problem of graph theory,''
  \emph{Studia Scientiarum Mathematicarum Hungarica}, vol.~1, pp. 215--235,
  1966.

\bibitem{Raissi2020}
Z.~Raissi, ``Modifying method of constructing quantum codes from highly
  entangled states,'' \emph{IEEE Access}, vol.~8, pp. 222\,439--222\,448, 2020.

\bibitem{Alsina_2021}
\BIBentryALTinterwordspacing
D.~Alsina and M.~Razavi, ``Absolutely maximally entangled states,
  quantum-maximum-distance-separable codes, and quantum repeaters,''
  \emph{Physical Review A}, vol. 103, no.~2, Feb. 2021. [Online]. Available:
  \url{http://dx.doi.org/10.1103/PhysRevA.103.022402}
\BIBentrySTDinterwordspacing

\bibitem{Rains1998}
E.~Rains, ``Quantum weight enumerators,'' \emph{IEEE Transactions on
  Information Theory}, vol.~44, no.~4, pp. 1388--1394, 1998.

\bibitem{andreasBlass}
\BIBentryALTinterwordspacing
A.~Blass, ``Largest subspace that doesn't intersect a given set,''
  MathOverflow, 2010. [Online]. Available:
  \url{https://mathoverflow.net/q/33407}
\BIBentrySTDinterwordspacing

\bibitem{Roth2006}
R.~M. Roth, \emph{Introduction to Coding Theory}.\hskip 1em plus 0.5em minus
  0.4em\relax Cambridge University Press, 2006.

\bibitem{eczoo_quantum_hamming_css}
\BIBentryALTinterwordspacing
``\([[2^r-1, 2^r-2r-1, 3]]\) quantum {H}amming code,'' in \emph{The Error
  Correction Zoo}, V.~V. Albert and P.~Faist, Eds., 2022. [Online]. Available:
  \url{https://errorcorrectionzoo.org/c/quantum_hamming_css}
\BIBentrySTDinterwordspacing

\bibitem{Gottesman_code}
\BIBentryALTinterwordspacing
D.~Gottesman, ``Class of quantum error-correcting codes saturating the quantum
  {H}amming bound,'' \emph{Physical Review A}, vol.~54, pp. 1862--1868, Sep
  1996. [Online]. Available:
  \url{https://link.aps.org/doi/10.1103/PhysRevA.54.1862}
\BIBentrySTDinterwordspacing

\bibitem{eczoo_quantum_gottesman}
\BIBentryALTinterwordspacing
``\([[2^r, 2^r-r-2, 3]]\) {G}ottesman code,'' in \emph{The Error Correction
  Zoo}, V.~V. Albert and P.~Faist, Eds., 2022. [Online]. Available:
  \url{https://errorcorrectionzoo.org/c/quantum_hamming}
\BIBentrySTDinterwordspacing

\bibitem{Grassl:codetables}
\BIBentryALTinterwordspacing
M.~Grassl, ``{Bounds on the minimum distance of linear codes and quantum
  codes},'' accessed on 2025-05-27. [Online]. Available:
  \url{http://www.codetables.de}
\BIBentrySTDinterwordspacing

\bibitem{Nielsen_Chuang_2010}
M.~A. Nielsen and I.~L. Chuang, \emph{Quantum Computation and Quantum
  Information: 10th Anniversary Edition}.\hskip 1em plus 0.5em minus
  0.4em\relax Cambridge University Press, 2010.

\bibitem{guneri_thesis}
\BIBentryALTinterwordspacing
C.~G\"uneri, ``{A}rtin-{S}chreier families and 2-{D} cycle codes,'' Ph.D.
  dissertation, Louisiana State University, 2001. [Online]. Available:
  \url{https://repository.lsu.edu/gradschool_dissertations/2493}
\BIBentrySTDinterwordspacing

\bibitem{Magma}
\BIBentryALTinterwordspacing
W.~Bosma, J.~Cannon, and C.~Playoust, ``The {M}agma algebra system. {I}. {T}he
  user language,'' \emph{J. Symbolic Comput.}, vol.~24, no. 3-4, pp. 235--265,
  1997. [Online]. Available: \url{http://dx.doi.org/10.1006/jsco.1996.0125}
\BIBentrySTDinterwordspacing

\bibitem{finitefields}
R.~Lidl and H.~Niederreiter, \emph{Finite Fields}, 2nd~ed., ser. Encyclopedia
  of Mathematics and its Applications.\hskip 1em plus 0.5em minus 0.4em\relax
  Cambridge University Press, 1996.

\end{thebibliography}

\appendices

\section{Proof of Lemma~\ref{codeStabilizers}}\label{appendix:codeStabilizers}

\begin{proof}
    The states of the form $Z_{\mathbf{c}}\ket{\mathrm{G}}, \mathbf{c} \in \mathcal{C}$, are basis states for the quantum code. Since $\mathcal{S}$ is the stabilizer group of $\ket{\mathrm{G}}$, the stabilizer group of $Z_{\mathbf{c}}\ket{\mathrm{G}}$ is $Z_{\mathbf{c}}\mathcal{S}Z_{\mathbf{c}}$ \cite[Section~10.5.2]{Nielsen_Chuang_2010}. Define $\mathcal{T} := \bigcap_{\mathbf{c} \in \mathcal{C}}{Z_{\mathbf{c}}\mathcal{S}Z_{\mathbf{c}}}$. Since any element in $\mathcal{T}$ stabilizes all $k$ basis states of the quantum code, $\mathcal{T} \subseteq \mathcal{S}_\mathcal{C}$. Similarly, any element in $\mathcal{S}_\mathcal{C}$ must stabilize all the basis states of the quantum code, and therefore, $\mathcal{S}_\mathcal{C} \subseteq \mathcal{T}$. Hence, $\mathcal{S}_\mathcal{C} = \mathcal{T}$.

    Rewrite $\mathcal{T}$ as $\mathcal{S} \cap \left(\bigcap\limits_{\mathbf{0} \neq \mathbf{c} \in \mathcal{C}}{Z_{\mathbf{c}}\mathcal{S}Z_\mathbf{c}}\right)$ and define $\mathcal{T}^\prime := \mathcal{S} \cap \left(\bigcap\limits_{i=1}^{k}{Z_{\mathbf{a}_i}\mathcal{S}Z_{\mathbf{a}_i}}\right)$. Since two Pauli operators either commute or anti-commute, $Z_\mathbf{c}SZ_{\mathbf{c}}$ is either $S$ or $-S$ for all $S\in \mathcal{S}$. This gives an alternative description of $\mathcal{T}$ and $\mathcal{T}^\prime$:
    \begin{equation*}
    \begin{split}
        &\mathcal{T} = \{S \in \mathcal{S}: Z_\mathbf{c}SZ_\mathbf{c} = S \hspace{10pt} \forall \mathbf{c} \in \mathcal{C}\} \hspace{10pt}\text{ and }\\ &\mathcal{T}^\prime = \{S \in \mathcal{S}: Z_{\mathbf{a}_i}SZ_{\mathbf{a}_i} = S \hspace{10pt} \forall i \in [k]\}.
    \end{split}
    \end{equation*}
    Equivalently,
    \begin{equation*}
    \begin{split}
        &\mathcal{T} = \{S \in \mathcal{S}: Z_\mathbf{c}S = SZ_\mathbf{c} \hspace{10pt} \forall \mathbf{c} \in \mathcal{C}\} \hspace{10pt}\text{ and }\\ &\mathcal{T}^\prime = \{S \in \mathcal{S}: Z_{\mathbf{a}_i}S = SZ_{\mathbf{a}_i} \hspace{10pt} \forall i \in [k]\}.
    \end{split}
    \end{equation*}
    Since $\mathbf{a}_i$'s are also codewords in $\mathcal{C}$, by definition, $\mathcal{T} \subseteq \mathcal{T}^\prime$. In the other direction, any element $M \in \mathcal{T}^\prime$ commutes with $Z_{\mathbf{a}_i}$ for all $i \in [k]$, and hence, they commute with $Z_\mathbf{c}$ for all $\mathbf{c} \in \mathcal{C}$. Thus, $\mathcal{T}^\prime \subseteq \mathcal{T}$, and hence, $\mathcal{S}_{\mathcal{C}} = \mathcal{T}=\mathcal{T}^\prime$.
    Defining $\mathbf{a}_0$ to be the zero vector, $\mathcal{S}_\mathcal{C}$ can be written as
    $$
    \mathcal{S}_\mathcal{C} = \bigcap_{i=0}^{k}{Z_{\mathbf{a}_i}\mathcal{S}Z_{\mathbf{a}_i}}.
    $$
\end{proof}

\section{Proof of Theorem \ref{theorem3}}\label{appendix:theorem3}
We have already seen that $d_{\min}(\mathcal{C}_r) \ge 3$. The fact that $\mathcal{C}_r$ does not contain codewords of the form in \eqref{pattern1}--\eqref{pattern4} is proved in Claims~\ref{claim:1}--\ref{claim:4} below. Claims~\ref{claim:1}--\ref{claim:3} do not require the condition that $\Log(1+\alpha) \not\equiv 2 \pmod{3}$, but Claim~\ref{claim:4} does.

\begin{claim}
$\mathcal{C}_r$ contains no codewords of the form in $\eqref{pattern1}$.
\label{claim:1}
\end{claim}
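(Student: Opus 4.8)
The plan is to exploit the cyclic structure of $\mathcal{C}_r$ in order to reduce the claim to checking a short explicit list of polynomials against the two parity checks defining $\mathcal{C}_r$. A codeword of the form \eqref{pattern1} is one all of whose $1$s, when the coordinates are read cyclically, are confined to a window of four consecutive positions whose first and last entries are both $1$. Since $\mathcal{C}_r$ is cyclic and $n = 2^{2r}-1 \ge 15$ for $r \ge 2$ (so a length-$4$ window cannot overlap itself), it is enough to rule out codewords whose support lies in $\{0,1,2,3\}$ and contains both $0$ and $3$. Splitting on which of the middle positions $1,2$ are also in the support, there are exactly four such polynomials: $1+x^3$, $1+x+x^3$, $1+x^2+x^3$, and $1+x+x^2+x^3$, of which the first has weight $2$.

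The next step is to eliminate each candidate, using that $c(x) \in \mathcal{C}_r$ forces $c(\alpha) = 0$ and $c(\alpha^b) = 0$, together with the facts already recorded in the construction: $\alpha$ has multiplicative order $2^{2r}-1 > 3$, and $\alpha^b$ is a primitive cube root of unity, so $(\alpha^b)^3 = 1$ and $1 + \alpha^b + \alpha^{2b} = 0$. The polynomial $1+x^3$ is knocked out by the first check, since $1+\alpha^3 \ne 0$ (equivalently, it has weight $2 < d_{\min}(\mathcal{C}_r)$). For each of $1+x+x^3$, $1+x^2+x^3$, $1+x+x^2+x^3$, I would evaluate at $x = \alpha^b$ and reduce using $(\alpha^b)^3 = 1$ and $1+\alpha^b+\alpha^{2b} = 0$, obtaining the nonzero residues $\alpha^b$, $\alpha^{2b}$, and $\alpha^b + \alpha^{2b} = 1$, respectively; equivalently, one can just note that each of these three polynomials has nonzero remainder modulo $M_{\alpha^b}(x) = x^2+x+1$. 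Hence none of the four candidates lies in $\mathcal{C}_r$, proving the claim. (Note that the hypothesis $\Log(1+\alpha)\not\equiv 2\pmod 3$ plays no role here; it enters only in the later claim.)

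The computations per candidate are one-liners, so the only point requiring any care — and the closest thing to an obstacle — is the combinatorial reduction at the start: one must verify that the flanking $0$-runs together with cyclicity genuinely pin the support down to four consecutive coordinates, that the resulting list of four polynomials is exhaustive, and that $n \ge 15$ is in force so that the window cannot wrap around on itself. Once that bookkeeping is settled, the verification is immediate.
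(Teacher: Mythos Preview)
Your proposal is correct and follows essentially the same approach as the paper: reduce to the four possible support patterns in a length-$4$ window, dispose of the weight-$2$ case via $d_{\min}(\mathcal{C}_r)\ge 3$ (equivalently, the first row of $H$), and eliminate the remaining three by the second-row parity check at $\alpha^b$. The paper states this more tersely (``at least one of the $\ast$'s is a $1$, and then such an $\mathbf{x}$ cannot be in the nullspace of the second row of $H$''), but the content is identical; your explicit evaluations at $\alpha^b$ are exactly what ``easily checked'' is hiding.
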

\begin{proof}
Suppose $\mathbf{x}$ is of the form in $\eqref{pattern1}$. For $\mathbf{x}$ to be in $\mathcal{C}$, it must have Hamming weight at least $3$, so at least one of the $\ast$'s in $\mathbf{x}$ is a $1$. However, it is then easily checked that such an $\mathbf{x}$ cannot be in the nullspace of the second row of the $H$ matrix.
\end{proof}

\begin{claim}
$\mathcal{C}_r$ contains no codewords of the form in $\eqref{pattern2}$.
\label{claim:2}
\end{claim}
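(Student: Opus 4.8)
\textbf{Proof plan for Claim~\ref{claim:2}.}
The plan is to proceed exactly as in Claim~\ref{claim:1}: assume for contradiction that $\mathbf{x} \in \mathcal{C}_r$ has the form $\eqref{pattern2}$, namely its nonzero coordinates occupy positions $i$, $i+2$, $i+4$ (using $0$-indexed, cyclic coordinate labels), with the middle position $i+2$ forced to be $0$ and the two $\ast$'s being arbitrary; since $\mathbf{x}$ must have Hamming weight at least $3$ to lie in $\mathcal{C}_r$ (recall $d_{\min}(\mathcal{C}_r)\ge 3$), the two $\ast$ positions (which are positions $i+1$ and $i+3$) must in fact both be $1$. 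So the support of $\mathbf{x}$ is exactly $\{i, i+1, i+3, i+4\}$, a weight-$4$ codeword. Actually I should re-read the pattern: \eqref{pattern2} is $1 \ast 0 \ast 1$ flanked by runs of zeros, so the five displayed coordinates are $1,\ast,0,\ast,1$ at positions $i, i+1, i+2, i+3, i+4$; the weight-$\ge 3$ requirement forces at least one $\ast$ to be $1$.

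The key computational step is to impose the first parity-check row of $H$ (the one with entries $\alpha^j$, giving a Reed--Solomon-type check over $\mathbb{F}_{2^{2r}}$) and the second row (the one with entries $\alpha^{bj}$, which by the remark is $3$-periodic with pattern $1,\alpha^b,\alpha^{2b}$ satisfying $1+\alpha^b+\alpha^{2b}=0$). Writing the support positions relative to $i$ and factoring out the common power $\alpha^i$ (resp.\ $\alpha^{bi}$), the first check becomes an equation of the form $1 + \epsilon_1\alpha + \epsilon_3\alpha^{3} + \alpha^{4} = 0$ in $\mathbb{F}_{2^{2r}}$ where $\epsilon_1,\epsilon_3 \in \{0,1\}$ are the two $\ast$ values, not both zero; and the second check becomes a constraint among the residues of $\{0,1,3,4\}$ (or the appropriate subset) modulo $3$. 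I would first use the second-row check to rule out as many of the three remaining sign-patterns $(\epsilon_1,\epsilon_3) \in \{(1,0),(0,1),(1,1)\}$ as possible by a mod-$3$ counting argument: the positions $0,1,3,4$ have residues $0,1,0,1 \pmod 3$, so e.g.\ the full weight-$4$ word contributes $1 + \alpha^b + \alpha^b + 1 = 0$ automatically — so the second row does \emph{not} kill the $(1,1)$ case, and one must then fall back on the first-row (RS) check to derive a contradiction, using that $\alpha$ is primitive (hence $1+\alpha+\alpha^3+\alpha^4 = (1+\alpha)(1+\alpha+\alpha^2+\alpha^3)$ or a similar factorization cannot vanish for a primitive $\alpha$ — or more simply that a polynomial identity forcing $\alpha$ to satisfy a low-degree equation contradicts $\deg M_\alpha = 2r \ge 4$). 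For the cases $(1,0)$ and $(0,1)$ the second-row residues are $\{0,1,1\}$ and $\{0,0,1\}$ respectively, giving $1 + \alpha^b + 1 = \alpha^b \ne 0$ and $1+1+\alpha^b = \alpha^b \ne 0$, so those are immediately excluded by the second row alone, exactly mirroring Claim~\ref{claim:1}.

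The main obstacle I anticipate is the residual $(\epsilon_1,\epsilon_3)=(1,1)$ case, i.e.\ the genuine weight-$4$ codeword with support $\{i,i+1,i+3,i+4\}$: here the second parity check is vacuous and one must extract a contradiction purely from the first-row check $1+\alpha+\alpha^3+\alpha^4 = 0$, equivalently $(1+\alpha)(1+\alpha)(1+\alpha^2)=0$ — wait, $1+\alpha+\alpha^3+\alpha^4 = (1+\alpha)(1+\alpha^3) = (1+\alpha)^2(1+\alpha+\alpha^2)$ over $\mathbb{F}_2$. Since $\alpha$ is primitive in $\mathbb{F}_{2^{2r}}$ with $2r\ge 4$, neither $1+\alpha=0$ (as $\alpha\ne 1$) nor $1+\alpha+\alpha^2 = 0$ (which would force $\alpha^3=1$, impossible since the order of $\alpha$ is $2^{2r}-1 > 3$) can hold, so $1+\alpha+\alpha^3+\alpha^4 \ne 0$. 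Thus no such codeword exists, completing the claim. I would present this factorization as the crux and keep the mod-$3$ bookkeeping for the other two patterns terse.
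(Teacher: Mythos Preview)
Your proposal is correct and follows essentially the same route as the paper: use the second row of $H$ (the $3$-periodic check) to force both $\ast$'s to be $1$, then use the first row to obtain $\alpha^i(1+\alpha)(1+\alpha^3)=0$, which is impossible for a primitive $\alpha$ in $\mathbb{F}_{2^{2r}}$ with $r\ge 2$. One small bookkeeping slip: in the $(\epsilon_1,\epsilon_3)=(1,0)$ case your residue set $\{0,1,1\}$ gives $1+\alpha^b+\alpha^b=1$, not $1+\alpha^b+1=\alpha^b$ as you wrote, but the conclusion ``$\ne 0$'' is unaffected.
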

\begin{proof}
Suppose $\mathbf{x}$ is of the form in $\eqref{pattern2}$. Now, it can be checked that such an $\mathbf{x}$ is in the nullspace of the second row of the $H$ matrix iff both the $\ast$'s in $\mathbf{x}$ are $1$s. In this case, we would have ${H \mathbf{x}^T = [\alpha^i+\alpha^{i+1}+\alpha^{i+3}+\alpha^{i+4}, \ \ 0]^T}$. So, for $\mathbf{x}$ to be in $\mathcal{C}_r$, we would need ${\alpha^i+\alpha^{i+1}+\alpha^{i+3}+\alpha^{i+4} = 0}$, or equivalently, $\alpha^i(1+\alpha)(1+\alpha^3) = 0$. But this is impossible as $\alpha$ is a primitive element of $\mathbb{F}_{2^{2r}}$ with $r \ge 2$.
\end{proof}

\begin{claim}
$\mathcal{C}_r$ contains no codewords of the form in $\eqref{pattern3}$.
\label{claim:3}
\end{claim}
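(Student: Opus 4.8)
The plan is to argue by contradiction: suppose $\mathbf{c}\in\mathcal{C}_r$ has the shape of \eqref{pattern3}, and derive a contradiction from the two parity checks $\mathbf{c}(\alpha)=0$ and $\mathbf{c}(\alpha^b)=0$ that define the cyclic code $\mathcal{C}_r$. Such a $\mathbf{c}$ has its support inside two disjoint length-$3$ cyclic intervals (``blocks'') $\{a,a+1,a+2\}$ and $\{b,b+1,b+2\}$ (indices mod $n$), where coordinates $a,a+2,b,b+2$ are $1$ and coordinates $a+1,b+1$ carry the free bits $\epsilon_1,\epsilon_2\in\{0,1\}$ (the two $\ast$'s). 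Since the two ``$1\ast1$'' groups are separated by a run of $0$s on each side of the circle, one may normalize the representatives so that $3\le b-a\le n-3$. Writing the codeword as a polynomial, block $1$ contributes $\alpha^a(1+\epsilon_1\alpha+\alpha^2)$ to $\mathbf{c}(\alpha)$ and block $2$ contributes $\alpha^b(1+\epsilon_2\alpha+\alpha^2)$, and similarly with $\alpha$ replaced by $\omega:=\alpha^b$ (a primitive cube root of unity, so $1+\omega+\omega^2=0$) for the second check.

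First I would use the second check $\mathbf{c}(\omega)=0$ to prune the case list. Since $1+\epsilon_i\omega+\omega^2=(1+\omega^2)+\epsilon_i\omega=(1+\epsilon_i)\omega$ in characteristic $2$, we get $\mathbf{c}(\omega)=\omega\bigl((1+\epsilon_1)\omega^a+(1+\epsilon_2)\omega^b\bigr)$. As $\omega\ne0$ has order $3$ and $1,\omega$ are $\mathbb{F}_2$-linearly independent, $\mathbf{c}(\omega)=0$ forces either (a) $\epsilon_1=\epsilon_2=1$, or (b) $\epsilon_1=\epsilon_2=0$ together with $\omega^a=\omega^b$, i.e.\ $a\equiv b\pmod 3$. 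For the two ``mixed'' choices $(\epsilon_1,\epsilon_2)\in\{(0,1),(1,0)\}$ the second check already fails, so those codewords are not in $\mathcal{C}_r$.

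It then remains to rule out (a) and (b) using the first check $\mathbf{c}(\alpha)=0$. In case (a), $\mathbf{c}(\alpha)=(\alpha^a+\alpha^b)(1+\alpha+\alpha^2)$; in case (b), $\mathbf{c}(\alpha)=(\alpha^a+\alpha^b)(1+\alpha^2)=(\alpha^a+\alpha^b)(1+\alpha)^2$. In both cases $\alpha^a\ne\alpha^b$, since $b-a\not\equiv0\pmod{n}$ and $\alpha$ has order $n$; and the remaining factor is nonzero because $\alpha$, being primitive in $\mathbb{F}_{2^{2r}}$ with $r\ge2$, has order $2^{2r}-1>3$, so $\alpha$ is not a cube root of unity (hence $1+\alpha+\alpha^2\ne0$) and $\alpha\ne1$ (hence $(1+\alpha)^2\ne0$). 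Thus $\mathbf{c}(\alpha)\ne0$ in each surviving case, contradicting $\mathbf{c}\in\mathcal{C}_r$. Note that, consistently with the remark preceding Claim~\ref{claim:1}, the hypothesis $\Log(1+\alpha)\not\equiv 2\pmod 3$ is never used here; it is needed only for Claim~\ref{claim:4}.

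I expect the only mildly delicate part to be the bookkeeping of the cyclic structure: verifying that the two ``$1\ast1$'' blocks really are disjoint length-$3$ residue intervals, that the empty-middle-run case $b=a+3$ is the extreme instance, and that $a,b$ can be normalized so that $\alpha^a\ne\alpha^b$. Once that is set up, the algebra is a single factorization in each of the two surviving cases, so the computational content is very light.
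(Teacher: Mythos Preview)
Your argument is correct and follows essentially the same route as the paper's proof: use the second parity check to force $\epsilon_1=\epsilon_2$, then factor the first check as $(\alpha^a+\alpha^{b'})(1+c\alpha+\alpha^2)$ and observe that neither factor can vanish since $\alpha$ is primitive of degree $2r>2$. One cosmetic point: you use the symbol $b$ both for the paper's fixed exponent $b=(2^{2r}-1)/3$ (in $\omega:=\alpha^b$) and for the start index of the second block, which should be renamed to avoid confusion.
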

\begin{proof}
Suppose $\mathbf{x}$ is of the form in $\eqref{pattern3}$. For such an $\mathbf{x}$ to be in the nullspace of the second row of the $H$ matrix, either both the $\ast$'s in $\mathbf{x}$ are $0$s or both are $1$s. In either case, we obtain that $H\mathbf{x}^T$ is of the form $[\alpha^i+c \, \alpha^{i+1} + \alpha^{i+2}+\alpha^{i+3+j}+c \, \alpha^{i+3+j+1} + \alpha^{i+3+j+2}, \ \ 0]^T$, for some $c \in \{0,1\}$ and $i,j \ge 0$. (Here, $j$ is the length of the run of $0$s between the two `$1 \ast 1$' blocks, so $j \leq n-6$.) So, for $\mathbf{x}$ to be in $\mathcal{C}_r$, we would need $\alpha^i+c \, \alpha^{i+1} + \alpha^{i+2}+\alpha^{i+3+j}+c \, \alpha^{i+3+j+1} + \alpha^{i+3+j+2} = 0$, or equivalently, $\alpha^i(1+c \, \alpha + \alpha^2)(1+\alpha^{j+3}) = 0$. By primitivity of $\alpha$, we cannot have $\alpha^i = 0$ or $1+\alpha^{j+3} = 0$ (note that $j+3 < 2^{2r}-1$). Moreover, $1 + c \, \alpha + \alpha^2 = 0$ would mean that $\alpha$ is a root of a degree-$2$ polynomial in $\mathbb{F}_2[x]$, which is impossible, as the minimal polynomial of $\alpha$ has degree $2r > 2$. We thus conclude that we cannot have $\alpha^i(1+c \, \alpha + \alpha^2)(1+\alpha^{j+3}) = 0$, and so no $\mathbf{x}$ of the form in \eqref{pattern3} can be in the code $\mathcal{C}_r$.
\end{proof}

\begin{claim}
If \, $\Log(1 + \alpha) \not\equiv 2 \pmod{3}$, then $\mathcal{C}_r$ contains no codewords of the form in $\eqref{pattern4}$.
\label{claim:4}
\end{claim}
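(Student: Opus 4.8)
The plan is to assume $\mathbf{x}\in\mathcal{C}_r$ has the shape in \eqref{pattern4} and reach a contradiction by using the two rows of $H$ for different purposes: the first row gives the ``BCH-type'' constraint $\sum_{t\in\mathrm{supp}(\mathbf{x})}\alpha^t=0$, while the second row, by the choice of $b$, is periodic of period $3$ and hence depends only on the residues mod $3$ of the positions in $\mathrm{supp}(\mathbf{x})$. Concretely, since $1+\alpha^b+\alpha^{2b}=0$ and $\{1,\alpha^b\}$ is an $\mathbb{F}_2$-basis of $\mathbb{F}_4$, the second-row check on a vector with support $P$ holds iff, for each residue class mod $3$, the number of positions of $P$ in that class has a fixed parity; for a weight-$3$ support this means the three positions occupy the three distinct classes mod $3$, and for the weight-$4$ supports arising below it will fail outright.

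First I would note that weight-$\le 2$ vectors are not codewords (since $d_{\min}(\mathcal{C}_r)\ge 3$), so $\mathbf{x}$ has weight $\ge 3$, and then split \eqref{pattern4} according to the two ``$\ast$''s. If the middle $\ast$ is $1$ and the trailing $\ast$ is $0$, then $\mathbf{x}$ is a cyclic shift of $1\,1\,1$ on three consecutive positions $a,a+1,a+2$; the first-row check then reads $\alpha^a(1+\alpha+\alpha^2)=0$, impossible since $\alpha$ has order $n=2^{2r}-1>3$. If both $\ast$'s are $1$, then $\mathrm{supp}(\mathbf{x})=\{a,a+1,a+2,\ell\}$ for a fourth position $\ell$; the consecutive triple $a,a+1,a+2$ already realises all three residue classes mod $3$, so the second-row check evaluates to $\alpha^{b\ell}\ne 0$, again impossible. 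This leaves the single remaining case (middle $\ast$ equal to $0$, trailing $\ast$ equal to $1$), which is where the hypothesis is needed.

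In that case $\mathrm{supp}(\mathbf{x})=\{p,\,p+2,\,\ell\}$ for some $\ell\notin\{p,p+1,p+2\}$ (the coordinates being read on a circle). The second-row check forces the three positions into distinct residue classes mod $3$; since $p$ and $p+2$ already lie in two distinct classes, $\ell$ must lie in the third, i.e.\ $\ell\equiv p+1\pmod 3$, so $\ell-p\equiv 1\pmod 3$. The first-row check gives $\alpha^\ell=\alpha^p+\alpha^{p+2}=\alpha^p(1+\alpha)^2$, hence $\alpha^{\ell-p}=(1+\alpha)^2=\alpha^{2\Log(1+\alpha)}$, so $\ell-p\equiv 2\Log(1+\alpha)\pmod n$, and since $3\mid n$ also $\ell-p\equiv 2\Log(1+\alpha)\pmod 3$. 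Combining the two congruences yields $2\Log(1+\alpha)\equiv 1\pmod 3$, i.e.\ $\Log(1+\alpha)\equiv 2\pmod 3$, contradicting the hypothesis. Hence no $\mathbf{x}$ of the form \eqref{pattern4} lies in $\mathcal{C}_r$.

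I expect the main obstacle to be bookkeeping rather than depth: one must carefully translate the schematic ``runs of zeros (possibly empty) on a circle'' into the exact list of admissible supports, and verify that every coincidence of $\ell$ with a neighbour of the $1\ast 1$ block still falls into one of the three cases above (in particular that an $\ell$ adjacent to the block either produces a weight-$\le 2$ vector or violates the mod-$3$ condition). The genuinely new ingredient compared with Claims~\ref{claim:1}--\ref{claim:3} is the coupling of the additive constraint $\alpha^{\ell-p}=(1+\alpha)^2$ coming from the first row with the residue-mod-$3$ constraint coming from the second row, which is precisely the point at which $\Log(1+\alpha)\bmod 3$ enters as the obstruction.
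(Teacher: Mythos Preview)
Your proof is correct and follows essentially the same approach as the paper: both use the second row of $H$ (the period-$3$ $\mathbb{F}_4$-valued check) to reduce to the two weight-$3$ supports $\{p,p+1,p+2\}$ and $\{p,p+2,\ell\}$, dispatch the first via $1+\alpha+\alpha^2\ne 0$, and in the second combine $\alpha^{\ell-p}=(1+\alpha)^2$ with $\ell-p\equiv 1\pmod 3$ to force $\Log(1+\alpha)\equiv 2\pmod 3$. The only cosmetic difference is that the paper phrases the last step as $\Log(1+\alpha^2)\equiv 1\pmod 3$ and then passes to $\Log(1+\alpha)$ by squaring, whereas you use $1+\alpha^2=(1+\alpha)^2$ directly; these are the same computation.
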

\begin{proof}
Suppose $\mathbf{x}$ is of the form in $\eqref{pattern4}$. Such an $\mathbf{x}$ can be in the nullspace of the second row of the $H$ matrix only if exactly one of the two $\ast$'s in $\mathbf{x}$ is a $1$. In other words, an $\mathbf{x}$ of the form in \eqref{pattern4} is in $\mathcal{C}_r$ only if $\mathbf{x} = 0 \cdots 0 1 1 1 0 \cdots 0$   or  $\mathbf{x} = 0 \cdots 0 1 0 1 0 \cdots 0 1 0 \cdots 0$. (Here, the $0 \cdots 0$ blocks represent runs, possibly empty, of $0$s.) 

If $\mathbf{x} = 0 \cdots 0 1 1 1 0 \cdots 0$, then $H\mathbf{x}^T = [\alpha^i(1+\alpha+\alpha^2), \ \ 0]^T$ for some $i$. So, for this $\mathbf{x}$ to be in $\mathcal{C}_r$, we need $1+\alpha+\alpha^2 = 0$, which is impossible since $\alpha$ is a primitive element of $\mathbb{F}_{2^{2r}}$, so its minimal polynomial over $\mathbb{F}_2$ has degree $2r > 2$.

When $\mathbf{x} = 0 \cdots 0 1 0 1 0 \cdots 0 1 0 \cdots 0$, we have $H \mathbf{x}^T = [\alpha^i(1+\alpha^2+\alpha^j), \ \ \alpha^{bi}(1+\alpha^{2b} + \alpha^{bj})]^T$ for some $i \ge 0$ and $j \ge 3$. For the second entry in $H\mathbf{x}^T$ to be $0$, we need ${(\alpha^{b})}^j = \alpha^b$, which (by choice of $b$) happens iff $j \equiv 1 \pmod{3}$. For the first entry in $H\mathbf{x}^T$ to also be $0$, we further need $1+\alpha^2 = \alpha^j$. Thus, we can have $\mathbf{x} \in \mathcal{C}_r$ (i.e., $H\mathbf{x}^T = [0,\ 0]^T$) iff $1+\alpha^2 = \alpha^j$ for some $j \equiv 1\pmod{3}$. In other words, $\mathbf{x} \in \mathcal{C}_r$ iff $\Log(1+\alpha^2) \equiv 1 \pmod{3}$. We next show that the latter condition is equivalent to $\Log(1+\alpha) \equiv 2 \pmod{3}$, which will suffice to prove the claim.

Suppose $\Log(1+\alpha) = \ell$, meaning that $\alpha^{\ell} = 1+\alpha$. By squaring both sides, we have $\alpha^{2\ell \mod (4^r-1)} = 1+\alpha^2$, i.e., $\Log(1+\alpha^2) = 2\ell \mod(4^r-1)$. Since $4^r-1$ is a multiple of $3$, we have ${2\ell \mod(4^r-1) \equiv 1 \pmod{3}}$ iff $2\ell \equiv 1 \pmod{3}$, which in turn holds iff ${\ell \equiv 2 \pmod{3}}$. In other words, $\Log(1+\alpha^2) \equiv 1 \pmod{3}$ iff $\Log(1+\alpha) \equiv 2 \pmod{3}$.
\end{proof}

This completes the proof of the Theorem.

\section{ {Proof of Proposition~\ref{resolved_conjecture}}} \label{conjecture_proof}

We first note that the proposition holds for $r=2$. Indeed, we know that there exists a primitive element $\alpha \in \mathbb{F}_{2^4}$ such that $1+\alpha=\alpha^4$ (as $1+x+x^4 \in \mathbb{F}_{2}[x]$ is a primitive polynomial). For this $\alpha$, we have $\Log(1+\alpha) \equiv 1 \pmod{3}$.

The aim of the rest of this appendix is to prove the following sharpening of Proposition~\ref{resolved_conjecture}, applicable for $r \ge 3$.

\begin{proposition} \label{prop:sharpening}
    For any integer $r\geq 3$, there exists a primitive element $\alpha \in \mathbb{F}_{2^{2r}}$ such that $\Log(1+\alpha) \equiv 0 \pmod 3$.    
\end{proposition}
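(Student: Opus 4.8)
The plan is to prove Proposition~\ref{prop:sharpening} by a counting argument: I will show that the number of primitive elements $\alpha \in \mathbb{F}_{2^{2r}}$ with $\Log(1+\alpha) \equiv 0 \pmod 3$ is strictly positive for all $r \ge 3$. Write $q = 2^{2r}$ and $N = q-1 = 4^r - 1$, which is divisible by $3$. Since the multiplicative group $\mathbb{F}_q^\times$ is cyclic of order $N$, the condition ``$\Log(\beta) \equiv 0 \pmod 3$'' is precisely the condition that $\beta$ is a cube in $\mathbb{F}_q^\times$, i.e.\ $\beta \in (\mathbb{F}_q^\times)^3$, a subgroup of index $3$. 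So I want to count elements $\alpha$ that are simultaneously (i) primitive (a generator of $\mathbb{F}_q^\times$) and (ii) such that $1+\alpha$ is a nonzero cube. Equivalently, setting $\gamma = 1 + \alpha$, I want pairs with $\gamma$ a nonzero cube, $\gamma \ne 1$ (so $\alpha \ne 0$), and $1 + \gamma = \alpha$ primitive.

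First I would set up the standard character-sum machinery. Detecting primitivity of $\alpha$ is done with the indicator $\frac{\phi(N)}{N}\sum_{d \mid N} \frac{\mu(d)}{\phi(d)} \sum_{\chi: \mathrm{ord}(\chi) = d} \chi(\alpha)$ (Vinogradov's formula), and detecting that $1+\alpha$ is a cube is done with $\frac{1}{3}\sum_{j=0}^{2} \psi^j(1+\alpha)$ where $\psi$ is a fixed multiplicative character of order $3$ — with the convention that $\psi(0) = 0$, one also has to handle the degenerate term $\alpha = -1 = 1$ separately. Substituting both indicators and summing over all $\alpha \in \mathbb{F}_q$, the main term (coming from the trivial character in both places) is $\frac{\phi(N)}{3N} \cdot q$, i.e.\ roughly $\frac{1}{3}$ of all primitive elements, and the error terms are sums of the shape $\sum_{\alpha} \chi(\alpha)\psi^j(1+\alpha)$ with $\chi$ nontrivial or $j \ne 0$. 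These are Jacobi-type sums over $\mathbb{F}_q$ and are bounded in absolute value by $\sqrt{q}$ by Weil's theorem (the curve $y = \chi$-part, or more simply the standard estimate $|\sum_\alpha \chi_1(\alpha)\chi_2(1+\alpha)| \le \sqrt q$ when $\chi_1,\chi_2$ are not both trivial).

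Then I would count the number of error terms. The number of characters $\chi$ of order dividing $N$ appearing with a nonzero Möbius coefficient is $W(N) := \prod_{p \mid N}(1 + (p-1)) \cdot$(something) — more precisely the relevant bound is that $\sum_{d \mid N} \frac{|\mu(d)|}{\phi(d)} \cdot \phi(d) = 2^{\omega(N)}$ where $\omega(N)$ is the number of distinct prime factors of $N$; combined with the factor $3$ from the cubic characters and accounting for the one trivial-trivial main term, the total error is at most $\bigl(3 \cdot 2^{\omega(N)} - 1\bigr)\sqrt q$, up to the small $O(1)$ correction from the $\alpha = 1$ degeneracy. Hence it suffices to verify the inequality
\begin{equation*}
\frac{\phi(N)}{3N}\, q \;>\; \bigl(3 \cdot 2^{\omega(N)} - 1\bigr)\sqrt{q} + O(1),
\end{equation*}
i.e.\ roughly $\sqrt{q} \cdot \frac{\phi(N)}{N} > 9 \cdot 2^{\omega(N)}$. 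Since $\frac{\phi(N)}{N} = \prod_{p \mid N}(1 - 1/p) \ge \prod_{p \mid N} \frac{1}{2} \cdot \frac{p-1}{p-1}$ — more usefully, $\frac{\phi(N)}{N} \cdot 2^{\omega(N)} = \prod_{p\mid N} \frac{2(p-1)}{p} \ge 1$ with room to spare once $N$ avoids being divisible only by tiny primes — and $2^{\omega(N)}$ grows far slower than $\sqrt q = 2^r$, this inequality will hold for all $r$ beyond some small threshold; I would then clear the finitely many remaining cases ($r$ up to that threshold, e.g.\ $r \le $ some explicit small bound) by direct computation, exhibiting an explicit primitive $\alpha$ in each.

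The main obstacle I anticipate is making the bound on $\omega(4^r-1)$ and $\phi(4^r-1)/(4^r-1)$ explicit enough that the inequality closes with a genuinely small threshold, so that only a handful of cases need computer verification. One can use the crude bound $2^{\omega(N)} \le N^{1/\log\log N} \cdot C$ or even simpler $\omega(N) \le \log_2 N = 2r$, which already gives error $\le 3 \cdot 4^r \cdot 2^r / $ — no, that is too weak; the point is that $2^{\omega(4^r-1)}$ is in practice much smaller than $2^{2r}$ (it is $O(4^{r/\log\log(4^r)})$), so the honest route is to invoke the standard divisor-type bound $2^{\omega(N)} = o(N^\epsilon)$ to get the inequality for all large $r$, then note that the threshold is small (the known literature on ``primitive elements $\alpha$ with $1+\alpha$ in a prescribed coset'' — e.g.\ work of Cohen and collaborators on the ``primitive element with prescribed norm/trace/coset'' problem — typically reduces to $q \lesssim $ a few thousand), and finally handle $r = 3, 4, \dots$ up to that threshold by explicit search, noting Proposition~\ref{prop:sharpening} is only claimed for $r \ge 3$ and that the even cruder argument already suffices since we just need existence, not a count.
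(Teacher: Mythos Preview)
Your outline is essentially the paper's approach---Vinogradov's primitivity indicator plus a Weil-type bound, reducing to an inequality of the shape $\sqrt{q} \gtrsim c\cdot 2^{\omega(q-1)}$---but you miss two simplifications that let the paper close the argument without any computer check.

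First, the paper avoids the cubic character entirely by substituting $z^3 = 1+\alpha$, so that one only needs to count $z\in\mathbb{F}_q$ with $f(z):=1+z^3$ primitive. The error terms are then Weil sums $\sum_z \chi_d(1+z^3)$, bounded by $2\sqrt{q}$ (polynomial of degree $3$ with three distinct roots). Crucially, the factor $\phi(N)/N$ multiplies both the main term and the error, so it cancels from the comparison; your inequality $\sqrt{q}\cdot\tfrac{\phi(N)}{N} > 9\cdot 2^{\omega(N)}$ is therefore off---done carefully (even with your two-character setup, since the mixed terms with one trivial character vanish), the sufficient condition is simply $2^{r-1}\ge 2^{\omega(4^r-1)}$, i.e.\ $r>\omega(4^r-1)$.

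Second---and this is the point that eliminates the ``finitely many remaining cases'' you wanted to handle by machine---the paper proves $\omega(4^r-1)<r$ for \emph{every} $r\ge 3$ by an elementary product-of-primes argument: if an odd integer $n$ has $\omega(n)\ge r$, then $n\ge 3\cdot 5\cdot 7\cdots p_r$, and since $3\cdot 5\cdot 7=105>4^3$ while every subsequent odd prime exceeds $4$, this product exceeds $4^r$. As $4^r-1<4^r$, the bound $\omega(4^r-1)<r$ follows for all $r\ge 3$, and the proof is complete with no case-checking.
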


We first provide an outline of our proof technique. Suppose there exists $z \in \mathbb{F}_{2^{2r}}$, $r \geq 3$, such that $1+z^3$ is a primitive element of $\mathbb{F}_{2^{2r}}$. Let us call this primitive element $y$, i.e.\ $1+z^3=y$. Then $z=y^s$ for some $0< s \leq 2^{2r}-2$. So, we get $1+y+y^{3s}=0$ which implies that $\Log(1+y)\equiv 0 \pmod 3$. Thus, to prove Proposition~\ref{prop:sharpening}, it is enough to check whether there exists a $z \in \mathbb{F}_{2^{2r}}$ such that $1+z^3$ is a primitive. To this end, let $q = 2^{2r}$ and $f(x)=1+x^3 \in \mathbb{F}_q[x]$. Let $N_f(q-1)$ denote the number of (non-zero) $z \in \mathbb{F}_q$ such that $f(z)$ is primitive. Our aim is to show that $N_f(q-1) > 0$. 

We give some definitions and results that will be required in our proof. For details, we refer the reader to \cite{finitefields}.

Let $G$ be a ﬁnite abelian group. A character $\chi$ of $G$ is a homomorphism from $G$ into the multiplicative group $U$ of complex numbers of absolute value $1$. The set $\widehat{G}$ of all characters of $G$ forms a group under multiplication and is isomorphic to $G$. The character $\chi_1$, defined as $\chi_1(g)=1~\forall ~g \in G$, is the trivial character of $G$.\par
For a finite field $\mathbb{F}_q$, the characters of the additive group $\mathbb{F}_q$ are called additive characters, and the characters of the multiplicative group $\mathbb{F}_q^{*}$ are called multiplicative characters. Multiplicative characters are extended to zero using the rule,
\begin{equation} \label{eq:chi0}
    \chi(0)=\begin{cases}
   0 \text{~if~} \chi \neq \chi_1,\\
   1 \text{~if~} \chi=\chi_1.
\end{cases}
\end{equation}
The set of multiplicative characters, $\widehat{\mathbb{F}_q^{*}}$, is cyclic. Let $\chi_d$ denote a multiplicative character of order $d$ for any $d|(q-1)$. There are $\phi(d)$ such characters in $\widehat{\mathbb{F}_{q}^{*}}$, where $\phi$ is Euler's totient function. From \cite{finitefields}, we have the following result: For $a \in \mathbb{F}_{q}^{*}$,
\begin{equation} \label{eq:primitive}
\frac{\phi(q-1)}{(q-1)} \sum_{d|(q-1)} \frac{\mu(d)}{\phi(d)} \sum_{\chi_d} \chi_d(a)=\begin{cases}
    1 &\text{~if~} a \text{~is primitive},\\
    0 &\text{~otherwise},
\end{cases}
\end{equation}
where the outer sum $d$ runs over all positive divisors of $(q-1)$, and in the inner sum, $\chi_d$ runs through the $\phi(d)$ multiplicative characters of $\mathbb{F}_q$ of order $d$. Here, $\mu$ denotes the M\"{o}bius function on $\mathbb{N}$ defined by 
\[ \mu(n)=
\begin{cases}
 1 &\text{~if~} n=1,\\
 (-1)^k &\text{~if~} n \text{~is the product of } k \text{~distinct primes},\\
 0 &\text{~if~} n \text{ is divisible by a square of prime}.
\end{cases}
\]

\begin{theorem}\cite[Theorem~5.41]{finitefields} \label{conjectureProof_theorem1}
Let $\chi$ be a multiplicative character of $\mathbb{F}_q$ of order $m>1$, and let $f \in \mathbb{F}_q[x]$ be a monic polynomial of positive degree such that $f(x)$ is not of the
form $g(x)^m$, where $g(x) \in \mathbb{F}_q[x]$ with degree at least 1. Suppose $\ell$ is the number of distinct roots of $f$ in its splitting ﬁeld over $\mathbb{F}_q$. Then for every $a \in \mathbb{F}_q$, we have $$\sum_{c \in \mathbb{F}_q} |\chi(a(f(c))|\leq (\ell-1) q^{1/2}.$$
\end{theorem}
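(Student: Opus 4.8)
The quantity to bound is the multiplicative character sum $S(\chi,f) := \sum_{c \in \mathbb{F}_q} \chi(a f(c))$, and the assertion (read with the absolute value on the outside of the sum) is the \emph{Weil bound} $\bigl| S(\chi,f) \bigr| \le (\ell-1)\, q^{1/2}$. This is one of the standard consequences of the Riemann Hypothesis for curves over $\mathbb{F}_q$, and the plan is to realise $S(\chi,f)$ as a defect in the count of rational points on an associated Kummer cover, and then to estimate that count. Since $\chi$ has order $m$, necessarily $m \mid (q-1)$, and for $u \in \mathbb{F}_q^{*}$ the number of $y \in \mathbb{F}_q$ with $y^m = u$ equals $\sum_{j=0}^{m-1} \chi^{j}(u)$. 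Applying this with $u = a f(c)$ and summing over $c$, the number $N_{\mathcal C}$ of affine $\mathbb{F}_q$-points of the curve $\mathcal{C}\colon y^m = a f(x)$ satisfies
\begin{equation*}
N_{\mathcal C} = q + \sum_{j=1}^{m-1} S(\chi^{j},f) + \epsilon,
\end{equation*}
where the correction $\epsilon$ accounts for the at most $\ell$ values of $c$ with $f(c)=0$ and is $O(\ell)$. Bounding the character sums is therefore equivalent to estimating $N_{\mathcal C}$, provided one also has a mechanism to isolate the single summand $S(\chi,f)$.

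First I would record the structural facts that control $\mathcal{C}$: because $f$ is not of the form $g(x)^m$, the polynomial $y^m - a f(x)$ is absolutely irreducible and the $x$-coordinate map exhibits $\mathcal{C}$ as a degree-$m$ cover of the line ramified precisely over the $\ell$ distinct roots of $f$ (and the point at infinity). This ramification data is what ultimately produces the constant $\ell-1$: the part of the first cohomology of $\mathcal{C}$ attached to the character $\chi$ has dimension $\ell-1$, so $S(\chi,f) = -\sum_{i=1}^{\ell-1}\omega_i$ for Frobenius eigenvalues $\omega_i$, and the Weil bound is exactly the purity statement $|\omega_i| = q^{1/2}$.

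The analytic heart is thus the Riemann Hypothesis for $\mathcal C$. I would obtain it by Stepanov's polynomial method, in the elementary form of Schmidt and Bombieri, which avoids algebraic geometry. The key step constructs an auxiliary polynomial in $x$ and $y$ (reduced modulo $y^m = a f(x)$ to have $y$-degree below $m$) of carefully bounded degree that vanishes, with respect to Hasse derivatives, to a prescribed high order $\nu$ at every $\mathbb{F}_q$-rational point of $\mathcal{C}$; counting the zeros of an associated one-variable polynomial then yields an upper bound of the shape $N_{\mathcal C} \le q + c\, q^{1/2}$ after optimising the degree against $\nu$. Running the same estimate over the extension fields $\mathbb{F}_{q^t}$ and feeding the resulting one-sided bounds into the rationality and functional equation of the zeta function of $\mathcal{C}$ upgrades them to the two-sided bound and, crucially, separates the Frobenius eigenvalues according to the $j$-th power of $\chi$, isolating $S(\chi,f)$ with the sharp count $\ell-1$ of eigenvalues.

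The main obstacle is the auxiliary-polynomial step: one must prove that a \emph{nonzero} polynomial with the prescribed high-order vanishing actually exists. Existence is a linear-algebra dimension count (more free coefficients than vanishing conditions), but non-vanishing is delicate in characteristic $2$, where ordinary derivatives collapse; this is handled by using Hasse (divided) derivatives and verifying a Wronskian-type non-degeneracy that relies on the hypothesis $f \ne g^m$. The secondary difficulty is purely bookkeeping but essential: tracking the ramification over the $\ell$ roots carefully enough through the zeta-function descent to land on the exact constant $\ell-1$, rather than a cruder bound in terms of the genus of $\mathcal{C}$.
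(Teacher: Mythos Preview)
The paper does not give its own proof of this statement: it is quoted verbatim as \cite[Theorem~5.41]{finitefields} and used as a black box in the proof of Proposition~\ref{resolved_conjecture}. There is therefore nothing in the paper to compare your argument against.

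That said, your outline is the standard route to this Weil-type bound and is sound at the level of a proof sketch. You correctly interpret the statement with the absolute value outside the sum (the printed form $\sum_{c}|\chi(af(c))|$ is a typographical slip; as written it would be trivially false, since each nonzero term has modulus $1$). The reduction to point-counting on the Kummer cover $y^m = a f(x)$, followed by Stepanov's auxiliary-polynomial method and the zeta-function descent to isolate the $\chi$-isotypic piece of size $\ell-1$, is exactly how Schmidt's elementary account proceeds, and Lidl--Niederreiter's Chapter~6 essentially follows the same line. The caveats you flag---non-vanishing of the auxiliary polynomial via Hasse derivatives in positive characteristic, and the bookkeeping needed to hit the sharp constant $\ell-1$ rather than a genus bound---are the genuine technical points, and neither is resolved by what you wrote; a full proof would have to carry them out. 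But as a plan there is no gap.
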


Recall that our aim is to show that $N_f(q-1) > 0$ where, for $q = 2^{2r}$ and $f(x) = 1+x^3$, $N_f(q-1)$ is the number of $z \in \mathbb{F}_{2^{2r}}$ such that $f(z)$ is primitive. For $n \in \mathbb{N}$, let $w(n)$ denote the number of distinct prime divisors of $n$. 

\begin{claim} \label{claim_conjecture}
    For $r \ge 3$, if $r>w(q-1)$, then $N_f(q-1)>0$.
\end{claim}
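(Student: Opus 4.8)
The plan is to use the character-sum formula \eqref{eq:primitive} to express $N_f(q-1)$ as a weighted sum over divisors $d \mid (q-1)$ of inner sums of the form $\sum_{z \in \mathbb{F}_q^{*}} \chi_d(f(z))$, and then bound the contribution of the non-trivial characters using Theorem~\ref{conjectureProof_theorem1}. Concretely, substituting $a = f(z)$ in \eqref{eq:primitive} and summing over $z \in \mathbb{F}_q^{*}$ with $f(z) \ne 0$, one isolates the $d = 1$ term: the trivial character $\chi_1$ contributes (essentially) $\frac{\phi(q-1)}{q-1}$ times the count of $z$ with $f(z) \ne 0$, which is the ``main term''. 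Since $f(x) = 1 + x^3$ has $3 \mid (q-1)$ distinct roots in $\mathbb{F}_q$ (as $3 \mid (2^{2r}-1)$), this count is $q - 1 - 3 = q - 4$, giving a main term of roughly $\phi(q-1) \cdot \frac{q-4}{q-1}$. One must also be slightly careful about the $z$ for which $f(z) = 0$; these contribute $0$ to \eqref{eq:primitive} since $f(z) = 0$ is never primitive, so they can simply be excluded from the outer sum.

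Next I would bound the error term, i.e.\ the contribution of all $d > 1$. For each such $d$ and each character $\chi_d$ of order $d$, I need $f(x) = 1 + x^3$ to not be a perfect $d$-th power $g(x)^d$ in $\mathbb{F}_q[x]$; since $f$ has three distinct simple roots, $f = g^d$ forces $d \mid 1$, so this holds for all $d > 1$ (with $\ell = 3$ distinct roots of $f$). Theorem~\ref{conjectureProof_theorem1} then gives $\left| \sum_{z} \chi_d(f(z)) \right| \le (\ell - 1) q^{1/2} = 2 q^{1/2}$ for each non-trivial $\chi_d$. Summing over all $d \mid (q-1)$, $d > 1$, and all $\phi(d)$ characters of order $d$, and using the standard bound $\sum_{d \mid (q-1)} \frac{|\mu(d)|}{\phi(d)} \sum_{\chi_d} 1 = \sum_{d \mid (q-1)} |\mu(d)| \le 2^{w(q-1)}$, the total error is at most $\frac{\phi(q-1)}{q-1} \cdot 2 q^{1/2} \cdot 2^{w(q-1)}$.

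Putting the pieces together, $N_f(q-1) \ge \frac{\phi(q-1)}{q-1}\bigl( (q-4) - 2 q^{1/2} \, 2^{w(q-1)} \bigr)$ up to the negligible correction from the trivial-character term restricted to $f(z) \ne 0$. This is positive as soon as $q - 4 > 2 q^{1/2} \, 2^{w(q-1)}$, i.e.\ roughly $q^{1/2} > 2^{w(q-1)+1}$. With $q = 2^{2r}$ this reads $2^{r} > 2^{w(q-1)+1}$, i.e.\ $r > w(q-1) + 1$; a slightly more careful accounting of the trivial-character term (it is $\frac{\phi(q-1)}{q-1}$ times $q-4$ rather than $q - 1$, and one of the divisors in the error sum is $d=1$ which was already counted) should let one absorb the ``$+1$'' and conclude under the stated hypothesis $r > w(q-1)$. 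The main obstacle I anticipate is exactly this bookkeeping: making the constants tight enough that the clean hypothesis $r > w(q-1)$ — rather than something like $r > w(q-1) + c$ — suffices, which will require carefully separating the $d = 1$ term and not over-counting it in the Weil-bound estimate of the error.
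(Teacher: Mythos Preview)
Your approach is essentially identical to the paper's: express $N_f(q-1)$ via the character-sum formula, separate the $d=1$ main term (the paper gets $q-3$, summing over $\mathbb{F}_q\setminus\{1,\omega,\omega^2\}$ and extending the $d>1$ inner sums back to all of $\mathbb{F}_q$ via $\chi_d(0)=0$), and bound the $d>1$ error via the Weil bound with $\ell=3$. The bookkeeping you flag resolves exactly as you anticipate: because the error sum runs only over $d\ne 1$, one has $\sum_{1\ne d\mid(q-1)}|\mu(d)|=2^{w(q-1)}-1$, yielding $N_f(q-1)\ge\frac{\phi(q-1)}{q-1}\bigl[q-2q^{1/2}\cdot 2^{w(q-1)}+2q^{1/2}-3\bigr]$, and since $2q^{1/2}-3>0$ for $r\ge 3$, positivity reduces to $2^{r-1}\ge 2^{w(q-1)}$, i.e., $r>w(q-1)$.
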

\begin{proof} For $z \in \mathbb{F}_q \setminus \{1, \omega, \omega^2\}$, where $\omega \in \mathbb{F}_q$ is a primitive cube root of unity, we have $f(z) \ne 0$. We then have, via \eqref{eq:primitive}, 
    \begin{align*}
    N_f & (q-1) \\ 
    &=\sum_{z \in \mathbb{F}_q \setminus \{1, \omega, \omega^2\}} \frac{\phi(q-1)}{(q-1)} \sum_{d|(q-1)} \frac{\mu(d)}{\phi(d)} \sum_{\chi_d} \chi_d(f(z))\\
    &= \frac{\phi(q-1)}{(q-1)} \sum_{d|(q-1)} \frac{\mu(d)}{\phi(d)} \sum_{\chi_d} \sum_{z \in \mathbb{F}_{q} \setminus \{1, \omega, \omega^2\}} \chi_d(f(z))
\end{align*}
The contribution of $d=1$ to the sum over the divisors of $q-1$ is $q-3$. To determine the contribution of divisors $d \ne 1$, observe that, for any $\chi_d$ with $d \ne 1$, we have $\chi_d(0) = 0$ (see \eqref{eq:chi0}), and hence,
$$
\sum_{z \in \mathbb{F}_{q} \setminus \{1, \omega, \omega^2\}} \chi_d(f(z)) = \sum_{z \in \mathbb{F}_{q}} \chi_d(f(z)).
$$
Moreover, $f(x) = 1+x^3 = (x+1)(x+\omega)(x+\omega^2)$ is not of the form $g(x)^d$ for any $d > 1$. Therefore, 
\begin{align*}
    \bigg|\sum_{1 \neq d|(q-1)} \frac{\mu(d)}{\phi(d)} \sum_{\chi_d} & \sum_{z \in \mathbb{F}_{q} \setminus \{1, \omega, \omega^2\}} \chi_d(f(z))\bigg| \\ 
    & = \bigg|\sum_{1 \neq d|(q-1)} \frac{\mu(d)}{\phi(d)} \sum_{\chi_d} \sum_{z \in \mathbb{F}_{q} } \chi_d(f(z))\bigg| \\ 
    & \leq \ \sum_{1 \neq d|(q-1)} \frac{|\mu(d)|}{\phi(d)} \sum_{\chi_d} \bigg| \sum_{z \in \mathbb{F}_{q} } \chi_d(f(z))\bigg| \\
    & \stackrel{\mathrm{(a)}}{\leq} \ 2 q^{1/2}\sum_{1 \neq d|(q-1)} \frac{|\mu(d)|}{\phi(d)} \sum_{\chi_d} 1 \\
    &= \ 2q^{1/2} \sum_{1 \neq d|(q-1)} |\mu(d)|\\
    &=\ 2q^{1/2}[W(q-1)-1]
\end{align*}
where $W(q-1)$ denotes the number of distinct square-free divisors of $q-1$. The inequality~(a) above is obtained via Theorem~\ref{conjectureProof_theorem1}. 

Thus,
\begin{align*}
N_{f} & (q-1) \\
& = \frac{\phi(q-1)}{(q-1)}\Big[q-3+\sum_{1 \neq d|(q-1)} \frac{\mu(d)}{\phi(d)} \sum_{\chi_d} \sum_{z \in \mathbb{F}_{q}} \chi_d(f(z))\Big] \\& \geq \frac{\phi(q-1)}{q-1}\bigl[q-3-2 q^{1/2}\bigl(W(q-1)-1\bigr)\bigr] \\
& \geq \frac{\phi(q-1)}{q-1}\bigl[q-2 q^{1/2} W(q-1) + 2q^{1/2} - 3\bigr].
\end{align*}
Now, $2q^{1/2} - 3 = 2^{r+1} - 3 > 0$ for $r \ge 3$. Hence, if $q^{1/2}\geq 2W(q-1)$, i.e., $2^{r-1}\geq W(2^{2r}-1)$, then $N_f(q-1)>0$. Note that $W(q-1)=2^{w(q-1)}$, recalling that $w(q-1)$ is the number of distinct prime factors of $q-1$.
\end{proof}

With Claim~\ref{claim_conjecture} in hand, all we need to complete the proof of Proposition~\ref{prop:sharpening} is to show that for all $r \ge 3$, the condition $r > w(2^{2r}-1)$ holds.

    If, for some odd integer $n$, we have $\omega(n)\geq r$, then 
    $n \geq p_1p_2 \cdots p_r$, where $p_i,~1\leq i \leq r$ denotes the $i$-th odd prime. We have $p_1 p_2 p_3=3 \times 5 \times 7 =105> 4^3$ and $p_i> 4$ for $i \geq 4$. So, for $r \ge 3$, we have 
    $$n\geq p_1 p_2 \cdots p_r> 4^r.$$
    Consequently, since $2^{2r}-1 < 4^r$, we must have $w(2^{2r}-1)<r$. 
    This completes the proof of Proposition~\ref{prop:sharpening}, and hence, that of Proposition~\ref{resolved_conjecture}.

\section{Proof of Theorem~\ref{thm:2D}} \label{appendix:thm2D}

The goal is to verify the condition in Theorem~\ref{theorem_1}, namely, that (in the notation of that theorem) the minimum symplectic weight of a non-zero vector in $\mathcal{W} := \mathcal{V}_{\mathcal S} + \mathcal{C}'$ is $2D+1 = 5$. Here, $\mathcal{C}'$ is the length-$2n^2$ extension of $C_U$, obtained by prefixing $n^2$ $0$'s to each codeword $\mathbf{c} \in C_U$. To do this, we first check that $d_{\min}(C_U) \ge 5$, so that the symplectic weight of any vector in $\mathcal{C}'$ is at least $5$. Then, we verify that adding a binary vector of symplectic weight less than $5$ to the symplectic representation of a non-identity stabilizer in $\mathcal{S}$ does not yield a codeword in $\mathcal{C}'$; this shows that adding a non-zero vector in $\mathcal{V}_{\mathcal S}$ to a codeword in $\mathcal{C}'$ does not result in a vector of symplectic weight less than $5$.

For the purposes of this proof, we define the weight of a polynomial $f(x,y) = a_{i,j}x^iy^j \in \mathbb{F}_2[x,y]$ as the number of nonzero coefficients $a_{i,j}$. 

\begin{claim}
    The minimum distance of $C_U$ is at least $6$.
\end{claim}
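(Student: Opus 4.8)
The plan is a short argument entirely inside the field $\mathbb{F}_{2^{4r}}$, using only that a codeword of $C_U$ must vanish at the eleven points of $U$ in \eqref{def:U}. Suppose, for contradiction, that $f(x,y)$ represents a nonzero codeword of $C_U$ of weight at most $5$ (weight $=$ number of monomials of $f$). Since $(1,1)\in U$ and $f(1,1)$ equals $\mathrm{wt}(f)\bmod 2$, the weight of $f$ is even, so $\mathrm{wt}(f)\in\{2,4\}$. As $C_U$ is an ideal of $\mathbb{F}_2[x,y]/\langle x^n-1,y^n-1\rangle$, multiplying $f$ by a monomial keeps it a codeword and does not change its weight, so I may normalise $f$ to contain the monomial $1$. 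If $\mathrm{wt}(f)=2$, write $f=1+x^ay^b$ with $(a,b)\neq(0,0)$: vanishing at $(\alpha,1)$ gives $\alpha^a=1$, hence $a=0$, and vanishing at $(1,\alpha)$ gives $b=0$, a contradiction. So only $\mathrm{wt}(f)=4$ remains.

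For the weight-$4$ case I would write $f=1+x^{a_1}y^{b_1}+x^{a_2}y^{b_2}+x^{a_3}y^{b_3}$ with the three pairs distinct and all different from $(0,0)$, and invoke the characteristic-$2$ identity $(u_1+u_2+u_3)^3=u_1^3+u_2^3+u_3^3+(u_1+u_2)(u_2+u_3)(u_3+u_1)$. Taking $u_i=\alpha^{a_i}$, the constraints $f(\alpha,1)=0$ and $f(\alpha^3,1)=0$ read $\sum_i u_i=1$ and $\sum_i u_i^3=1$, so the identity forces $(u_1+u_2)(u_2+u_3)(u_3+u_1)=0$, i.e.\ two of $a_1,a_2,a_3$ coincide. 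The same trick with the points $(1,\alpha),(1,\alpha^3)\in U$ shows two of $b_1,b_2,b_3$ coincide. Since the four monomials of $f$ are distinct, the pair of indices witnessing the coincidence of the $a$'s cannot be the pair witnessing the coincidence of the $b$'s, and two distinct $2$-subsets of $\{1,2,3\}$ meet in exactly one index; relabelling the summands, I may assume $a_1=a_2=:a$ and $b_2=b_3=:b$, with $b_1\neq b$ and $a_3\neq a$ (both forced by distinctness of the monomials).

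It then remains to force a contradiction from the structure. Setting $y=1$ collapses the two monomials $x^ay^b$ and $x^{a_3}y^b$ only partially and the two with $x$-exponent $a$ fully, giving $f(x,1)=1+x^{a_3}$, so $f(\alpha,1)=0$ forces $a_3=0$; setting $x=1$ gives $f(1,y)=1+y^{b_1}$, so $f(1,\alpha)=0$ forces $b_1=0$. Substituting back, $f=1+x^a+x^ay^b+y^b=(1+x^a)(1+y^b)$, and the requirement that the monomials $x^ay^{b_1}=x^a$ and $x^{a_3}y^b=y^b$ be nonzero forces $a,b\in\{1,\dots,n-1\}$, hence $\alpha^a\neq1$ and $\alpha^b\neq1$. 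But then $f(\alpha,\alpha)=(1+\alpha^a)(1+\alpha^b)\neq0$, contradicting $(\alpha,\alpha)\in U$. This rules out $\mathrm{wt}(f)=4$, so $d_{\min}(C_U)\geq 6$.

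The field identities and the evaluations are routine; the one place I would be careful is the combinatorial bookkeeping in the weight-$4$ step — verifying that the two "coincident-exponent" conclusions really do force the normal form $(1+x^a)(1+y^b)$, with every degenerate relabelling excluded by the distinctness and nonzero-ness of the four monomials of $f$ (that is where the main obstacle lies). I expect that only the six points $(1,1),(\alpha,1),(\alpha^3,1),(1,\alpha),(1,\alpha^3),(\alpha,\alpha)$ of $U$ are needed for this claim, the remaining five points of $U$ entering the proof of Theorem~\ref{thm:2D} elsewhere; should a degenerate configuration slip through, the points $(\alpha,\alpha^{-1})$ and $(\alpha^3,\alpha^{-3})$ — which by the same cubing trick force two of the differences $a_i-b_i$ to coincide — are the natural source of the extra constraint. (One should also note that vanishing at $u\in U$ entails vanishing at the $\mathbb{F}_2$-conjugates of $u$, but the argument above only ever uses vanishing at the listed points themselves.)
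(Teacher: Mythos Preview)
Your argument is correct and is a genuinely different route from the paper's. The paper parametrises a weight-$\le 4$ codeword as $\sum_{i=1}^4 l_i x^{a_i}y^{b_i}$ with $l_i \in \{0,1\}$, observes that the conditions $f(\alpha,1)=f(\alpha^3,1)=0$ (together with their Frobenius squares) form a $4\times4$ Vandermonde system with no nontrivial solution when the $a_i$'s are all distinct (and similarly for the $b_i$'s), and then dispatches the remaining cases by explicit enumeration of the support patterns $(l_1,l_2,l_3,l_4)$. You instead exploit the ideal structure of $C_U$ to normalise one monomial to $1$, use the characteristic-$2$ identity $(u_1+u_2+u_3)^3 = u_1^3+u_2^3+u_3^3 + (u_1+u_2)(u_2+u_3)(u_3+u_1)$ to force two of the $a$-exponents and two of the $b$-exponents to coincide, and then pin down $f=(1+x^a)(1+y^b)$ by a short pigeonhole-and-evaluation argument, finishing with the point $(\alpha,\alpha)$. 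Both proofs end up using only the six points $(1,1),(\alpha,1),(\alpha^3,1),(1,\alpha),(1,\alpha^3),(\alpha,\alpha)$ of $U$; your route avoids the case enumeration entirely, while the paper's version is closer in spirit to the systematic machinery reused later when verifying Theorem~\ref{thm:2D}. The combinatorial bookkeeping you flag is indeed the only subtle point, and it is sound: if all three $a_i$ coincided, the three $b_i$ would have to be pairwise distinct (by distinctness of the monomials), contradicting the $b$-side conclusion; so exactly two $a_i$'s and exactly two $b_i$'s agree, the two coincidence pairs are distinct $2$-subsets of $\{1,2,3\}$, and your relabelling is legitimate.
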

\begin{proof}
    We will show that $d(C_U) \geq 5$. Then, since $(1,1) \in U$, the weight of any $f(x,y) \in C_U$ must be even, so that $d(C) \geq 6$ must in fact hold.

    Consider a polynomial $f(x,y) = l_1 x^{a_1}y^{b_1}+l_2 x^{a_2}y^{b_2}+l_3 x^{a_3}y^{b_3}+l_4 x^{a_4}y^{b_4} \in \mathbb{F}_2[x,y]$, where $l_i \in \{0,1\}$ for $i=1,2,3,4$, and $(a_i,b_i) \ne (a_j,b_j)$ for $i \ne j$. If an odd number $l_i$'s are non-zero, then $f(1,1) = 1 \ne 0$, so that $f(x,y) \notin C_U$. So, henceforth, we assume that an even number of $l_i$'s are non-zero. 
    
    If $f(x,y) \in C_U$, then from the fact that $f(\alpha,\alpha) = f(\alpha,1) = f(\alpha^3,1) = f(1,\alpha) =f(1,\alpha^3) = 0$, we obtain
	\begin{align}
 l_1 \alpha^{a_1+b_1}+l_2 \alpha^{a_2+b_2}+l_3 \alpha^{a_3+b_3}+l_4 \alpha^{a_4+b_4} \ &= \ 0 \label{appD:eq1} \\
 l_1 \alpha^{a_1}+l_2 \alpha^{a_2}+l_3 \alpha^{a_3}+l_4 \alpha^{a_4} &= \ 0 \label{appD:eq2} \\
 l_1 \alpha^{3 a_1}+l_2 \alpha^{3 a_2}+l_3 \alpha^{3 a_3}+l_4 \alpha^{3 a_4} &= \ 0 \label{appD:eq3} \\
 l_1 \alpha^{b_1}+l_2 \alpha^{b_2}+l_3 \alpha^{b_3}+l_4 \alpha^{b_4} &= \ 0 \label{appD:eq4} \\
 l_1 \alpha^{3 b_1}+l_2 \alpha^{3 b_2}+l_3 \alpha^{3 b_3}+l_4 \alpha^{3 b_4} &= \ 0 \label{appD:eq5}
 \end{align}
	If the $a_i$'s are all distinct, then using \eqref{appD:eq2} and \eqref{appD:eq3}, we can write
	\begin{align*}
		\left[ \begin{array}{cccc}
		\alpha^{a_1} &\alpha^{a_2} &\alpha^{a_3} &\alpha^{a_4}\\
		\alpha^{2a_1} &\alpha^{2a_2} &\alpha^{2a_3} &\alpha^{2a_4}\\
		\alpha^{3a_1} &\alpha^{3a_2} &\alpha^{3a_3} &\alpha^{3a_4}\\
		\alpha^{4a_1} &\alpha^{4a_2} &\alpha^{4a_3} &\alpha^{4a_4}\\
		\end{array} \right]
		\left[ \begin{array}{c}
			l_1\\
			l_2\\
			l_3\\
			l_4\\
	\end{array} \right]
=	\left[ \begin{array}{c}
	0\\
	0\\
	0\\
	0\\
\end{array} \right]
	\end{align*}
The Vandermonde matrix above has nonzero determinant, so that the only solution of the above system of equations is $(l_1,l_2,l_3,l_4)=(0,0,0,0)$, i.e., $f(x,y) = 0$. Similarly, if the $b_i$'s are all distinct, then an analogous argument using \eqref{appD:eq4} and \eqref{appD:eq5} leads to the same conclusion. 

Let us now assume, without loss of generality, that $a_1=a_2$ and some of the $b_i$'s are possibly equal. Then, from \eqref{appD:eq2}, we have
\begin{equation}
    (l_1+l_2)\alpha^{a_1}+l_3\alpha^{a_3}+l_4 \alpha^{a_4}=0. \label{appD:eq6}
\end{equation}
We now consider the various possibilities for $(l_1,l_2,l_3,l_4)$ with an even number of $l_i$'s being non-zero:
\begin{itemize}
\item If $(l_1,l_2,l_3,l_4)=(0,0,1,1)$, then from \eqref{appD:eq6} we get $\alpha^{a_3}=\alpha^{a_4}$. This implies $a_3=a_4$. Also, from \eqref{appD:eq4}, we get $\alpha^{b_3}=\alpha^{b_4}$. Hence we get $(a_3,b_3)=(a_4,b_4)$, a contradiction.
\item  If $(l_1,l_2,l_3,l_4)=(0,1,0,1)$, then from \eqref{appD:eq6} we get $\alpha^{a_1}=\alpha^{a_4}$. This implies $a_1=a_2=a_4$. Also, from \eqref{appD:eq4}, we get  $\alpha^{b_2}=\alpha^{b_4}$. Hence we get $(a_2,b_2)=(a_4,b_4)$, a contradiction.
\item If $(l_1,l_2,l_3,l_4)=(0,1,1,0)$, then from \eqref{appD:eq6}, we get $\alpha^{a_1}=\alpha^{a_3}$. This implies $a_1=a_3=a_2$. Also, from equation \eqref{appD:eq4}, we get $\alpha^{b_2}=\alpha^{b_3}$. Hence we get $(a_2,b_2)=(a_3,b_3)$, a contradiction.

 \item If $(l_1,l_2,l_3,l_4)=(1,0,0,1)$, then $\alpha^{a_1}=\alpha^{a_4}$. This implies $a_1=a_2=a_4$. Also,  $\alpha^{b_1}=\alpha^{b_4}$. Hence we get $(a_1,b_1)=(a_4,b_4)$, a contradiction.
 \item If $(l_1,l_2,l_3,l_4)=(1,0,1,0)$, then similar to the previous case, we get $(a_1,a_3)=(b_1,b_3)$.
 \item If $(l_1,l_2,l_3,l_4)=(1,1,0,0)$, then \eqref{appD:eq4} and \eqref{appD:eq1}  gives $b_1=b_2$. Hence, $(a_1,b_1)=(a_2,b_2)$, a contradiction.
 \item If $(l_1,l_2,l_3,l_4)=(1,1,1,1)$ then we get, from \eqref{appD:eq6}, $a_3=a_4$. With this, \eqref{appD:eq1}  gives
	 $$\alpha^{a_1}(\alpha^{b_1}+\alpha^{b_2})+\alpha^{a_3}(\alpha^{b_3}+\alpha^{b_4})=0.$$
	 Using \eqref{appD:eq4}, we then get $(\alpha^{b_1}+\alpha^{b_2})(\alpha^{a_1}+\alpha^{a_3})=0$. Now if $b_1=b_2$ then $(a_1,b_1)=(a_2,b_2)$. If $a_1=a_3$, then we have $a_1=a_2=a_3=a_4$ and we have already assumed not all the $b_i$'s are distinct. Thus, this case too ends in a dead end.
\end{itemize}
Thus, all cases end in a contradiction, which proves that no polynomial of weight $4$ or less is in $C_U$. We conclude that $d_{\min}(C_U) \ge 5$.
\end{proof}

We would now like to show that adding a binary vector of symplectic weight less than $5$ to the symplectic representation of a non-identity stabilizer in $\mathcal{S}$ does not yield a codeword in $\mathcal{C}'$. We introduce some convenient notation first. For a subset $J$ of the vertex set, $V := {[n]}^2$, of the lattice $\Lambda_{n,n}$, let $\mathbf{e}^J = (e^J_i: i \in V)$ denote the vector in $\{0,1\}^V$ with $e^J_i = 1$ iff $i \in J$. When $J = \{i\}$ is a singleton, we simplify the notation to $\mathbf{e}^i$. With this, the symplectic representation of a stabilizer generator $S_i = X_i Z_{N(i)}$ is $\mathbf{s}_i := [\mathbf{e}^i \mid \mathbf{e}^{N(i)}]$. Also, recall that a codeword in $\mathcal{C}'$ is of the form $[\mathbf{0} \mid \mathbf{c}]$ for some $\mathbf{c} \in C_U$, where $\mathbf{0}$ denotes the zero vector in $\{0,1\}^V$. 

The \emph{support} of a vector $\mathbf{x} = (x_i)_{i \in V} \in \{0,1\}^V$ is defined as $\text{supp}(\mathbf{x}) := |\{i: x_i = 1\}|$. We further define the \emph{$X$-support} and \emph{$Z$-support} of a vector $\mathbf{v} = [\mathbf{v}^{(1)} \mid \mathbf{v}^{(2)}] \in \{0,1\}^V \times \{0,1\}^V$ to be $\text{supp}_X(\mathbf{v}) := \text{supp}(\mathbf{v}^{(1)})$ and $\text{supp}_Z(\mathbf{v}) := \text{supp}(\mathbf{v}^{(2)})$, respectively. The symplectic weight of $\mathbf{v}$ then is $\text{wt}_{\mathrm{s}}(\mathbf{v}) = |\text{supp}_X(\mathbf{v}) \cup \text{supp}_Z(\mathbf{v})|$.

If we have a stabilizer whose symplectic representation, $\mathbf{s}$, is obtained by adding five or more of the generators $[\mathbf{e}^i \mid \mathbf{e}^{N(i)}]$, then $|\text{supp}_X(\mathbf{s})| \ge 5$. So, we cannot get a vector of the form $[\mathbf{0} \mid \mathbf{f}]$ by adding a vector of symplectic weight less than $5$ to $\mathbf{s}$. Therefore, we need only consider stabilizers with symplectic representation $\mathbf{s}$ obtained as the sum of four or fewer generators $\mathbf{s}_i = [\mathbf{e}^i \mid \mathbf{e}^{N(i)}]$. To bring such an $\mathbf{s}$ into the form $[\mathbf{0} \mid \mathbf{f}]$ by adding a vector $\mathbf{b} = [\mathbf{b}^{(1)} \mid \mathbf{b}^{(2)}]$ of symplectic weight less than $5$, we need $\text{supp}_X(\mathbf{b}) = \text{supp}_X(\mathbf{s})$ and $|\text{supp}_X(\mathbf{b}) \cup \text{supp}_Z(\mathbf{b})| \le 4$.
For the various possible choices of $\mathbf{s}$ and $\mathbf{b}$ such that $\mathbf{s}+\mathbf{b}$ is of the form $[\mathbf{0} \mid \mathbf{f}]$, the vector $\mathbf{f}$, when represented as a polynomial in $\mathbb{F}_2[x,y]$, takes one of the forms listed below.
\begin{itemize}
\item Type I (when $\text{supp}_Z(\mathbf{b}) \subseteq \text{supp}_X(\mathbf{b}) = \text{supp}_X(\mathbf{s})$): 
\begin{align*}
f(x,y) = \ & p_1 x^{a_1}y^{b_1}(x+y+Axy+xy^2+x^2 y)\\
&+ p_2 x^{a_2}y^{b_2}(x+y+Bxy+xy^2+x^2 y) \\ 
&+ p_3 x^{a_3}y^{b_3}(x+y+Cxy+xy^2+x^2 y)\\
&+ p_4 x^{a_4}y^{b_4}(x+y+Dxy+xy^2+x^2 y)
\end{align*}
where $A,B,C,D,p_1,p_2,p_3,p_4 \in \{0,1\}$, with the $p_i$'s not all $0$, and $(a_i,b_i),~i=1,2,3,4$, are distinct.
\item Type II (when $|\text{supp}_X(\mathbf{s})| = 3$, $\text{supp}_X(\mathbf{b}) = \text{supp}_X(\mathbf{s})$, and $|\text{supp}_Z(\mathbf{b}) \setminus \text{supp}_X(\mathbf{b})| = 1$):
\begin{align*}
    f(x,y)= &x^{a_1} y^{b_1}(x+y+Axy+x y^2+x^2 y)\\
    &+ x^{a_2} y^{b_2}(x+y+Bxy+x y^2+x^2 y) \\ 
    &+ x^{a_3} y^{b_3}(x+y+Cxy+x y^2+x^2 y)\\
    &+ x^{a_4} y^{b_4},
\end{align*}
where $A,B,C,D \in \{0,1\}$, and $(a_i,b_i),~i=1,2,3$, are distinct.
\item Type III (when $|\text{supp}_X(\mathbf{s})| = 2$, $\text{supp}_X(\mathbf{b}) = \text{supp}_X(\mathbf{s})$, and $|\text{supp}_Z(\mathbf{b}) \setminus \text{supp}_X(\mathbf{b})| \in \{1,2\}$):
\begin{align*}
    f(x,y)= &x^{a_1} y^{b_1}(x+y+Axy+x y^2+x^2 y)\\
    &+ x^{a_2} y^{b_2}(x+y+Bxy+x y^2+x^2 y) \\
    &+ p_3 x^{a_3} y^{b_3} + p_4 x^{a_4} y^{b_4},
\end{align*}
where $A, B \in \{0,1\}$, $p_3,p_4 \in \{0,1\}$ but not both $0$, $(a_1,b_2) \ne (a_2,b_2)$ and $(a_3,b_3) \ne (a_4,b_4)$.
\item Type IV (when $|\text{supp}_X(\mathbf{s})| = 1$, $\text{supp}_X(\mathbf{b}) = \text{supp}_X(\mathbf{s})$, and $|\text{supp}_Z(\mathbf{b}) \setminus \text{supp}_X(\mathbf{b})| \in \{1,2,3\}$):
\begin{align*}
f(x,y)= &x^{a_1} y^{b_1}(x+y+Axy+x y^2+x^2 y)+p_2 x^{a_2} y^{b_2}\\
&+ p_3 x^{a_3} y^{b_3}+p_4 x^{a_4}y^{b_4},
\end{align*}
where $A \in \{0,1\}$, $p_2,p_3,p_4 \in \{0,1\}$ but not all $0$, and $(a_i,b_i)$, $i=2,3,4$, are distinct.
\end{itemize}

We undertake a case-by-case analysis in Claims~\ref{claim:Type-I}--\ref{claim:Type-IV} below to show that none of the above four types of polynomials can be in $C_U$. 

\begin{claim}
    Polynomials of Type I are not in $C_U$.
    \label{claim:Type-I}
\end{claim}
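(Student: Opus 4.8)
The plan is to recast a Type-I polynomial in a factored form and then detect membership in $C_U$ by evaluating it at the eleven points of $U$ in \eqref{def:U}. The starting observation is the identity
\[
x+y+Axy+x^2y+xy^2 \;=\; (x+y)(1+xy)+Axy ,
\]
so that, writing $g_0(x,y):=(x+y)(1+xy)$, a Type-I polynomial can be written as $f = P\,g_0 + xy\,Q$, where $P(x,y):=\sum_i p_i x^{a_i}y^{b_i}$ is nonzero of weight $t:=|\{i:p_i=1\}|\in\{1,2,3,4\}$, the polynomial $Q(x,y):=\sum_i p_i A_i x^{a_i}y^{b_i}$ has weight at most $t$, and $\text{supp}(Q)\subseteq\text{supp}(P)$. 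Membership of $f$ in $C_U$ is equivalent to $f$ vanishing (after reducing exponents modulo $n=2^{4r}-1$) at every point of $U$, evaluated in $\mathbb{F}_{2^{4r}}$. If $t=1$, then $f=x^{a_1}y^{b_1}g_{A_1}(x,y)$ is a nonzero polynomial of weight at most $5<6$, so by the bound $d_{\min}(C_U)\ge 6$ established above, $f\notin C_U$; hence we may assume $t\ge 2$.

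The second step reduces $Q$ to one of two forms. The four points $(1,1)$, $(\alpha,\alpha)$, $(\alpha,\alpha^{-1})$, $(\alpha^3,\alpha^{-3})$ of $U$ lie on the variety $(x+y)(1+xy)=0$, so at these points $f=xy\,Q$, forcing $Q$ to vanish there. Evaluating at $(1,1)$ shows that $Q$ has an even number of monomials. If $Q$ has exactly two monomials, evaluating at $(\alpha,\alpha)$ and $(\alpha,\alpha^{-1})$ and using that $\alpha$ is primitive and $n$ is odd forces the two monomials to coincide, hence $Q=0$. If $Q$ has four monomials, then $t=4$ and $\text{supp}(Q)=\text{supp}(P)$, so $P=Q$ and $f=P(g_0+xy)=P\,g_1$ with $g_1(x,y):=(x+y)(1+xy)+xy$. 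In all remaining cases, then, $f=P\,h$ with $h\in\{g_0,g_1\}$, $P$ nonzero of weight $t\in\{2,3,4\}$, and $t=4$ whenever $h=g_1$.

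The key technical device now is the elementary fact over $\mathbb{F}_2$ that a multiset of four nonzero field elements whose first and third power sums both vanish has a repeated element, since by Newton's identities its polynomial $\prod_j(z-\lambda_j)$ is then $z^4+e_2z^2+e_4$, a polynomial in $z^2$, which always has repeated roots. One checks that $g_0$ (resp.\ $g_1$) is nonzero at $(\alpha,1),(\alpha^3,1),(1,\alpha),(1,\alpha^3)$ together with $(\alpha,\alpha^2)$ (resp.\ together with $(\alpha,\alpha)$), so $P$ must vanish at those five points. Vanishing at $(\alpha,1),(\alpha^3,1)$ gives $\sum_j\alpha^{a_j}=\sum_j\alpha^{3a_j}=0$, and likewise at $(1,\alpha),(1,\alpha^3)$ for the $b_j$'s. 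For $t\in\{2,3\}$ this already forces $P=0$ (for $t=2$ one gets $a_1=a_2$ and $b_1=b_2$; for $t=3$ the two relations force $\alpha^{a_1+a_2+a_3}=0$, impossible). For $t=4$, the fact above plus the distinctness of the monomial positions forces the four monomials of $P$ to occupy a full $2\times 2$ grid, i.e.\ $P=(x^p+x^q)(y^{p'}+y^{q'})$ with $p\ne q$ and $p'\ne q'$; evaluating at $(\alpha,\alpha^2)$ when $h=g_0$ (using $\alpha^{2(p'-q')}=1\Rightarrow\alpha^{p'-q'}=1$) or at $(\alpha,\alpha)$ when $h=g_1$ then forces $\alpha^{p-q}=1$ or $\alpha^{p'-q'}=1$, a contradiction. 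This completes the proof that no Type-I polynomial lies in $C_U$.

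The step I expect to be most delicate is the $t=4$ bookkeeping: deducing the $2\times 2$ grid structure requires ruling out the intermediate coincidence patterns among $a_1,\dots,a_4$, in particular the sub-case where all four monomials share a common $x$-exponent (or $y$-exponent), which is dispatched by applying the same power-sum fact to the then-distinct $b_j$'s via the companion relations at $(1,\alpha)$ and $(1,\alpha^3)$. Everything else is a routine, if somewhat lengthy, evaluation at the points of $U$, and the identical template will drive the analysis of Types II--IV.
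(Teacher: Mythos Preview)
Your argument is correct, and it takes a genuinely different route from the paper's proof. The paper handles Type~I by a direct case split on the number of nonzero $p_i$'s and then, within each sub-case, on how many of $A,B,C,D$ equal~$1$, evaluating $f$ at a handful of points of $U$ in each branch until a contradiction appears. Your proof instead exploits the algebraic identity $x+y+Axy+x^2y+xy^2=(x+y)(1+xy)+Axy$ to write $f=P\,g_0+xy\,Q$, then uses the four points of $U$ lying on the conic $g_0=0$ to force $Q\in\{0,P\}$, reducing to $f=P\cdot h$ with $h\in\{g_0,g_1\}$. The Newton-identity observation (that four nonzero elements with vanishing first and third power sums must repeat) then replaces several of the paper's ad hoc manipulations with a single uniform device, and cleanly forces the $2\times2$ grid structure on $P$ when $t=4$. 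What you gain is a much more structural argument that explains \emph{why} the particular points $(\alpha,1),(\alpha^3,1),(1,\alpha),(1,\alpha^3)$ were put into $U$ and that packages the $t=4$ analysis into one lemma; what the paper's approach buys is that each individual verification is completely elementary and requires no auxiliary facts. One small wording point: when you write ``forces the two monomials to coincide, hence $Q=0$'', it would be clearer to say ``which contradicts the distinctness of the $(a_i,b_i)$, so $Q$ cannot have exactly two monomials''; the logic is the same but the phrasing avoids the momentary impression that you are concluding $Q=0$ from a case you are actually excluding.
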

\begin{proof}
Let $f(x,y)$ be a polynomial of Type I.
\begin{enumerate}
    \item Suppose that exactly one of the $p_i$'s is non-zero; without loss of generality (WLOG), $p_1 = 1$ and $p_2=p_3=p_4=0$. Then, we have $f(x,y)=x^{a_1}y^{b_1}(x+y+Axy+xy^2+x^2 y)$, which has weight at most $5$, hence cannot be in $C_U$ since $d_{\min}(C_U) \ge 6$.
    \item Suppose that exactly two of the $p_i$'s are non-zero; WLOG, $p_1 = p_2 = 1$ and $p_3=p_4=0$. Then, we have $f(x,y)=x^{a_1}y^{b_1}(x+y+Axy+xy^2+x^2 y)+x^{a_2}y^{b_2}(x+y+Bxy+xy^2+x^2 y)$. 
    \begin{itemize}
        \item[-] Assume $(A,B)=(0,0)$. If $f(x,y) \in C_U$, then $f(1,\alpha)=(\alpha^{b_1}+\alpha^{b_2})(1+\alpha^2)=0$, which implies $b_1=b_2$. Similarly, $f(\alpha,1)=(\alpha^{a_1}+\alpha^{a_2})(1+\alpha^2)=0$, which implies $a_1=a_2$. So $(a_1,b_1)=(a_2,b_2)$, a contradiction. 
        \item[-] Assume that exactly one of $A$ or $B$ is $1$; WLOG, $A=1$ and $B=0$. Then, $f(1,1)=1 \neq 0$, which shows that $f(x,y) \notin C_U$. 
        \item[-] Assume that $(A,B)=(1,1)$. If $f(x,y) \in C_U$, we have $f(1,\alpha)=(\alpha^{b_1}+\alpha^{b_2})(1+\alpha+\alpha^2)=0$, which implies $b_1=b_2$. Similarly $f(\alpha,1)=(\alpha^{a_1}+\alpha^{a_2})(1+\alpha+\alpha_2)=0$, which implies $a_1=a_2$. So $(a_1,b_1)=(a_2,b_2)$, a contradiction.
    \end{itemize}
    \item Suppose that exactly three of the $p_i$'s are non-zero; WLOG, $p_1 = p_2 = p_3 = 1$ and $p_4=0$. Then, we have $f(x,y)=x^{a_1}y^{b_1}(x+y+Axy+xy^2+x^2 y)+x^{a_2}y^{b_2}(x+y+Bxy+xy^2+x^2 y)+x^{a_3}y^{b_3}(x+y+Cxy+xy^2+x^2 y)$. 
    \begin{itemize}
    \item[-] Assume $(A,B,C)=(0,0,0)$. If $f(x,y) \in C_U$, then $f(1,\alpha)=0$ implies $\alpha^{b_1}+\alpha^{b_2}+\alpha^{b_3}=0$, and $f(1,\alpha^3)=0 $ implies $\alpha^{3 b_1}+\alpha^{3 b_2}+\alpha^{3 b_3}=0$. This gives
    \begin{align*}
        & \alpha^{3 b_1}+\alpha^{3 b_2}=\alpha^{3 b_3}=(\alpha^{b_1}+\alpha^{b_2})^3\\
        \implies & \alpha^{b_1+b_2}(\alpha^{b_1}+\alpha^{b_2})=0\\
        \implies & b_1=b_2.
        \end{align*}
    Similarly we can get $a_1=a_2$. So $(a_1,b_1)=(a_2,b_2)$, which is a contradiction.
    \item[-] Assume that an odd number of the parameters $A$, $B$, $C$ is non-zero. Then,  $f(1,1)=1 \neq 0$, which implies $f(x,y) \notin C_U$. 
    \item[-] Assume that exactly two of $A,~B,~C$ are non-zero; say, $A=B=1$ and $C = 0$. Then, $f(1, \gamma)=(\gamma^{b_1}+\gamma^{b_2})(1+\gamma+\gamma^2)+\gamma^{b_3} (1+\gamma^2)=\gamma^{b_3+1} \neq 0$. Hence $f(x,y) \notin C_U$.
    \end{itemize}
    \item Suppose that $p_1=p_2=p_3=p_4=1$. Then, $f(x,y)=x^{a_1}y^{b_1}(x+y+Axy+xy^2+x^2 y)+x^{a_2}y^{b_2}(x+y+Bxy+xy^2+x^2 y)+x^{a_3}y^{b_3}(x+y+Cxy+xy^2+x^2 y)+x^{a_4}y^{b_4}(x+y+Dxy+xy^2+x^2 y)$. 
    \begin{itemize}
    \item[-] Assume that $(A,B,C,D)=(0,0,0,0)$. If $f(x,y) \in C_U$, then $f(1,\alpha)=0$ and $f(1,\alpha^3)=0$ implies
    \begin{equation}
    \begin{split}
    &\alpha^{b_1}+\alpha^{b_2}+\alpha^{b_3}+\alpha^{b_4}=0 \text{~ and~}\\
    &\alpha^{3 b_1}+\alpha^{3 b_2}+\alpha^{3 b_3}+\alpha^{3 b_4}=0. \label{eq:S1}
    \end{split}
    \end{equation}
    If the $b_i$'s are all distinct, then as argued in the proof of $d_{\min}(C_U) \ge 6$, such a system of equations doesn't have a solution. Similarly, if the $a_i$'s are all distinct, then an analogous argument leads to the same conclusion. So, we assume WLOG that $b_1=b_2$ (so that $b_3=b_4$ as well, by Eq.~\eqref{eq:S1}), and also that not all $a_i$'s are distinct. Now $f(\alpha,\alpha^2)=0$ gives
    \begin{equation*}
    \hspace{-40pt}(\alpha^{a_1+2 b_1}+\alpha^{a_2+2 b_2}+\alpha^{a_3+2 b_3}+\alpha^{a_4+2 b_4})(\alpha+\alpha^2+\alpha^5+\alpha^4)=0
    \end{equation*}
    This implies
    \begin{align*}
     & \alpha^{a_1+2 b_1}+\alpha^{a_2+2 b_2}+\alpha^{a_3+2 b_3}+\alpha^{a_4+2 b_4}=0 \\
     \implies &\alpha^{2 b_1}(\alpha^{a_1}+\alpha^{a_2})+\alpha^{2 b_3}(\alpha^{a_3}+\alpha^{a_4})=0\\
     \implies &(\alpha^{a_1}+\alpha^{a_2})(\alpha^{b_1}+\alpha^{b_3})^2=0,
     \end{align*}
     the last implication following from $f(\alpha,1)=\alpha^{a_1}+\alpha^{a_2}+\alpha^{a_3}+\alpha^{a_4}=0)$. Thus, either $a_1=a_2$ or $b_1=b_3$. If $a_1=a_2$, then we have $(a_1,b_1)=(a_2,b_2)$, which is a contradiction. If $b_1=b_3$, then we get $b_1=b_2=b_3=b_4$, and we have already assumed that $a_i$'s are not distinct, and so we again get a contradiction.
     \item[-] Assume that an odd number of the parameters $A$, $B$, $C$, $D$ is non-zero. Then, $f(1,1)=1 \neq 0$, so that $f(x,y) \notin C_U$. 
     \item[-] Assume that exactly two of $A$, $B$, $C$, $D$ are non-zero; say, $A=B=1$ and $C=D=0$. If $f(x,y) \in C_U$, then $f(\alpha,\alpha)=0$ implies $\alpha^{a_1+b_1}+\alpha^{a_2+b_2}=0$. This gives $a_1+b_1 \equiv a_2+b_2$ (mod $2^{4r}-1$). Similarly, $f(\alpha,\alpha^{-1})=0$ implies $\alpha^{a_1-b_1}+\alpha^{a_2-b_2}=0$. This gives $a_1-b_1 \equiv a_2-b_2$ (mod $2^{4r}-1$). 
    So, $2 a_1 \equiv 2 a_2$ (mod $2^{4r}-1$) which gives $a_1=a_2$. Thus $(a_1,b_1)=(a_2,b_2)$, which is a contradiction. 
    \item[-] Assume $(A,B,C,D)=(1,1,1,1)$. If $f(x,y) \in C_U$, then we have $f(\alpha,\alpha)=0$, which implies $\alpha^{a_1+b_1}+\alpha^{a_2+b_2}+\alpha^{a_3+b_3}+\alpha^{a_4+b_4}=0$. Proceeding similarly as in the case when  $(A,B,C,D)=(0,0,0,0)$, we will arrive at a contradiction. 
    \end{itemize}
\end{enumerate}
Thus, in all cases, we reach the conclusion that $f(x,y) \notin C_U$.
\end{proof}

\begin{claim}
        Polynomials of Type II are not in $C_U$.
        \label{claim:Type-II}
\end{claim}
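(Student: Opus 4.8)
The plan is to argue exactly as in the proof of Claim~\ref{claim:Type-I}: assume $f(x,y)$ is a Type~II polynomial lying in $C_U$, and evaluate it at the points of $U$ until a contradiction surfaces. It is convenient to rewrite each ``plus-shaped'' factor as $(x+y)(1+xy)+E\,xy$, with $E\in\{A,B,C\}$, and to record the elementary evaluations of this factor: it equals $E$ at $(1,1)$, equals $E\,s^2$ at $(s,s)$, equals $E$ at $(s,s^{-1})$, and equals $1+\gamma^2+E\gamma$ at $(1,\gamma)$ --- which is $1+\gamma+\gamma^2=0$ when $E=1$ and is $1+\gamma^2\neq 0$ when $E=0$. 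The opening step is $f(1,1)=A+B+C+1\pmod 2$, so that $f\in C_U$ forces $A+B+C$ to be odd; hence only two cases remain: $A=B=C=1$, or exactly one of $A,B,C$ equals $1$.

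In the case $A=B=C=1$, all three plus-shaped factors vanish at $(1,\gamma)$, so $f(1,\gamma)=\gamma^{b_4}\neq 0$; since $(1,\gamma)\in U$, this already shows $f\notin C_U$. In the remaining case I would take, after relabelling the three plus-shaped terms (whose indices are distinct), $A=1$ and $B=C=0$. Evaluating at $(\alpha,\alpha)$ and $(\alpha,\alpha^{-1})$, where the $B$- and $C$-factors vanish, yields $\alpha^{a_1+b_1+2}=\alpha^{a_4+b_4}$ and $\alpha^{a_1-b_1}=\alpha^{a_4-b_4}$; since $\gcd(2,2^{4r}-1)=1$, these two relations force $(a_4,b_4)\equiv(a_1+1,b_1+1)\pmod{2^{4r}-1}$. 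Thus $x^{a_4}y^{b_4}=x^{a_1}y^{b_1}\cdot xy$ in $\mathbb{F}_2[x,y]/\langle x^n-1,y^n-1\rangle$, and this monomial cancels the lone $xy$-term inside $x^{a_1}y^{b_1}\bigl((x+y)(1+xy)+xy\bigr)$, collapsing $f$ to
\[
f(x,y)=\bigl(x^{a_1}y^{b_1}+x^{a_2}y^{b_2}+x^{a_3}y^{b_3}\bigr)(x+y)(1+xy).
\]
Since $(x+y)(1+xy)$ is nonzero at both $(1,\alpha)$ and $(1,\alpha^3)$, evaluating there forces $\alpha^{b_1}+\alpha^{b_2}+\alpha^{b_3}=0$ and $\alpha^{3b_1}+\alpha^{3b_2}+\alpha^{3b_3}=0$. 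Writing $u=\alpha^{b_1}$, $v=\alpha^{b_2}$, $w=\alpha^{b_3}$, the first relation gives $w=u+v$; substituting into the second and using $(u+v)^3=u^3+u^2v+uv^2+v^3$ in characteristic $2$ gives $uvw=0$, which is impossible since $u,v,w\neq 0$. Hence $f\notin C_U$ in this case as well.

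The one delicate point is the bookkeeping in the second case: I must track the cyclic shift of the exponents of the stray monomial $x^{a_4}y^{b_4}$ modulo $n=2^{4r}-1$ carefully enough to be certain that it really coincides with the $xy$-term of the $A=1$ plus-shaped factor, so that the cancellation collapsing $f$ into a multiple of $(x+y)(1+xy)$ is legitimate; once that reduction is in hand, the rest is a two-line evaluation argument, and the whole $A=B=C=1$ case is immediate. The analogous claims for Types~III and~IV proceed along the same lines, combining the bound $d_{\min}(C_U)\ge 6$ with evaluations at $(1,1)$, $(1,\gamma)$, $(\alpha,\alpha)$, $(\alpha,\alpha^{-1})$, $(1,\alpha)$, $(1,\alpha^3)$, $(\alpha,1)$, and $(\alpha^3,1)$.
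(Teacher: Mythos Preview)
Your proof is correct and follows the same overall strategy as the paper's: evaluate at $(1,1)$ to force $A+B+C$ odd, dispose of $A=B=C=1$ via $(1,\gamma)$, and in the exactly-one-nonzero case use $(\alpha,\alpha)$ and $(\alpha,\alpha^{-1})$ to pin down $(a_4,b_4)\equiv(a_1+1,b_1+1)$ before finishing with evaluations at $(1,\alpha)$ and $(1,\alpha^3)$. Your factorization $(x+y)(1+xy)+Exy$ and the absorption of the stray monomial in the quotient ring tidy the bookkeeping and let you skip the paper's explicit case split on whether $a_1+1$ wraps around modulo $n$; your $uvw=0$ contradiction is also a shade quicker than the paper's route of deriving both $a_1=a_2$ and $b_1=b_2$ separately.
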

\begin{proof}
Let $f(x,y)$ be a polynomial of Type~II. If $(A,B,C)=(0,0,0)$ or exactly two of $A$, $B$, $C$ are non-zero, then $f(1,1)=1 \neq 0$, so that $f(x,y) \notin C_U$. 

Assume that exactly one of $A$, $B$, $C$ is non-zero; say, $A=1$ and $B=C=0$. If $f(x,y) \in C_U$, then $f(\alpha,\alpha)=0$ and $f(\alpha,\alpha^{-1})=0$ gives $a_1+b_1+2 \equiv a_4+b_4$ (mod $2^{4r}-1$) and $a_1-b_1 \equiv a_4-b_4$ (mod $2^{4r}-1$). This implies $a_4 \equiv a_1+1$ (mod $2^{4r}-1$) and $b_4 \equiv b_1+1$ (mod $2^{4r}-1$). We split $a_4 \equiv a_1+1$ (mod $2^{4r}-1$) into two cases:
        \begin{itemize}
            \item[-] $a_4=a_1+1$. Then, $f(\alpha,1)=0$ gives 
            \begin{align*}
            \hspace{-20pt} \alpha^{a_1}(1+\alpha+\alpha^2)+\alpha^{a_2}(1+&\alpha^2)+\alpha^{a_3}(1+\alpha^2)+\alpha^{a_1+1}=0 \\
            \implies & (\alpha^{a_1}+\alpha^{a_2}+\alpha^{a_3})(1+\alpha^2) = 0 \\
            \implies & \alpha^{a_1}+\alpha^{a_2}+\alpha^{a_3} = 0.
            \end{align*}
            Similarly, from $f(\alpha^3,1)=0$, we get $\alpha^{3 a_1}+\alpha^{3 a_2}+\alpha^{3 a_3}=0$. This gives 
            \begin{align*}
                & \alpha^{3 a_1}+\alpha^{3 a_2}=\alpha^{3 a_3}=(\alpha^{a_1}+\alpha^{a_2})^3\\
                \implies & \alpha^{a_1+a_2}(\alpha^{a_1}+\alpha^{a_2})=0.
            \end{align*}
            This implies $a_1=a_2$. 
            Now, we bring in $b_4 \equiv b_1+1$ (mod $2^{4r}-1$). In either of the cases $b_4=b_1+1$ and $(b_1,b_4) = (2^{4r}-2,0)$, from $f(1, \alpha)=0$ and $f(1, \alpha^3)=0$, we can obtain $b_1=b_2$. Together with $a_1 = a_2$, we obtain $(a_1,b_1) = (a_2,b_2)$, which is a contradiction. 
            \item[-] $a_1=2^{4r}-2$ and $a_4=0$. An argument analogous to that in the case above will again lead to the contradiction $(a_1,b_1) = (a_2,b_2)$.
        \end{itemize}

 Finally, if $A=B=C=1$ then $f(1,\gamma)=\gamma^{b_4}\ne 0$, so that $f(x,y) \notin C_U$.
 \end{proof}

\begin{claim}
        Polynomials of Type III are not in $C_U$.
        \label{claim:Type-III}
\end{claim}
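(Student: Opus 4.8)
\emph{The plan} is to follow the case structure of Claims~\ref{claim:Type-I} and~\ref{claim:Type-II}. Write $g_C(x,y) := x + y + Cxy + x^2y + xy^2$ for $C \in \{0,1\}$, so that a Type~III polynomial reads
\[
f(x,y) = x^{a_1}y^{b_1}g_A + x^{a_2}y^{b_2}g_B + p_3\,x^{a_3}y^{b_3} + p_4\,x^{a_4}y^{b_4},
\]
with $(p_3,p_4)\ne(0,0)$, $(a_1,b_1)\ne(a_2,b_2)$ and $(a_3,b_3)\ne(a_4,b_4)$. The first step is to tabulate the values of $g_C$ at the eleven points of $U$ in \eqref{def:U}. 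The engine of the proof is the pair of identities $g_0 = (x+y)(1+xy)$ and $g_1 = g_0 + xy$: in particular $g_0$ vanishes at $(\alpha,\alpha)$, $(\alpha,\alpha^{-1})$ and $(\alpha^3,\alpha^{-3})$, while $g_1(x,x)=x^2$, $g_1(x,x^{-1})=1$, $g_1(1,y)=1+y+y^2$, and $g_1$ vanishes at $(1,\gamma)$ and $(\beta,\beta^2)$. Throughout, primitivity of $\alpha$ in $\mathbb{F}_{2^{4r}}$ is used only through $1+\alpha\ne0$ and $1+\alpha+\alpha^2\ne0$ (the latter since its roots have order $3$).

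\emph{Parity, and the easy families.} Evaluating at $(1,1)$ and using $g_0(1,1)=0$, $g_1(1,1)=1$ forces $A+B+p_3+p_4\equiv0\pmod2$; since $(p_3,p_4)\ne(0,0)$, only two families survive: (I) $A=B$ with $p_3=p_4=1$, and (II) $A\ne B$ with exactly one of $p_3,p_4$ equal to $1$. For family~(II), fold the $xy$-term of $g_1$ into a monomial and evaluate at $(\alpha,\alpha)$ and $(\alpha,\alpha^{-1})$, where the $g_0$-part dies: the exponent congruences $a_1+b_1+2\equiv a_3+b_3$ and $a_1-b_1\equiv a_3-b_3\pmod n$ force the two surviving monomials to coincide, after which $(x^{a_1}y^{b_1}+x^{a_2}y^{b_2})g_0$ is killed at $(1,\alpha)$ and $(\alpha,1)$ (where $g_0=(1+\alpha)^2\ne0$), forcing $b_1=b_2$ and $a_1=a_2$, a contradiction. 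Family~(I) with $A=B=0$ is identical: at $(\alpha,\alpha)$ and $(\alpha,\alpha^{-1})$ one is left with $\alpha^{a_3+b_3}+\alpha^{a_4+b_4}=0$ and $\alpha^{a_3-b_3}+\alpha^{a_4-b_4}=0$, hence $2a_3\equiv2a_4\pmod n$ and so $(a_3,b_3)=(a_4,b_4)$, a contradiction.

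\emph{The main case: family~(I) with $A=B=1$.} Here $f=(x^{a_1}y^{b_1}+x^{a_2}y^{b_2})g_1+x^{a_3}y^{b_3}+x^{a_4}y^{b_4}$. Evaluating at the anti-diagonal points $(\alpha,\alpha^{-1})$ and $(\alpha^3,\alpha^{-3})$, where $g_1\equiv1$, gives $\sum_{i=1}^4\alpha^{v_i}=0$ and $\sum_{i=1}^4\alpha^{3v_i}=0$ with $v_i:=a_i-b_i\bmod n$; the Newton-identity/Frobenius-squaring argument used in the proof that $d_{\min}(C_U)\ge6$ then shows the multiset $\{v_1,\ldots,v_4\}$ splits into two equal pairs. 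Up to the evident symmetries the pairing is either \emph{aligned} ($v_1=v_2$, $v_3=v_4$) or \emph{crossed} ($v_1=v_3$, $v_2=v_4$). In the aligned case, comparing $(1,\alpha)$ with $(\alpha,1)$ forces all four $v_i$ equal; writing $a_i=b_i+v$ and comparing $(1,\alpha)$ (where $g_1=1+\alpha+\alpha^2$) with $(\alpha,\alpha)$ (where $g_1=\alpha^2$) yields $\alpha^{b_3}+\alpha^{b_4}=(1+\alpha+\alpha^2)(\alpha^{b_1}+\alpha^{b_2})=\alpha(\alpha^{b_1}+\alpha^{b_2})$ with $\alpha^{b_1}\ne\alpha^{b_2}$, hence $1+\alpha+\alpha^2=\alpha$, i.e.\ $\alpha=1$, impossible. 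In the crossed case, the sub-case $b_1=b_2$ is handled by factoring $f=y^{b_1}(x^{a_1}+x^{a_2})\bigl(g_1+(xy)^\delta\bigr)$ with $\delta=b_3-b_1$: vanishing at $(\alpha,\alpha)$ forces $\delta\equiv1\pmod n$ and then vanishing at $(\alpha,1)$ forces $\alpha=1$. When $b_1\ne b_2$, write $f=x^sP_1+x^tP_2$ with $s=v_1$, $t=v_2$ ($s\ne t$), $P_1=(xy)^{b_1}g_1+(xy)^{b_3}$, $P_2=(xy)^{b_2}g_1+(xy)^{b_4}$; the symmetry $P_j(1,\alpha^e)=P_j(\alpha^e,1)$ together with $s\ne t$ gives $P_1(1,\alpha)=P_2(1,\alpha)=0$, i.e.\ $\alpha^{b_3-b_1}=\alpha^{b_4-b_2}=1+\alpha+\alpha^2$; then $(1,\alpha^3)$ and $(\alpha^3,1)$, using $(1+\alpha+\alpha^2)^3\ne1+\alpha^3+\alpha^6$, give $\alpha^{3b_1}=\alpha^{3b_2}$; and finally $(\alpha,\alpha^2)$, using $g_1(\alpha,\alpha^2)+(1+\alpha+\alpha^2)^3=(1+\alpha)^6\ne0$, forces $s=t$, a contradiction.

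\emph{Expected main obstacle.} Virtually all the work lies in this last case (family~(I), $A=B=1$), and within it the crossed sub-case with $b_1\ne b_2$: that step amounts to running a small system of equations over $\mathbb{F}_{2^{4r}}$, and it closes only because of the explicit identities $(1+\alpha+\alpha^2)^3+1+\alpha^3+\alpha^6=\alpha+\alpha^5$ and $(1+\alpha+\alpha^2)^3+g_1(\alpha,\alpha^2)=(1+\alpha)^6$, whose nonvanishing uses primitivity of $\alpha$. By contrast, the parts dealing with family~(II) and with family~(I) at $A=B=0$ are routine parallels of Claims~\ref{claim:Type-I} and~\ref{claim:Type-II}.
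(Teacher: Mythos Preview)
Your proposal is correct and follows the paper's own strategy: split on the parity constraint $A+B+p_3+p_4\equiv 0\pmod{2}$ coming from $(1,1)\in U$; dispatch family~(II) and family~(I) with $A=B=0$ via the evaluations at $(\alpha,\alpha)$ and $(\alpha,\alpha^{-1})$, where $g_0$ vanishes; and concentrate on $A=B=1$, $p_3=p_4=1$, using $(\alpha,\alpha^{-1})$ and $(\alpha^3,\alpha^{-3})$ to force the differences $v_i=a_i-b_i$ to pair up. Two small points should be made explicit. In the aligned sub-case, comparing $(1,\alpha)$ with $(\alpha,1)$ actually gives $(\alpha^{v}+\alpha^{w})(\alpha^{b_1}+\alpha^{b_2})(1+\alpha+\alpha^2)=0$, so ``all $v_i$ equal'' needs the side remark that $b_1=b_2$ (with $v_1=v_2$) already contradicts $(a_1,b_1)\ne(a_2,b_2)$. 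In the crossed sub-case with $b_1=b_2$, your factorization $f=y^{b_1}(x^{a_1}+x^{a_2})\bigl(g_1+(xy)^{\delta}\bigr)$ tacitly uses $b_3=b_4$, which should first be read off from $f(1,\alpha)=\alpha^{b_3}+\alpha^{b_4}=0$. In fact your $b_1\ne b_2$ argument never uses that inequality, so the $b_1=b_2$ sub-case could be absorbed.

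The one place you go beyond the paper is your explicit treatment of the \emph{crossed} pairing $v_1=v_3$, $v_2=v_4$. The paper writes ``say $c_1=c_2$'' and then runs only the aligned analysis; since indices $1,2$ carry $g_1$ while indices $3,4$ are bare monomials, this ``say'' is not evidently without loss of generality, and your crossed argument supplies what the paper leaves implicit. Your device there---writing $f=x^{s}P_1+x^{t}P_2$ with each $P_j$ a polynomial in $xy$, so that $P_j(1,\alpha^e)=P_j(\alpha^e,1)$, and then closing via the identities $(1+\alpha+\alpha^2)^3+(1+\alpha^3+\alpha^6)=\alpha(1+\alpha)^4$ and $(1+\alpha+\alpha^2)^3+g_1(\alpha,\alpha^2)=(1+\alpha)^6$---is a clean structural observation not present in the paper. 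Within the aligned case itself your route and the paper's coincide: the paper derives the triple product $(\alpha^{c_2}+\alpha^{c_4})(\alpha^{b_1}+\alpha^{b_2})(\alpha^{b_3}+\alpha^{b_4})=0$ and, in the branch $c_2=c_4$, reaches exactly the endpoint $(1+\alpha^2)(\alpha^{b_1}+\alpha^{b_2})=0$ that you obtain directly.
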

\begin{proof}
Let $f(x,y)$ be a polynomial of Type III.        
\begin{enumerate}
    \item Suppose that exactly one of $p_3$ and $p_4$ is non-zero; say, $p_3=1$ and $p_4 = 0$. Then, we have $f(x,y)=x^{a_1} y^{b_1}(x+y+Axy+x y^2+x^2 y)+x^{a_2} y^{b_2}(x+y+Bxy+x y^2+x^2 y)+x^{a_3} y^{b_3}$. If $(A,B)=(0,0)$ or $(A,B)=(1,1)$ then $f(1,1)=1 \neq 0$, which implies that $f(x,y) \notin C_U$. 

    \medskip
    
    So, let us assume that exactly one of $A$ and $B$ is non-zero; say, $A=1$ and $B=0$. If $f(x,y) \in C_U$, then $f(\alpha,\alpha)=0$ and $f(\alpha,\alpha^{-1})=0$, from which we obtain $a_1+b_1+2 \equiv a_3+b_3$ (mod $2^{4r}-1$) and $a_1-b_1 \equiv a_3-b_3$ (mod $2^{4r}-1$). This gives $a_3 \equiv a_1+1$ (mod $2^{4r}-1$) and $b_3 \equiv b_1+1$ (mod $2^{4r}-1$). Thus, $\bigl((a_3=a_1+1) \text{~or~} (a_1,a_3)=(2^{4r}-2,0)\bigr)$ and $\bigl((b_3=b_1+1) \text{~or~} (b_1,b_3)=(2^{4r}-2,0)\bigr)$
    \begin{itemize}
        \item[-] Consider $a_3=a_1+1$. In this case, $f(\alpha,1)=0$ gives
        \begin{align*}
            &\alpha^{a_1}(1+\alpha+\alpha^2)+\alpha^{a_2}(1+\alpha^2)+\alpha^{a_1+1}=0\\
            \implies & (\alpha^{a_1}+\alpha^{a_2})(1+\alpha^2)=0\\
            \implies & a_1=a_2.
        \end{align*}
        Now, either $(b_3=b_1+1) \text{~or~} (b_1,b_3)=(2^{4r}-2,0)$. In the first case, starting with $f(1,\alpha) = 0$, an analogous argument as above yields $b_1=b_2$. In the case when $(b_1,b_3)=(2^{4r}-2,0)$, $f(1,\alpha)=0$ gives $(\alpha^{-1}+\alpha^{b_2})(1+\alpha^2)=0$. This implies $b_2=2^{4r}-2$, so that $b_1 = b_2$ again. Thus, in either case, we end up with $(a_1,b_1) = (a_2,b_2)$, which is a contradiction.
        \item[-] The case $(a_1,a_3)=(2^{4r}-2,0)$ and $((b_3=b_1+1) \text{~or~} (b_1,b_3)=(2^{4r}-2,0))$ is handled similarly.
    \end{itemize}
    \item Suppose that $p_3=p_4=1$, so that $f(x,y)=x^{a_1} y^{b_1}(x+y+Axy+x y^2+x^2 y)+x^{a_2} y^{b_2}(x+y+Bxy+x y^2+x^2 y)+x^{a_3} y^{b_3}+x^{a_4} y^{b_4}$. Consider $(A,B)=(0,0)$. If $f(x,y) \in C_U$, then $f(\alpha,\alpha)=0$ and $f(\alpha,\alpha^{-1})=0$, from which we obtain $\alpha^{a_3+b_3}+\alpha^{a_4+b_4}=0$ and $\alpha^{a_3-b_3}+\alpha^{a_4-b_4}=0$. This implies $(a_3,a_4)=(b_3,b_4)$, a contradiction. 
    
    If exactly one of $A$, $B$ is non-zero, then $f(1,1)=1 \neq 0$, so that $f(x,y) \notin C_U$.
    
    Finally, consider $(A,B)=(1,1)$. If $f(x,y) \in C_U$, then $f(\alpha,\alpha^{-1})=0$ and $f(\alpha^{3},\alpha^{-3})=0$, from which we get
    \begin{align}
    \alpha^{a_1-b_1}+\alpha^{a_2-b_2}+\alpha^{a_3-b_3}+\alpha^{a_4-b_4}&=0 \label{eq:S2} \\
    \alpha^{3 a_1-3 b_1}+\alpha^{3 a_2-3 b_2}+\alpha^{3 a_3-3 b_3}+\alpha^{3 a_4-3 b_4}&=0 \label{eq:S3}.
    \end{align}
    
    Let $a_i-b_i\equiv c_i$ (mod $2^{4r}-1)$ for some $0 \leq c_i \leq 2^{4r}-2$, $i=1,2,3,4$. If the $c_i$'s are all distinct, then we are done, as there is no solution to Equations \eqref{eq:S2} and \eqref{eq:S3}. 
    
    So, consider the situation when the $c_i$'s are not all distinct, say $c_1=c_2$, i.e., $a_1-b_1 \equiv a_2-b_2$ (mod $2^{4r}-1)$. Then, from \eqref{eq:S2}, we also get $a_3-b_3 \equiv a_4-b_4$ (mod $2^{4r}-1)$. Now, $f(\alpha,1)=0$ and $f(1,\alpha)=0$ gives
    \begin{align*}
    (\alpha^{a_1}+\alpha^{a_2})(1+\alpha+\alpha^2)+\alpha^{a_3}+\alpha^{a_4}&=0 \\
    (\alpha^{b_1}+\alpha^{b_2})(1+\alpha+\alpha^2)+\alpha^{b_3}+\alpha^{b_4} &=0
    \end{align*}
    From these two equations, we obtain
    \begin{equation*}
        (\alpha^{a_3}+\alpha^{a_4})(\alpha^{b_1}+\alpha^{b_2})+(\alpha^{b_3}+\alpha^{b_4})(\alpha^{a_1}+\alpha^{a_2})=0
    \end{equation*}
    \begin{align*}
        \implies &(\alpha^{a_4-b_4+b_3}+\alpha^{a_4})(\alpha^{b_1}+\alpha^{b_2})\\
        &+(\alpha^{b_3}+\alpha^{b_4})(\alpha^{a_2-b_2+b_1}+\alpha^{a_2})=0\\
        \implies &\alpha^{a_4-b_4} (\alpha^{b_3}+\alpha^{b_4})(\alpha^{b_1}+\alpha^{b_2})\\
        &+(\alpha^{b_3}+\alpha^{b_4}) \alpha^{a_2-b_2}(\alpha^{b_1}+\alpha^{b_2})=0\\
        \implies & (\alpha^{a_2-b_2}+\alpha^{a_4-b_4})(\alpha^{b_1}+\alpha^{b_2})(\alpha^{b_3}+\alpha^{b_4})=0.
    \end{align*}
    Thus, either $b_1=b_2$ or $b_3=b_4$ or $a_2-b_2 \equiv a_4-b_4$ (mod $2^{4r}-1$).
    \begin{itemize}
        \item[-] If $b_1=b_2$ then we get $a_1=a_2$ as well, since $a_1-b_1 \equiv a_2-b_2$ (mod $2^{4r}-1)$. Thus, $(a_1,b_1)=(a_2,b_2)$, which is a contradiction.
        \item[-] If $b_3=b_4$ then we get $a_3=a_4$ as well, since $a_3-b_3 \equiv a_4-b_4$ (mod $2^{4r}-1)$. Thus, $(a_3,b_3)=(a_4,b_4)$, which is a contradiction.
        \item[-] Assume that $a_2-b_2 \equiv a_4-b_4$ (mod $2^{4r}-1$). Then, $f(\alpha,\alpha)=0$ gives
    \begin{equation*}
    \alpha^{a_1+b_1+2}+\alpha^{a_2+b_2+2}+\alpha^{a_3+b_3}+\alpha^{a_4+b_4}=0
    \end{equation*}
    \begin{equation*}
    \begin{split}
        \implies \alpha^{a_2-b_2+2 b_1+2}+\alpha^{a_2+b_2+2} &+\alpha^{a_4-b_4+2 b_3}\\
        &+ \alpha^{a_4+b_4}=0
        \end{split}
    \end{equation*}
    \begin{equation*}
        \hspace{-15pt}\implies \alpha^{a_2-b_2+2} (\alpha^{2 b_1}+\alpha^{2 b_2})+\alpha^{a_2-b_2} (\alpha^{2 b_3} +\alpha^{2 b_4})=0.
    \end{equation*}
    This gives $\alpha(\alpha^{b_1}+\alpha^{b_2})=\alpha^{b_3}+\alpha^{b_4}$. Using $f(1, \alpha)=0$ now gives $(\alpha^{b_1}+\alpha^{b_2})(1+\alpha^2)=0$. This implies $b_1=b_2$ and hence, from $a_1-b_1 \equiv a_2-b_2$ (mod $2^{4r}-1)$, we again get $(a_1,b_1)=(a_2,b_2)$, a contradiction.
    \end{itemize}
\end{enumerate}
Thus, in all cases, we reach the conclusion that $f(x,y) \notin C_U$.
\end{proof}

\begin{claim}
    Polynomials of Type IV are not in $C_U$.
    \label{claim:Type-IV}
\end{claim}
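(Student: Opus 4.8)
The plan is to mimic the structure of the proofs of Claims~\ref{claim:Type-I}--\ref{claim:Type-III}: take a Type~IV polynomial $f(x,y)$, evaluate it at a handful of carefully chosen points of $U$, and in each case either obtain a nonzero value (so $f\notin C_U$) or force $(a_i,b_i)=(a_j,b_j)$ for some $i\neq j$ in $\{2,3,4\}$, contradicting the hypothesis that these are distinct. The first step is the parity reduction coming from $(1,1)\in U$. Writing the block factor as $g_A(x,y):=x+y+Axy+xy^2+x^2y$, we have $g_A(1,1)=A$ regardless of any cancellations in $f$, so $f(1,1)=A+p_2+p_3+p_4$. Thus $f\in C_U$ forces $A\equiv p_2+p_3+p_4\pmod 2$, and since $(p_2,p_3,p_4)\neq(0,0,0)$ this leaves exactly the three cases $\bigl(A,\,|\{j:p_j\neq 0\}|\bigr)\in\{(1,1),(1,3),(0,2)\}$.

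The point that makes these three cases short is that $g_A$ vanishes at several points of $U$, so that evaluating $f$ there annihilates the block term $x^{a_1}y^{b_1}g_A(x,y)$ and leaves only a short sum of the monomials $p_j x^{a_j}y^{b_j}$. For $A=0$ one uses $g_0(x,y)=(x+y)(1+xy)$, which vanishes at $(\alpha,\alpha)$ and at $(\alpha,\alpha^{-1})$; for $A=1$ one uses $g_1(1,\gamma)=1+\gamma+\gamma^2=0$ (since $\gamma^3=1$, $\gamma\neq 1$) and $g_1(\beta,\beta^2)=1+\beta+\beta^2+\beta^3+\beta^4=0$ (since $\beta^5=1$, $\beta\neq 1$). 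In case $(1,1)$, where exactly one $p_j$ is nonzero, evaluating at $(1,\gamma)$ gives $f(1,\gamma)=\gamma^{b_j}\neq 0$, so $f\notin C_U$. In case $(1,3)$, evaluating at $(\beta,\beta^2)$ gives $f(\beta,\beta^2)=\beta^{a_2+2b_2}+\beta^{a_3+2b_3}+\beta^{a_4+2b_4}$; since $\mathbb{F}_2[\beta]=\mathbb{F}_{16}$ with $\{1,\beta,\beta^2,\beta^3\}$ an $\mathbb{F}_2$-basis and $\beta^4=1+\beta+\beta^2+\beta^3$, a sum of three (possibly repeated) powers of $\beta$ can never be $0$, so $f(\beta,\beta^2)\neq 0$ and $f\notin C_U$. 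In case $(0,2)$, say $p_2=p_3=1$ and $p_4=0$, the conditions $f(\alpha,\alpha)=0$ and $f(\alpha,\alpha^{-1})=0$ reduce to $a_2+b_2\equiv a_3+b_3$ and $a_2-b_2\equiv a_3-b_3\pmod{n}$; adding these and using that $n=2^{4r}-1$ is odd yields $a_2\equiv a_3$, hence $b_2\equiv b_3$, so $(a_2,b_2)=(a_3,b_3)$, a contradiction.

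I do not expect a single dominant obstacle; the only place needing genuine care is case $(1,3)$, where one must exhibit a point of $U$ at which the block vanishes while the three surviving monomials provably cannot cancel. The choice $(\beta,\beta^2)$ works precisely because no three powers of a primitive fifth root of unity sum to zero in $\mathbb{F}_{16}$, and this fact is robust in $r$ since $\beta$ always lies in the subfield $\mathbb{F}_{16}\subseteq\mathbb{F}_{2^{4r}}$; I would verify it by the $\mathbb{F}_2$-linear-independence argument sketched above (equivalently, that no sum of two of the five vectors representing $1,\beta,\beta^2,\beta^3,\beta^4$ equals a third). With this in hand all three cases close, and, together with Claims~\ref{claim:Type-I}--\ref{claim:Type-III}, this completes the verification of the condition in Theorem~\ref{theorem_1} and hence the proof of Theorem~\ref{thm:2D}.
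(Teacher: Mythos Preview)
Your proof is correct and follows essentially the same approach as the paper's: the same evaluation points $(1,\gamma)$, $(\alpha,\alpha)$ with $(\alpha,\alpha^{-1})$, and $(\beta,\beta^2)$ are used for the three surviving cases, and the same contradictions are derived. Your presentation is slightly more streamlined in that you use the parity condition $f(1,1)=A+p_2+p_3+p_4$ upfront to eliminate three of the six $(A,\text{count})$ combinations at once, whereas the paper treats all six separately (handling, e.g., the case $A=0$ with one $p_j$ nonzero via $f(\alpha,\alpha)=\alpha^{a_j+b_j}\neq 0$ rather than via $f(1,1)=1$); but this is a cosmetic difference, not a genuinely different route.
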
   
\begin{proof}
Let $f(x,y)$ be a polynomial of Type IV.        
\begin{enumerate}
    \item Suppose that exactly one of $p_2$, $p_3$, $p_4$ is non-zero; say, $p_2=1$ and $p_3=p_4=0$. Then, we have $f(x,y)=x^{a_1} y^{b_1}(x+y+Axy+x y^2+x^2 y)+x^{a_2} y^{b_2}$. In the case when $A=0$, if $f(x,y) \in C_U$, then $f(\alpha,\alpha)=0$, which implies $\alpha^{a_2+b_2}=0$, which is not possible. When $A=1$, then $f(1,\gamma)=\gamma^{b_2} \ne 0$, so that $f(x,y) \notin C_U$.
    \item Suppose that exactly two of $p_2$, $p_3$, $p_4$ are non-zero; say, $p_2=p_3=1$ and $p_4=0$. Then, $f(x,y)=x^{a_1} y^{b_1}(x+y+Axy+x y^2+x^2 y)+x^{a_2} y^{b_2}+x^{a_3} y^{b_3}$. When $A=0$, if $f(x,y) \in C_U$, then $f(\alpha,\alpha)=0$ and $f(\alpha,\alpha^{-1})=0$ gives $a_2+b_2 \equiv a_3+b_3$ (mod $(2^{4r}-1)$) and $a_2-b_2 \equiv a_3-b_3$ (mod $(2^{4r}-1)$). From this, we obtain $(a_2,b_2)=(a_3,b_3)$, a contradiction. When $A=1$, then $f(1,1)=1 \neq 0$, so that $f(x,y) \notin C_U$.
    \item Suppose that $p_2=p_3=p_4=1$. Then, $f(x,y)=x^{a_1} y^{b_1}(x+y+Axy+x y^2+x^2 y)+x^{a_2} y^{b_2}+ x^{a_3} y^{b_3}+ x^{a_4}y^{b_4}$. If $A=0$, then $f(1,1)=1 \neq 0$, which implies that $f(x,y) \notin C_U$. 
            
    So, let us consider $A=1$. If $f(x,y) \in C_U$, then $f(\beta,\beta^2)=0$ implies 
    $$\beta^{a_2+2 b_2}+\beta^{a_3+2 b_3}+\beta^{a_4+2 b_4}=0.$$
    Let $a_i+2 b_i \equiv c_i$ (mod $5$) for $i=2,3,4$. If all $c_i$'s are not distinct, say $c_2=c_3$, then we get $\beta^{c_4}=0$, which is impossible. So the $c_i$'s are all distinct, so that we have $\beta^{c_2}+\beta^{c_3}+\beta^{c_4}=0$. Let us assume WLOG that $c_2<c_3<c_4$ then we get $1+\beta^{c_3-c_2}+\beta^{c_4-c_2}=0$ where $1 \leq c_3-c_2, c_4-c_2 \leq 4$ are distinct. But $\beta$, being a fifth root of unity, does not satisfy such an equation, again leading to a contradiction.
\end{enumerate}
Thus, in all cases, we reach the conclusion that $f(x,y) \notin C_U$. 
\end{proof}

\section{ Proof of Proposition \ref{UZU_prop}} \label{UZU proof}

Suppose that $U$ is a Pauli operator and $\mathcal{S}$ is the stabilizer group of a graph state $\ket{\mathrm{G}}$. We wish to prove that for any non-empty subset of indices $A$,
\begin{equation*}
    U Z_A \in \mathcal{S} \implies U \notin \mathcal{S}.
\end{equation*}

We first note that $Z_A \notin \mathcal{S}$, since it anti-commutes with the stabilizer generator $X_i Z_{N(i)}$ for any $i \in A$. Hence, $\braket{\mathrm{G} | Z_A | \mathrm{G}} = 0$.

Now, suppose that $U Z_A \in \mathcal{S}$. Then,
\begin{equation*}
    U Z_A = S_{i_1} \cdot S_{i_2} \cdot S_{i_3} \cdots\  S_{i_r},
\end{equation*}
for some choice of $r$ stabilizer generators of $\mathcal{S}$.
Right multiplication with $Z_A$ on both sides yields
\begin{equation*}
     U = S_{i_1} \cdot S_{i_2} \cdot S_{i_3} \cdots\ S_{i_r} \cdot Z_A
\end{equation*}
We then have
\begin{align*}
     \braket{\mathrm{G} | U | \mathrm{G}} &= \braket{\mathrm{G} | S_{i_1} \cdot S_{i_2} \cdot S_{i_3} \cdots\ S_{i_r} \cdot  Z_A | \mathrm{G}} \\ 
     &= \braket{\mathrm{G} | Z_A | \mathrm{G}} \ = \ 0.
\end{align*}
Therefore $U\notin \mathcal{S}$.

\newpage
\end{document}